\documentclass[11pt]{article}
\usepackage[margin=1in]{geometry}
\usepackage{amsmath, amssymb, amsthm}
\usepackage{charter}
\usepackage{euler}
\usepackage[round,authoryear]{natbib}
\usepackage[
  colorlinks=true,
  linkcolor=blue!60!black,
  citecolor=green!45!black,
  urlcolor=magenta!60!black
]{hyperref}
\usepackage[capitalize]{cleveref}
\let\cite\citep
\usepackage{xcolor}
\usepackage{enumitem}
\usepackage{tikz}
\usetikzlibrary{calc, arrows.meta, decorations.pathmorphing}
\usepackage{algorithm}
\usepackage[noend]{algpseudocode}

\newcommand{\R}{\mathbb{R}}

\newcommand{\dGH}{d_{\mathrm{GH}}}
\newcommand{\dGHbij}{d_{\mathrm{GH}}^{\,\mathrm{bij}}}
\newcommand{\dHiso}{d_{H,\mathrm{iso}}}
\newcommand{\dBiso}{d_{B,\mathrm{iso}}}
\newcommand{\dH}{d_H}
\newcommand{\Dist}{\mathit{Dist}}

\newcommand{\Iso}{\mathrm{Iso}}
\newcommand{\diam}{\mathrm{diam}}
\newcommand{\dist}{\mathrm{dist}}

\usepackage{thmtools}

\declaretheorem[name=Theorem,numberwithin=section]{theorem}
\declaretheorem[name=Lemma,sibling=theorem]{lemma}
\declaretheorem[name=Proposition,sibling=theorem]{proposition}
\declaretheorem[name=Corollary,sibling=theorem]{corollary}

\declaretheorem[name=Definition,sibling=theorem,style=definition]{definition}
\declaretheorem[name=Remark,sibling=theorem,style=definition]{remark}

\crefname{lemma}{Lemma}{Lemmas}
\crefname{proposition}{Proposition}{Propositions}
\crefname{corollary}{Corollary}{Corollaries}
\crefname{conjecture}{Conjecture}{Conjectures}
\crefname{hypothesis}{Hypothesis}{Hypotheses}
\crefname{problem}{Problem}{Problems}
\crefname{definition}{Definition}{Definitions}
\crefname{remark}{Remark}{Remarks}

\title{A constant-factor approximation of the\\
Gromov--Hausdorff distance in the plane}

\author{Sushovan Majhi%
\thanks{Data Science, George Washington University, Washington, DC, USA.
Email: \texttt{s.majhi@email.gwu.edu}.}}
\date{}

\begin{document}
\maketitle

\begin{abstract}
We give the first polynomial-time constant-factor approximation of the Gromov--Hausdorff
distance $\dGH$ between finite point sets in the Euclidean plane; in fixed Euclidean dimension such an
approximation was previously known only on the line \citep{MVW2024}. Its engine is the bijective
(bottleneck) Gromov--Hausdorff distance $\dGHbij$: for two equal-size sets the least additive
distortion $\max_{i,j}|d_X(i,j) - d_Y(\sigma i, \sigma j)|$ of a bijection $\sigma$ equals
$2\,\dGHbij$, which we likewise approximate within an absolute constant. Approximating additive
distortion goes back to \citet{HallPapadimitriou2005}, who gave a $2$-approximation on the line and
observed approximation within $3$ to be NP-hard in dimension three; the planar case they left open is the
one we settle. A
fat-or-collinear dichotomy drives both bounds: a fat set is aligned by a single rigid motion, while a
near-collinear set is split into clusters matched along their dendrogram in one flat, scale-free pass,
with relative orientations and per-node reflection signs---at every scale of the dendrogram---recovered
by global cuts. Relaxing bijections to
correspondences yields $\dGH$ itself, which reduces to a lone within-cluster-multiplicity kernel---the
pairs an optimal correspondence collapses---that the same theory closes. Matching lower bounds---a dimension drop, a multiplicity gap, and a reflection
barrier acting at every scale---show each ingredient is necessary.
\end{abstract}

\noindent\textbf{Keywords:} additive distortion of point-set bijections, bijective
(bottleneck) Gromov--Hausdorff distance, Gromov--Hausdorff distance, polynomial-time
approximation algorithm, approximation barriers, computational geometry, point sets
in the Euclidean plane.

\smallskip
\noindent\textbf{2020 Mathematics Subject Classification:}
Primary 68W25 (Approximation algorithms);
Secondary 68U05 (Computer graphics; computational geometry).

\smallskip
\noindent\textbf{ACM CCS:} Theory of computation $\rightarrow$
Approximation algorithms analysis; Mathematics of computing
$\rightarrow$ Combinatorial algorithms.

\section{Introduction}
\label{sec:intro}

How nearly isometric are two finite point sets $X, Y \subset \R^d$? The \emph{Gromov--Hausdorff
distance} $\dGH$ \cite{Gromov2007,BBI2001} is the canonical answer for arbitrary compact metric
spaces, vanishing if and only if
they are isometric; it underlies shape comparison and topological data
analysis \cite{MemoliSapiro2005,Memoli2012}. It has an equivalent form through
\emph{correspondences}---relations $R \subseteq X \times Y$ projecting surjectively onto both factors:
\begin{equation}
\label{eq:dGH}
\dGH(X, Y) = \tfrac{1}{2} \inf_R \, \Dist(R), \qquad\text{where}\quad
\Dist(R) := \!\!\sup_{(x,y),(x',y') \in R}\!\!
\bigl| d_X(x, x') - d_Y(y, y') \bigr|.
\end{equation}
Throughout the paper $X$ and $Y$ are finite sets of Euclidean points, with $d_X, d_Y$ the standard
Euclidean distance restricted to them. When $|X| = |Y|$, restricting correspondences to bijections $\sigma : X \to Y$ gives the
\emph{bijective} (or \emph{bottleneck}) Gromov--Hausdorff distance~\citep{Schmiedl2017}
\begin{equation}
\label{eq:dGHbij}
\dGHbij(X, Y) := \tfrac12 \min_{\sigma : X \to Y \text{ bijection}} \Dist(\sigma),
\qquad\text{where}\quad
\Dist(\sigma) := \max_{i, j} \bigl| d_X(i, j) - d_Y(\sigma i, \sigma j) \bigr| ;
\end{equation}
thus $2\,\dGHbij$ is the minimum \emph{additive distortion}, the largest change a bijection makes to
a pairwise distance, and every bijection being a correspondence gives $\dGH \le \dGHbij$. Naive
evaluation of either is exponential in $|X| + |Y|$; we ask how well they can be approximated when the
inputs are Euclidean of fixed dimension.

\citet{HallPapadimitriou2005} initiated the approximability study of additive distortion, giving a
polynomial-time
$2$-approximation on the line and observing approximation within $3$ to be NP-hard in dimension three,
with the plane left open. The older \emph{multiplicative} distortion follows the same dimensional pattern---%
polynomial on the line, NP-hard within $3$ in dimension three, the plane a noted open problem
\citep{KRS2009,PapadimitriouSafra2005}. On general metric spaces both distances are hard: $\dGHbij$ carries strong inapproximability
\citep{Schmiedl2017}, and approximating $\dGH$ itself within a factor of $3$ is NP-hard already for
geodesic metrics on trees, where only super-constant factors are achieved \citep{AFNSW2018}. The
contribution here is to exploit Euclidean structure in the plane to escape that worst-case wall.

Two extrinsic relatives of $\dGH$ recur throughout. For nonempty compact $A, B \subset \R^d$, the
\emph{Hausdorff distance} is
\begin{equation}
\label{eq:dH}
\dH(A, B) := \max\Bigl\{\, \sup_{a \in A} \inf_{b \in B} \|a - b\|,
\ \ \sup_{b \in B} \inf_{a \in A} \|a - b\| \,\Bigr\},
\end{equation}
and the \emph{isometric Hausdorff distance} minimizes it over the group $\Iso(\R^d)$ of Euclidean
isometries (rotations, reflections, and translations),
\begin{equation}
\label{eq:dHiso}
\dHiso(X, Y) := \inf_{T \in \Iso(\R^d)} \dH\bigl(T(X), Y\bigr).
\end{equation}
Each isometry $T$ induces a nearest-neighbor correspondence of distortion at most $2\,\dH(T(X), Y)$,
so $\dGH \le \dHiso$ always, while the reverse can fail by an unbounded factor (\Cref{thm:counter-r2});
$\dHiso$ itself is computable in polynomial time in the plane \cite{ChewGoodrich1997,GMO1999}.

In fixed Euclidean dimension the complexity of $\dGH$ had been settled only at $d = 1$, where
\citet{MVW2024} proved the stability bound $\dHiso(X, Y) \le \tfrac54\,\dGH(X, Y)$ which, with Rote's
$O(n\log n)$ algorithm for $\dHiso$ on the line \cite{Rote1991}, gives a polynomial-time
$\tfrac54$-approximation of $\dGH$ in $\R^1$. That work left two questions open: \textbf{(Q1)} does the
stability bound $\dHiso \le C\,\dGH$ extend to any $\R^d$ with $d \ge 2$? \textbf{(Q2)} is there a
polynomial-time constant-factor approximation of $\dGH$---or its bijective relative $\dGHbij$---in any
Euclidean dimension $\ge 2$?

\paragraph{Our contribution.}
We resolve Q1 negatively and Q2 affirmatively---for both $\dGHbij$ and $\dGH$---in $\R^2$. For Q1, an
explicit three-point family drives $\dHiso/\dGH$ to $\Omega(\sqrt{\Delta/\dGH})$---$\Delta$ the
diameter of the inputs---in every $\R^d$, $d \ge 2$ (\Cref{thm:counter-r2}): M\'emoli's bound
$\dHiso \le c'_d\sqrt{M\,\dGH}$ \cite[Theorem~2]{Memoli2008} is sharp in the exponent, and the
intrinsic-versus-extrinsic proxy that powered the line case is unavailable already at $d = 2$. For
Q2 we give polynomial-time constant-factor approximations of $\dGHbij$ and of $\dGH$ on all finite
point sets in the plane (\Cref{thm:bij-uncond,thm:dgh-uncond}).

The algorithm (\Cref{alg:bij}) follows a fat-or-collinear dichotomy (\Cref{def:fat}): a fat set is
aligned by one rigid motion (\Cref{prop:bij-fat}); a near-collinear set is matched along its
$\pi$-dendrogram in one flat, scale-free pass (\Cref{thm:laminar})---each cluster aligned in its own
frame, relative orientations and per-node reflection signs recovered by global cuts
(\Cref{lem:orient-cut,lem:sign-cut}), boundary ambiguities absorbed (\Cref{lem:fat-boundary}).
Relaxing bijections to correspondences leaves one new feature, within-cluster multiplicity---the
\emph{only} gap between $\dGH$ and $\dGHbij$ (\Cref{prop:sep-bij}). The same dendrogram pass therefore
closes $\dGH$ with a single added collapse clause (\Cref{alg:corr}): merges are local
(\Cref{lem:weak-cluster,lem:local-dichotomy}) and the cardinality-blind assembly carries over verbatim
(\Cref{lem:card-blind,thm:dgh-uncond}). Lower bounds force each ingredient:
a multiplicity gap defeats bijections already on the line (\Cref{prop:mult-r1}) and, joined to the
dimension drop, the proxy $\min(\dHiso,\dGHbij)$ (\Cref{prop:barrier}); a reflected sub-cluster
defeats every global sort (\Cref{prop:bij-barrier}); and a reflected sub-node within a single
cluster defeats every coarser family of sign carriers (\Cref{rem:per-node-signs}).

\paragraph{Organization.}
\Cref{sec:dimension-drop} proves the dimension-drop, multiplicity, and combined barriers
(\Cref{thm:counter-r2,prop:mult-r1,prop:barrier}). \Cref{sec:conditional} settles the fat case
(\Cref{thm:main}). \Cref{sec:bijective} develops $\dGHbij$: the reflection barrier, the fat
estimator (\Cref{prop:bij-fat}), and the flat dendrogram pass with its unconditional guarantee
(\Cref{thm:laminar,thm:bij-uncond}). \Cref{sec:dgh-reduce} relaxes bijections to correspondences: multiplicity is the sole gap to
$\dGHbij$ (\Cref{prop:sep-bij}), so the same dendrogram pass closes $\dGH$ with one collapse clause
(\Cref{alg:corr,thm:dgh-uncond}).

\section{Obstructions to approximating the Gromov--Hausdorff distance}
\label{sec:dimension-drop}

The $\R^1$ algorithm of~\citet{MVW2024} approximates $\dGH$ through $\dHiso$; the analogous
extrinsic route to the bijective distance $\dGHbij$ (half the least distortion over bijections) is
\emph{bottleneck matching} under isometry,
\[
\dBiso(X, Y) := \inf_{T \in \Iso(\R^d)}\ \min_{\sigma : X \to Y\ \text{bijection}}\
\max_{x \in X}\ \|T(x) - \sigma(x)\| ,
\]
which dominates $\dGHbij$ exactly as $\dHiso$ dominates $\dGH$ ($\dGHbij \le \dBiso$ by the triangle
inequality). Both proxies are polynomial-time in fixed dimension, but we compute neither
$\dBiso$ exactly here, using it only as a proxy realized on fat inputs by the estimator
$\widehat d_{\mathrm{mot}}$ (\Cref{prop:bij-fat}). The \emph{dimension drop} (\Cref{thm:counter-r2,cor:counter-bij}) defeats both
extrinsic proxies at once. One three-point family---a collinear triple of diameter $D$ versus its
midpoint lifted by $\delta \ll D$, intrinsically within $O(\delta^2/D)$ yet perpendicular-separated
by $\Theta(\delta)$---drives $\dHiso/\dGH$ and $\dBiso/\dGHbij$ to infinity in every $\R^d$,
$d \ge 2$; ambient alignment, Hausdorff or bottleneck, cannot see intrinsic distances. The remaining route to $\dGH$, through $\dGHbij$ itself, fails by the
\emph{multiplicity gap} (\Cref{prop:mult-r1}) already on the line, and the two failures
superimpose on a single near-collinear input to defeat their $\min$ (\Cref{prop:barrier}). Together
the three confine any planar approximation to a fat-or-collinear split: extrinsic alignment is
trustworthy only on \emph{fat} inputs, and the near-collinear regime is where the work lies.

The strict separation $\dGH < \dHiso$ in the plane was first observed
in~\citet[Example~2.3]{MVW2024}; \Cref{thm:counter-r2} strengthens it to an unbounded ratio.

\begin{theorem}[Unbounded ratio in $\R^d$, $d \ge 2$]
\label{thm:counter-r2}
For every $d \ge 2$ and every $\delta \in (0, \tfrac14)$ there exist finite
$X, Y \subset \R^d$ with $\dHiso(X, Y) = \delta$ and
$\dGH(X, Y) \le 2\delta^2$, hence
\[
\frac{\dHiso(X, Y)}{\dGH(X, Y)} \;\xrightarrow[\delta \to 0]{}\; \infty .
\]
\end{theorem}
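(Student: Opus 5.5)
The plan is an explicit three-point construction, computed exactly in one plane and then embedded in $\R^d$ via the first two coordinates. Take
\[
X=\{(-1,0),\,(0,0),\,(1,0)\},\qquad Y=\{(-1,0),\,(0,2\delta),\,(1,0)\}.
\]
So $X$ is a collinear triple of diameter $2$, and $Y$ is the same triple with its midpoint lifted by $2\delta$. I will prove the two bounds separately: an upper bound on $\dGH$ from one explicit correspondence, and the exact value of $\dHiso$ from a matching upper and lower bound.

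\emph{Bound on $\dGH$.} I use the bijection that fixes the two endpoints and sends $(0,0)\mapsto(0,2\delta)$. The pairwise distances in $X$ are $1,1,2$, and in $Y$ they are $\sqrt{1+4\delta^2},\sqrt{1+4\delta^2},2$. The distortion is therefore
\[
\sqrt{1+4\delta^2}-1\le 2\delta^2 .
\]
By \eqref{eq:dGH} this gives $\dGH(X,Y)\le\delta^2\le 2\delta^2$.

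\emph{Upper bound $\dHiso\le\delta$.} Translate $X$ upward by $\delta$. Each translated endpoint is then at distance $\delta$ from the corresponding endpoint of $Y$, and the translated midpoint $(0,\delta)$ is at distance $\delta$ from $(0,2\delta)$. Hence $\dH(T(X),Y)\le\delta$ for this translation $T$.

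\emph{Lower bound $\dHiso\ge\delta$.} This is the main step, and I would argue through widths.

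Fix any isometry $T$ of $\R^d$ and suppose $\dH(T(X),Y)\le\varepsilon$. The set $T(X)$ lies on a line $L$, so every point of $Y$ lies within $\varepsilon$ of $L$. Let $P$ be the plane of $Y$ and project orthogonally onto $P$. The projection is $1$-Lipschitz, and the image of $L$ is either a line or a single point.

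If the image is a single point, then all three points of $Y$ lie within $\varepsilon$ of one point. Since $\diam Y=2$, this forces $\varepsilon\ge1>\delta$.

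Otherwise $Y$ lies in a planar strip of width $2\varepsilon$. The minimum width of a triangle equals its smallest altitude. For $Y$ the area is $2\delta$, so the altitudes are:
\begin{itemize}[nosep]
\item onto the base of length $2$: altitude $2\delta$;
\item onto the two sides of length $\sqrt{1+4\delta^2}$: altitude $4\delta/\sqrt{1+4\delta^2}$, which is at least $2\delta$ because $\sqrt{1+4\delta^2}\le2$.
\end{itemize}
So the width of $Y$ is $2\delta$, which gives $\varepsilon\ge\delta$.

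Because the minimum in \eqref{eq:dHiso} is attained (the relevant isometries form a compact family), the two bounds together give $\dHiso(X,Y)=\delta$. The ratio is then at least $\delta/(2\delta^2)=1/(2\delta)\to\infty$.

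The only delicate point is making the width argument rigorous in $\R^d$ for $d>2$, which the projection onto $P$ handles. The hypothesis $\delta<\tfrac14$ is only needed for the elementary inequalities $\sqrt{1+4\delta^2}\le 2$ and $1>\delta$.
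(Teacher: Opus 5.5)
Your proof is correct and follows the paper's argument essentially step for step. You use the same collinear triple against its midpoint-lifted copy (with diameter $D=2$ where the paper uses $D=1$, hence your sharper $\dGH\le\delta^2$, the paper's $2\delta^2/D$), the same labeled correspondence, the upward translation by $\delta$, the minimal-width/smallest-altitude lower bound, and the orthogonal projection onto the plane of $Y$ for $d\ge3$, with the degenerate point case settled by $\tfrac12\diam Y$. The appeal to attainment of the infimum is unnecessary: your lower bound already holds for every isometry $T$, and the translation realizes $\delta$.
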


\begin{figure}[ht]
\centering
\begin{tikzpicture}[scale=1.0]
  \begin{scope}
    \node[font=\small\itshape, blue!70!black] at (-1.0, 0) {$X_\delta$:};
    \draw[gray!50, very thin] (-0.4, 0) -- (3.4, 0);
    \fill[blue!70!black] (0, 0) circle (2pt) node[above=2pt, font=\small] {$x_1$};
    \fill[blue!70!black] (1.5, 0) circle (2pt) node[above=2pt, font=\small] {$x_2$};
    \fill[blue!70!black] (3, 0) circle (2pt) node[above=2pt, font=\small] {$x_3$};
    \draw[<->, gray!70, thin] (0, -0.7) -- (3, -0.7) node[midway, below, font=\scriptsize] {$D$};
  \end{scope}
  \begin{scope}[xshift=7cm]
    \node[font=\small\itshape, red!70!black] at (-1.0, 0) {$Y_\delta$:};
    \draw[gray!50, very thin] (-0.4, 0) -- (3.4, 0);
    \draw[red!40] (0, 0) -- (1.5, 0.6) -- (3, 0);
    \fill[red!70!black] (0, 0) circle (2pt) node[below left=-1pt, font=\small] {$y_1$};
    \fill[red!70!black] (3, 0) circle (2pt) node[below right=-1pt, font=\small] {$y_3$};
    \fill[red!70!black] (1.5, 0.6) circle (2pt) node[above=2pt, font=\small] {$y_2$};
    \draw[<->, gray!70, thin] (1.65, 0) -- (1.65, 0.6) node[midway, right, font=\scriptsize] {$2\delta$};
    \draw[<->, gray!70, thin] (0, -0.7) -- (3, -0.7) node[midway, below, font=\scriptsize] {$D$};
  \end{scope}
\end{tikzpicture}
\caption{The dimension-drop family of \Cref{thm:counter-r2}: collinear
$X_\delta$ versus $Y_\delta$ with its midpoint lifted by $2\delta$. Here
$\dGH = O(\delta^2/D)$ while $\dHiso = \delta$, so $\dHiso/\dGH =
\Omega(\sqrt{D/\dGH})$.}
\label{fig:counter}
\end{figure}

\begin{proof}
Fix $D > 0$ and $\delta \in (0, D/4)$. In $\R^2$, define
\begin{equation}
\label{eq:counter-Xdelta}
X_\delta := \{(0, 0),\ (D/2, 0),\ (D, 0)\},
\end{equation}
\begin{equation}
\label{eq:counter-Ydelta}
Y_\delta := \{(0, 0),\ (D/2, 2\delta),\ (D, 0)\};
\end{equation}
$X_\delta$ is collinear, while $Y_\delta$ is a thin isosceles triangle with
base $D$ and height $2\delta$ (\Cref{fig:counter}). We establish the two bounds
$\dGH(X_\delta, Y_\delta) \le 2\delta^2/D$ and
$\dHiso(X_\delta, Y_\delta) = \delta$, then conclude.

Take the
labeled correspondence $R = \{(x_i, y_i)\}_{i=1,2,3}$ between the points
\eqref{eq:counter-Xdelta}--\eqref{eq:counter-Ydelta}. The pairwise distance
discrepancies are
\[
|d_X(x_1, x_3) - d_Y(y_1, y_3)| = |D - D| = 0,
\qquad
|d_X(x_i, x_2) - d_Y(y_i, y_2)|
= \Bigl|\tfrac{D}{2} - \sqrt{\tfrac{D^2}{4} + 4\delta^2}\Bigr|
\ \ (i = 1, 3),
\]
the last two equal by symmetry. Using $\sqrt{1 + u} - 1 \le u/2$ for
$u \ge 0$ with $u = 16\delta^2/D^2$,
\[
\sqrt{\tfrac{D^2}{4} + 4\delta^2} - \tfrac{D}{2}
= \tfrac{D}{2}\left(\sqrt{1 + \tfrac{16\delta^2}{D^2}} - 1\right)
\le \tfrac{D}{2}\cdot\tfrac{8\delta^2}{D^2}
= \tfrac{4\delta^2}{D},
\]
so $\Dist(R) \le 4\delta^2/D$ and $\dGH(X_\delta, Y_\delta) \le 2\delta^2/D$.

For the $\dHiso$ upper
bound, the translation $z \mapsto z + (0, \delta)$ sends $X_\delta$ to
within distance $\delta$ of $Y_\delta$ in the labeled correspondence, so
$\dHiso(X_\delta, Y_\delta) \le \delta$, attained by the horizontal line
after this optimal translation; any tilt of the line only increases the gap.
Concretely, any isometry $T$ maps the collinear $X_\delta$ onto a line $L$, so
$\dH(T X_\delta, Y_\delta) \ge \max_y \mathrm{dist}(y, L) \ge \tfrac12\,
\mathrm{width}_{\widehat n}(Y_\delta)$, where $\widehat n \perp L$ and
$\mathrm{width}_{\widehat n}$ is the spread of $Y_\delta$ along $\widehat n$
(the farthest point of $Y_\delta$ from $L$ is at least half that spread). The
minimum over directions is the minimal width of $Y_\delta$, i.e.\ its smallest
altitude; as the base $D$ is the longest side ($\delta < D/4$), that altitude
is the height $2\delta$. Hence $\dHiso(X_\delta, Y_\delta) \ge \tfrac12\cdot
2\delta = \delta$, and both bounds give $\dHiso(X_\delta, Y_\delta) = \delta$.

Both sets have diameter $D$, so
$M := \max(\diam X_\delta, \diam Y_\delta) = D$, and the two bounds give
\[
M\,\dGH(X_\delta, Y_\delta) \le D\cdot\tfrac{2\delta^2}{D} = 2\delta^2
= 2\,\dHiso(X_\delta, Y_\delta)^2,
\quad\text{i.e.}\quad
\dHiso \ge \tfrac{1}{\sqrt 2}\sqrt{M\,\dGH},
\quad \frac{\dHiso}{\dGH} \ge \frac{D}{2\delta}.
\]
Taking $D = 1$ yields the family of the statement, with
$\dHiso/\dGH \ge 1/(2\delta) \to \infty$ as $\delta \to 0$. For $d \ge 3$,
embed $X_\delta, Y_\delta$ in $\R^d$ via
$(a, b) \mapsto (a, b, 0, \ldots, 0)$; $\dGH$ is preserved (it is
intrinsic), and the $\dHiso$ lower bound extends by projection: any isometry
$T$ of $\R^d$ maps the collinear $X_\delta$ into a line $L$, and the
orthogonal projection $P$ onto the plane $\Pi$ containing $Y_\delta$, being
$1$-Lipschitz, maps $L$ to a line or a point of that plane, so
$\dH(TX_\delta, Y_\delta) \ge \max_y \dist(y, L) \ge \max_y \dist(y, P(L))$.
If $P(L)$ is a line, the planar minimal-width argument above bounds this below by $\delta$; if
$L \perp \Pi$, so that $P(L)$ is a single point $p$, then
$\max_y \dist(y, p) \ge \tfrac12\diam Y_\delta = \tfrac12 > \delta$ directly. Either way the bound
is at least $\delta$.
\end{proof}

\begin{remark}[Optimality of the exponent]
\label{rem:memoli-tight}
The witnessing family moreover satisfies $\dHiso(X, Y) \ge \tfrac{1}{\sqrt 2}\sqrt{M\,\dGH(X, Y)}$
with $M = \max(\diam X, \diam Y)$ (from $M\,\dGH \le 2\delta^2 = 2\dHiso^2$ in the proof), so it
matches M\'emoli's upper bound $\dHiso \le c'_d\,\sqrt{M\,\dGH}$~\cite[Theorem~2]{Memoli2008}: the
exponent $\tfrac12$ is optimal and no linear comparison $\dHiso \le C\,\dGH$ holds for any
$d \ge 2$. M\'emoli's own constructions realize only a bounded ratio and require the dimension to
grow, whereas \Cref{thm:counter-r2} attains the square-root rate already at $d = 2$.
\end{remark}

The same family obstructs the bijective proxy. Its witness correspondence $R$ is already a
bijection, so it bounds $\dGHbij$ as well, and the bottleneck route then overshoots $\dGHbij$
exactly as the Hausdorff route overshoots $\dGH$---a bijection of bottleneck cost $\varepsilon$ under
an isometry $T$ leaves every point of each set within $\varepsilon$ of the other, so
$\dHiso \le \dBiso$.

\begin{corollary}[Unbounded ratio for the bottleneck proxy]
\label{cor:counter-bij}
The family of \Cref{thm:counter-r2} has $|X| = |Y|$ and satisfies
$\dGHbij(X, Y) \le 2\delta^2$ and $\dBiso(X, Y) \ge \dHiso(X, Y) = \delta$, hence
\[
\frac{\dBiso(X, Y)}{\dGHbij(X, Y)} \;\xrightarrow[\delta \to 0]{}\; \infty
\qquad\text{in every } \R^d,\ d \ge 2 .
\]
\end{corollary}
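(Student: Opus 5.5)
We reuse the family $X_\delta, Y_\delta$ from the proof of \Cref{thm:counter-r2} with $D = 1$, embedded in $\R^d$ via $(a,b)\mapsto(a,b,0,\ldots,0)$. Both sets have three points, so $|X| = |Y|$ and $\dGHbij$ is defined. The argument then has three steps: an upper bound on $\dGHbij$, a lower bound on $\dBiso$, and a comparison of the two.

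\textbf{Upper bound on $\dGHbij$.} The labeled correspondence $R = \{(x_i, y_i)\}_{i=1,2,3}$ from that proof is the graph of the bijection $\sigma : x_i \mapsto y_i$. Its distortion was already computed there: $\Dist(\sigma) = \Dist(R) \le 4\delta^2/D$. Therefore
\[
\dGHbij(X,Y) \le \tfrac12\Dist(\sigma) \le 2\delta^2/D = 2\delta^2 .
\]
This bound is intrinsic, so it is unaffected by the embedding into $\R^d$.

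\textbf{Lower bound on $\dBiso$.} We compare $\dBiso$ with $\dHiso$. Fix an isometry $T$ and a bijection $\sigma$, and set $\varepsilon := \max_x \|T(x) - \sigma(x)\|$. Each $T(x)$ lies within $\varepsilon$ of the point $\sigma(x) \in Y$. Since $\sigma$ is surjective, each $y \in Y$ equals $\sigma(x)$ for some $x$, and so lies within $\varepsilon$ of $T(x)$. Hence $\dH(T(X), Y) \le \varepsilon$. Minimizing over $\sigma$ and then over $T$ gives $\dHiso \le \dBiso$. \Cref{thm:counter-r2} already establishes $\dHiso(X,Y) = \delta$ in every $\R^d$ with $d \ge 2$, including the projection argument for $d \ge 3$. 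It follows that $\dBiso(X,Y) \ge \delta$.

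\textbf{Conclusion and obstacles.} Combining the two bounds gives
\[
\frac{\dBiso}{\dGHbij} \ge \frac{\delta}{2\delta^2} = \frac{1}{2\delta} \to \infty \quad\text{as } \delta \to 0 .
\]
One degenerate case needs a remark: if $\dGHbij = 0$, the ratio should be read as $+\infty$. In fact this case cannot occur. Since $\dGHbij \ge \dGH$ and $X$ is collinear while $Y$ is not, the two sets are not isometric, so $\dGH > 0$ and hence $\dGHbij > 0$. No step here is a genuine obstacle; all of the substantive work, namely the minimal-width lower bound for $\dHiso$, is inherited from \Cref{thm:counter-r2}. The only point requiring care is the surjectivity of $\sigma$ in the comparison step, which guarantees that both one-sided Hausdorff terms are controlled.
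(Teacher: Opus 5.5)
Your proposal is correct and follows the paper's own argument: the labeled witness of \Cref{thm:counter-r2} is already a bijection, giving $\dGHbij \le 2\delta^2$, and a bijection of bottleneck cost $\varepsilon$ under an isometry $T$ puts each set within $\varepsilon$ of the other, giving $\dHiso \le \dBiso$ and hence a ratio of at least $1/(2\delta)$. Your check that $\dGHbij > 0$, because the two sets are non-isometric, is a harmless detail that the paper leaves implicit.
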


The natural remedy for the dimension drop---matching points by a
\emph{bijection}---fails for an unrelated reason. By \eqref{eq:dGHbij},
$\tfrac12\Dist(\sigma) \ge \dGHbij$ for every bijection $\sigma$, and $\dGHbij \ge \dGH$ since
bijections are correspondences; so a bijection-valued estimator, reporting $\tfrac12\Dist(\sigma)$,
never falls below $\dGHbij$. But $\dGHbij$ exceeds $\dGH$ by an unbounded factor \emph{already on the
line}, so no such estimator---sorted-order matching along an axis, or any other---can approximate
$\dGH$.

\begin{proposition}[Multiplicity gap on $\R^1$]
\label{prop:mult-r1}
For every $\varepsilon \in (0, \tfrac14)$ there exist four-point sets
$X, Y \subset \R$ with $\dGH(X, Y) \le \varepsilon$ and
$\dGHbij(X, Y) \ge \tfrac12(1 - 3\varepsilon)$, hence
\[
\frac{\dGHbij(X, Y)}{\dGH(X, Y)} \;\xrightarrow[\varepsilon \to 0]{}\; \infty .
\]
\end{proposition}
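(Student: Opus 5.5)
The plan is an explicit four-point construction on the line. One set has a cluster of three nearly coincident points plus one far point; the other has two clusters of two points each. A correspondence may send one point to several, so it can absorb the mismatch between cluster sizes $3+1$ and $2+2$. A bijection cannot, by pigeonhole. Concretely, for $\varepsilon \in (0,\tfrac14)$ I take
\[
X := \{0,\ \tfrac{\varepsilon}{2},\ \varepsilon,\ 1\}, \qquad Y := \{0,\ \varepsilon,\ 1,\ 1+\varepsilon\} \subset \R .
\]

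\textbf{Upper bound on $\dGH$.} Define $R$ by pairing $0 \mapsto 0$, $\tfrac{\varepsilon}{2} \mapsto \varepsilon$ and $\varepsilon \mapsto \varepsilon$, and pairing $1$ with both $1$ and $1+\varepsilon$. This $R$ projects onto both sets, so it is a correspondence. I bound $\Dist(R)$ by sorting pairs of pairs into two types.
\begin{itemize}
\item Within-cluster pairs: both $X$-points lie in $[0,\varepsilon]$ and both $Y$-points lie in one cluster of diameter $\varepsilon$ (or the $X$-points coincide at $1$). Each distance is then in $[0,\varepsilon]$, so the discrepancy is at most $\varepsilon$.
\item Cross pairs, with one $X$-point $a\in[0,\varepsilon]$ and the other equal to $1$: then $d_X \in [1-\varepsilon,1]$. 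The corresponding $Y$-points are $b \in \{0,\varepsilon\}$ and $c\in\{1,1+\varepsilon\}$, so $d_Y = |c - b| \in [1-\varepsilon,1+\varepsilon]$. The discrepancy is at most $2\varepsilon$.
\end{itemize}
Hence $\Dist(R)\le 2\varepsilon$, and \eqref{eq:dGH} gives $\dGH(X,Y)\le\varepsilon$.

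\textbf{Lower bound on $\dGHbij$.} Let $\sigma$ be any bijection. It sends the three cluster points $0,\tfrac{\varepsilon}{2},\varepsilon$ to three distinct points of $Y$. Each cluster of $Y$ has only two points, so these three images meet both clusters $\{0,\varepsilon\}$ and $\{1,1+\varepsilon\}$. So there are cluster points $a,a'$ of $X$ whose images lie in different $Y$-clusters. Then $d_X(a,a')\le\varepsilon$ while $d_Y(\sigma a,\sigma a')\ge 1-\varepsilon$, so $\Dist(\sigma)\ge 1-2\varepsilon$. Since this holds for every $\sigma$, \eqref{eq:dGHbij} gives $\dGHbij(X,Y)\ge\tfrac12(1-2\varepsilon)\ge\tfrac12(1-3\varepsilon)$.

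\textbf{Conclusion.} Combining the two bounds, $\dGHbij/\dGH \ge (1-3\varepsilon)/(2\varepsilon)\to\infty$ as $\varepsilon \to 0$. The only delicate point is choosing coordinates so that the cross discrepancies of $R$ stay within $2\varepsilon$. This is why the $X$-cluster is given diameter $\varepsilon$ rather than $2\varepsilon$. Otherwise the argument is the pigeonhole step above.
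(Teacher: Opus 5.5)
Your proof is correct and is essentially the paper's argument: the same $3+1$ versus $2+2$ multiplicity mismatch on the line, a correspondence that absorbs the size difference many-to-one with distortion $O(\varepsilon)$, and a pigeonhole that forces every bijection to split a tight cluster across the unit gap. The differences are cosmetic---you swap the roles of $X$ and $Y$, give the triple diameter $\varepsilon$ rather than $2\varepsilon$, and run the pigeonhole from the triple's side, which gives the slightly sharper $\dGHbij(X,Y) \ge \tfrac12(1-2\varepsilon)$.
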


\begin{proof}
Label $X = \{x_1, x_2, x_3, x_4\}$ at
positions $0, \varepsilon, 1, 1+\varepsilon$ and
$Y = \{y_1, y_2, y_3, y_4\}$ at $0, \varepsilon, 2\varepsilon, 1$. The
correspondence
\[
R = \{(x_1, y_1),\ (x_2, y_2),\ (x_2, y_3),\ (x_3, y_4),\ (x_4, y_4)\}
\]
is surjective onto both sides. Its distortion is the maximum, over the
$\binom{5}{2}$ pairs of elements of $R$, of $|d_X - d_Y|$. Every $X$-coordinate
difference and every $Y$-coordinate difference lies in $\{0, \varepsilon, 2\varepsilon,
1-2\varepsilon, 1-\varepsilon, 1, 1+\varepsilon\}$, and matched pairs agree to within
$2\varepsilon$ (for instance $|d_X(x_3,x_4)-d_Y(y_4,y_4)| = \varepsilon$ and
$|d_X(x_2,x_2)-d_Y(y_2,y_3)| = \varepsilon$; every remaining pair likewise differs by at
most $2\varepsilon$). Hence $\Dist(R) \le
2\varepsilon$ and $\dGH \le \varepsilon$.

Of the four points of $Y$, three lie
in $[0, 2\varepsilon]$ (namely $0, \varepsilon, 2\varepsilon$) and only
one (namely $1$) lies outside it. So under any bijection $\sigma$
exactly one point of $X$ maps to $y = 1$ and the other three map into
$[0, 2\varepsilon]$. That one point belongs to one of the two tight
pairs $\{x_1, x_2\} = \{0, \varepsilon\}$ or $\{x_3, x_4\} = \{1, 1 +
\varepsilon\}$ (each at distance $\varepsilon$), and its partner then
maps into $[0, 2\varepsilon]$; that pair is therefore split, its two
images lying at distance $\ge 1 - 2\varepsilon$. Hence
$\Dist(\sigma) \ge (1 - 2\varepsilon) - \varepsilon = 1 - 3\varepsilon$,
and $\dGHbij \ge \tfrac12(1 - 3\varepsilon)$.
\end{proof}

The mechanism is \emph{multiplicity mismatch}: $X$ carries two tight pairs while $Y$ carries a tight
triple and a singleton, so a correspondence absorbs the size difference by matching many-to-one
whereas a bijection must split a tight pair across the unit gap. This multiplicity gap and the
dimension drop act on disjoint structure---cluster sizes and perpendicular spread,
respectively---so they superimpose on a single near-collinear input.

\begin{proposition}[Combined barrier]
\label{prop:barrier}
For every $\delta \in (0, \tfrac18]$ there is a near-collinear pair
$X, Y \subset \R^2$ with $\dHiso(X, Y) \ge \delta$ and
$\dGH(X, Y) \le 2\delta^2$, while $\dGHbij(X, Y) \ge \tfrac14 - \tfrac32\delta^2$. Hence
\[
\frac{\min\bigl(\dHiso(X, Y),\ \dGHbij(X, Y)\bigr)}{\dGH(X, Y)}
\;\xrightarrow[\delta \to 0]{}\; \infty .
\]
\end{proposition}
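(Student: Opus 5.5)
We superimpose the two earlier mechanisms at unit scale ($D = 1$): the dimension drop of \Cref{thm:counter-r2} supplies a lifted midpoint, and the multiplicity gap of \Cref{prop:mult-r1} supplies clusters of mismatched sizes at the two ends. Fix a cluster spread $\eta := \delta^4$ (any $\eta \le \delta^2$ small enough for the slack below will do). Set
\[
X := \{(0,0),\ (\eta,0),\ (\tfrac12,0),\ (1,0),\ (1+\eta,0)\},
\qquad
Y := \{(0,0),\ (\eta,0),\ (2\eta,0),\ (\tfrac12,2\delta),\ (1,0)\}.
\]
Thus $X$ is collinear with two tight pairs $A$ (left) and $B$ (right) and a midpoint $m$. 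The set $Y$ is near-collinear, with a tight triple $C_0$ at the left, a singleton $y_1$ at $(1,0)$, and a lifted midpoint $m'$. We then prove the three bounds separately; the displayed ratio follows immediately, since $\min(\delta,\tfrac14-\tfrac32\delta^2)/(2\delta^2) \to \infty$.

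\textbf{Step 1: $\dHiso \ge \delta$.} We reuse the width argument of \Cref{thm:counter-r2}. Any isometry sends the collinear $X$ into a line $L$. Hence $\dH(TX,Y) \ge \max_{y\in Y}\dist(y,L)$, which is at least half the width of $Y$ in the direction normal to $L$. Now $Y$ contains the triangle $(0,0),(\tfrac12,2\delta),(1,0)$. Its longest side is the base, because $\delta \le \tfrac18$, so its minimal width is the altitude $2\delta$. Width is monotone under inclusion, so the minimal width of $Y$ is at least $2\delta$, and therefore $\dHiso \ge \delta$.

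\textbf{Step 2: $\dGH \le 2\delta^2$.} We use the correspondence
\[
R = \{(A\text{'s two points}, \text{first two of } C_0),\ (x_{(\eta,0)}, y_{(2\eta,0)}),\ (m,m'),\ (B\text{'s two points}, y_1)\},
\]
modelled on \Cref{prop:mult-r1}. Pairs of entries both away from $m$ are collinear on both sides. For these, the discrepancies are differences of coordinate gaps and so are at most $2\eta$. Pairs involving $(m,m')$ compare $\tfrac12 - s$ with $\sqrt{(\tfrac12-t)^2+4\delta^2}$, where $|s-t| \le 2\eta$. Such a discrepancy is at most
\[
\bigl(\sqrt{\tfrac14+4\delta^2}-\tfrac12\bigr)+O(\eta) = \frac{8\delta^2}{1+\sqrt{1+16\delta^2}}+O(\eta) \le 4\delta^2 - c\,\delta^4 + O(\eta).
\]
The main bookkeeping obstacle is to check that the exact slack $c\delta^4$ (with $c$ an absolute constant, for $\delta \le \tfrac18$) absorbs the $O(\eta)$ error. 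I would handle this by taking $\eta$ a small enough constant multiple of $\delta^4$, or by verifying the worst pair directly. That yields $\Dist(R) \le 4\delta^2$, and hence $\dGH \le 2\delta^2$.

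\textbf{Step 3: $\dGHbij \ge \tfrac14-\tfrac32\delta^2$.} This is a pigeonhole argument in the style of \Cref{prop:mult-r1}. The only points of $Y$ outside $C_0$ are $y_1$ and $m'$. Under a bijection $\sigma$, the four points of $A\cup B$ cannot all land in $C_0$, which has only three points. So one of two things happens.
\begin{itemize}
\item[(i)] Some tight pair is split, with one point mapped into $C_0$ and the other to $y_1$ or $m'$. Each of $y_1, m'$ lies at distance at least $\tfrac12-2\eta$ from every point of $C_0$.
\item[(ii)] Some tight pair is mapped onto $\{y_1,m'\}$, and $|y_1 m'| = \sqrt{\tfrac14+4\delta^2} \ge \tfrac12$.
\end{itemize}
Either way a pair at distance $\eta$ in $X$ is sent to a pair at distance at least $\tfrac12-2\eta$. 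Hence $\Dist(\sigma) \ge \tfrac12-3\eta$, and so $\dGHbij \ge \tfrac14-\tfrac32\eta \ge \tfrac14-\tfrac32\delta^2$.
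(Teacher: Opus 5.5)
Your proof is correct and follows the paper's template: the same configuration (two tight pairs plus a midpoint in $X$; a tight triple, a singleton and a lifted apex in $Y$), the same minimal-width bound for $\dHiso$, a cluster-collapsing correspondence for $\dGH$, and a pigeonhole for $\dGHbij$.

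The substantive difference is the cluster spread. The paper uses spread $\delta^2$, the same order as the whole budget $\Dist(R)\le 4\delta^2$. So its apex pairs leave no perturbative slack: they need an exact check, and the pairing inside the left cluster matters. For instance, pairing $(\delta^2,0)\in X$ with $(0,0)\in Y$ gives apex discrepancy $\sqrt{\tfrac14+4\delta^2}-\tfrac12+\delta^2\ge 5\delta^2-16\delta^4>4\delta^2$. The paper's text checks the apex only against the far endpoints.

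Your choice $\eta=\delta^4$ separates the two scales, and the step you flagged closes with $\eta=\delta^4$ as chosen, with no further shrinking:
\begin{itemize}
\item Let $f(a):=\sqrt{a^2+4\delta^2}-a$, which is decreasing and $1$-Lipschitz on $a\ge 0$.
\item An apex pair with abscissae $s,t$ has $d_Y-d_X\le f(|\tfrac12-t|)+|s-t|$.
\item Since $|\tfrac12-t|\ge\tfrac12-2\eta$ and $|s-t|\le\eta$ in your $R$, this is at most $f(\tfrac12)+3\eta$.
\item Meanwhile $4\delta^2-f(\tfrac12)=64\delta^4/(1+\sqrt{1+16\delta^2})^2\ge 14\delta^4$ for $\delta\le\tfrac18$, which absorbs $3\eta=3\delta^4$.
\item The reverse sign, $d_X-d_Y\le|s-t|\le\eta$, is immediate.
\end{itemize}

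Your pigeonhole also runs in the opposite direction. Yours goes forward: four points of $A\cup B$ cannot fit in the three-point $C_0$, so a tight pair is stretched to length at least $\tfrac12-2\eta$. The paper's goes backward: the three preimages of $Y$'s tight triple must contain two points from different $X$-clusters. Both give $\Dist(\sigma)\ge\tfrac12-3\eta$, and your smaller $\eta$ even yields $\dGHbij\ge\tfrac14-\tfrac32\delta^4$.

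The one check you leave implicit, as the paper does, is that $Y$ is near-collinear. Its diameter $1$ is realized only by $(0,0),(1,0)$, and its perpendicular spread is $2\delta\le\tfrac14$.
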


\begin{proof}
Take the collinear set with its near-collinear
partner
\begin{align*}
X &= \{(0,0),\ (\delta^2, 0),\ (\tfrac12, 0),\ (1, 0),\ (1+\delta^2, 0)\},\\
Y &= \{(0,0),\ (\delta^2, 0),\ (2\delta^2, 0),\ (\tfrac12, 2\delta),\ (1, 0)\}.
\end{align*}
We have $\dGH \le 2\delta^2$. The correspondence collapsing the tight clusters and
matching the midpoint $(\tfrac12, 0)$ to the apex $(\tfrac12, 2\delta)$ has distortion at most
$4\delta^2$. For a matched pair $(x,y), (x',y')$ not involving the apex, each point lies within
$2\delta^2$ of its partner (the clusters sit within $2\delta^2$ of $0$ or of $1$), so the two matched
distances differ by at most $\|x - y\| + \|x' - y'\| \le 4\delta^2$. For a pair involving the apex and a far endpoint,
$\sqrt{\tfrac14 + 4\delta^2} - \tfrac12 = \tfrac12\bigl(\sqrt{1 + 16\delta^2} - 1\bigr) \le 4\delta^2$
by $\sqrt{1+u} - 1 \le u/2$. Hence $\dGH \le 2\delta^2$.

Next $\dHiso \ge \delta$, as in the proof of \Cref{thm:counter-r2}: any isometry
maps the collinear $X$ to a line $L$, so $\dH(TX, Y) \ge \max_{y \in Y} \dist(y, L)$,
and over all lines $L$ this maximum is minimized at $\delta$---balancing the four
$x$-axis points of $Y$ against the apex at height $2\delta$ (a line at height
$\delta$ leaves every point of $Y$ within $\delta$, and no line does better: the minimal
width of $Y$ is $2\delta$, as in \Cref{thm:counter-r2}). The
exact value and any upper bound are immaterial; the lower bound $\delta$ is all we use.

Finally $\dGHbij \ge \tfrac14 - \tfrac32\delta^2$, by a multiplicity pigeonhole as in
\Cref{prop:mult-r1}. The three $Y$-points $(0,0), (\delta^2,0), (2\delta^2,0)$
are pairwise within $2\delta^2$, whereas no three points of $X$ are mutually close:
each maximal cluster of $X$ has at most two points (the pairs near $0$ and near $1$,
with $\tfrac12$ isolated), so any three points of $X$ contain a pair at distance
$\ge \tfrac12 - \delta^2$. Under any bijection $\sigma$ the three preimages of
$\{(0,0),(\delta^2,0),(2\delta^2,0)\}$ are three $X$-points, hence contain such a
far pair, whose images lie within $2\delta^2$; so
$\Dist(\sigma) \ge (\tfrac12 - \delta^2) - 2\delta^2 = \tfrac12 - 3\delta^2$
and $\dGHbij \ge \tfrac14 - \tfrac32\delta^2$.

Both branches are thus at least $\delta$ (using $\tfrac14 - \tfrac32\delta^2 \ge \delta$ for
$\delta \le \tfrac18$) while $\dGH \le 2\delta^2$, which is the displayed divergence.
\end{proof}

\section{Constant-factor approximation on fat inputs}
\label{sec:conditional}

\Cref{prop:barrier} confined the failure of $\dHiso$ to the near-collinear regime. On the
complementary \emph{fat} inputs, \emph{both} distances this paper approximates---the
Gromov--Hausdorff distance $\dGH$ and its bijective restriction $\dGHbij$---admit polynomial-time
constant-factor proxies, by one shared mechanism: a fat set carries a non-degenerate anchor
triangle, which pins any low-distortion map to a rigid motion. We develop that mechanism once
(\Cref{def:fat,lem:fat-triangle}) and apply it to each distance in turn. Only the
doubly-near-collinear regime resists it---the subject of the rest of the paper, where the two
distances finally diverge: the bijective reflection barrier in \Cref{sec:bijective}, the $\dGH$
kernel in \Cref{sec:dgh-reduce}.

\begin{definition}[Fat and near-collinear]
\label{def:fat}
\label{def:near-collinear}
For a finite set $S \subset \R^2$ let $\widehat u$ be a \emph{diameter axis}---the direction of a
pair realizing $\diam(S) := \max_{a,b\in S}\|a-b\|$---and write $\pi(p) = \widehat u\cdot p$ and
$\pi^\perp(p) = \widehat u^\perp\cdot p$. The \emph{longitudinal extent} and \emph{perpendicular spread}
are $\Delta_S := \max_p\pi(p) - \min_p\pi(p)$ (always equal to $\diam(S)$) and
$\delta_S := \max_p\pi^\perp(p) - \min_p\pi^\perp(p)$. Call $S$ \emph{fat} if $\delta_S > \Delta_S/4$ and
\emph{near-collinear} if $\delta_S \le \Delta_S/4$.
\end{definition}

This anchor triangle is all the stability argument needs.

\begin{lemma}[Fat sets carry an anchor triangle]
\label{lem:fat-triangle}
If $S$ is fat, some $s_1, s_2, s_3 \in S$ span a triangle with all interior angles
$\ge \beta_0 := \arctan(1/8)$ ($\sin\beta_0 = 1/\sqrt{65}$, $\approx 7.125^\circ$) and diameter
$\diam(S)$.
\end{lemma}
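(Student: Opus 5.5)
The plan is to build the anchor triangle directly from a diameter pair and a point far from the diameter line. Let $s_1, s_2 \in S$ realize $\diam(S) = \Delta_S =: \Delta$, and take $\widehat u$ to be their direction, as in \Cref{def:fat}.

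\textbf{Step 1 (projections stay between the endpoints).} I will first show that every $p \in S$ has $\pi(p) \in [\pi(s_1), \pi(s_2)]$. Suppose instead that $\pi(p) > \pi(s_2)$. Then
\[
\|p - s_1\| \ \ge\ \pi(p) - \pi(s_1) \ >\ \Delta ,
\]
which contradicts the choice of $s_1, s_2$ as a diameter pair. The case $\pi(p) < \pi(s_1)$ is symmetric.

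\textbf{Step 2 (a far point).} Both $s_1$ and $s_2$ have the same perpendicular coordinate $c := \pi^\perp(s_1) = \pi^\perp(s_2)$. The values $\pi^\perp(p)$ range over an interval of length $\delta_S$, and this interval contains $c$. Hence some $s_3 \in S$ satisfies
\[
h := |\pi^\perp(s_3) - c| \ \ge\ \delta_S/2 \ >\ \Delta/8 ,
\]
where the last inequality is fatness. Thus $s_3$ lies at height $h > \Delta/8$ above the segment $s_1 s_2$. By Step~1, its foot $q$ on the line $s_1 s_2$ lies on the segment itself.

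\textbf{Step 3 (base angles).} Since $q$ lies on the segment, the triangle $s_1 q s_3$ has a right angle at $q$, so the angle at $s_1$ equals $\arctan\bigl(h/\|q - s_1\|\bigr)$. Here $\|q - s_1\| \le \Delta$, or the angle is $\pi/2$ if $q = s_1$. Either way the angle at $s_1$ is at least $\arctan(h/\Delta) > \arctan(1/8) = \beta_0$. The angle at $s_2$ is handled the same way.

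\textbf{Step 4 (apex angle and diameter).} The side $s_1 s_2$ has length $\Delta = \diam(S)$, so it is the longest side of the triangle. Its opposite angle, at $s_3$, is therefore the largest of the three angles, hence at least $60^\circ > \beta_0$. The same observation shows that the triangle's diameter equals $\diam(S)$.

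\textbf{Main obstacle.} The only delicate point is Step~1. Without it, the foot $q$ could fall outside the segment, and the angle at $s_1$ or $s_2$ could then be small. The diameter-pair argument rules this out, and with it in place the remaining steps are elementary.
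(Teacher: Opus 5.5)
Your proof is correct and follows the paper's argument essentially step for step: the diameter pair confines every projection to the segment, fatness yields a point at height $>\Delta/8$ because the perpendicular range contains the diameter line's coordinate, the base angles then have tangent at least $1/8$, and the apex angle is at least $60^\circ$ since it faces the longest side. Your separate treatment of a foot that coincides with an endpoint also covers a case the paper's tangent formula skips, since that formula tacitly assumes $0 < x < D$.
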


\begin{proof}
Let $D = \diam(S)$, realized by $(p, q)$; place $p = (0,0)$, $q = (D,0)$, so $\Delta_S = D$ and the
diameter constraints $|r-p|, |r-q| \le D$ put every $r = (x,y) \in S$ at $x \in [0,D]$. Since $S$
is fat, $\delta_S = \max y - \min y > D/4$, so some $r$ has $|y| > D/8$. In $\triangle pqr$ the
angle at $r$ is $\ge \pi/3$ (opposite the longest side $pq$), while the angles at $p$ and $q$ have
tangents $|y|/x$ and $|y|/(D-x)$, each $\ge |y|/D > 1/8 = \tan\beta_0$; so all three angles are
$\ge \beta_0$, and the diameter is $|pq| = D = \diam(S)$.
\end{proof}

On a fat input the anchor triangle forces $\dHiso$ to track $\dGH$ within an absolute factor.

\begin{lemma}[Conditional stability]
\label{lem:conditional}
There is an absolute constant $C_0$ such that, for any finite $X, Y \subset \R^2$ with $X$ fat,
\begin{equation}
\label{eq:conditional}
\dHiso(X, Y) \le C_0 \cdot \dGH(X, Y).
\end{equation}
\end{lemma}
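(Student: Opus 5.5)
Let $\varepsilon := \dGH(X,Y)$ and $D := \diam X$; fix a correspondence $R$ with $\Dist(R) \le 2\varepsilon$. We split into two regimes by the ratio $\varepsilon/D$, and aim to show $\dH(T X, Y) \le C_0 \varepsilon$ for a single rigid motion $T$.

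\textbf{Large distortion.} If $\varepsilon \ge c D$ for a small absolute constant $c$ (say $c = 1/1000$), nothing delicate is needed. Translate $X$ and $Y$ so that some pair of corresponding points coincides. Since $|\diam X - \diam Y| \le 2\varepsilon$, every point of either set lies within $D + 2\varepsilon$ of the common origin. Hence
\[
\dH(TX, Y) \le 2D + 2\varepsilon \le (2/c + 2)\,\varepsilon .
\]

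\textbf{Small distortion: choosing $T$.} If $\varepsilon < cD$, use \Cref{lem:fat-triangle} to pick an anchor triangle $s_1 s_2 s_3$ in $X$ with all angles at least $\beta_0$ and side lengths in $[D \sin\beta_0, D]$ (law of sines). Choose partners $t_i$ with $(s_i, t_i) \in R$. Then the triangle $t_1 t_2 t_3$ has side lengths within $2\varepsilon$ of those of $s_1 s_2 s_3$. Since $\varepsilon \ll D$, it is also a non-degenerate triangle with angles at least $\beta_0/2$. Let $T$ be the rigid motion (allowing a reflection) that sends $s_1 \mapsto t_1$ and sends the ray $s_1 s_2$ to the ray $t_1 t_2$, with the orientation chosen so that $s_3$ lands on the same side as $t_3$. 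A standard triangle-rigidity computation then gives $\|T s_i - t_i\| \le K\varepsilon$ for an absolute $K$ depending only on $\beta_0$. The key step is that the angle at $s_1$ is stable under $O(\varepsilon)$ side perturbations with Lipschitz constant $O(1/(D \sin\beta_0))$, by the law of cosines and the lower bound on the angles.

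\textbf{Small distortion: trilateration.} Now take any pair $(x, y) \in R$. We know $\bigl|\,\|x - s_i\| - \|y - t_i\|\,\bigr| \le 2\varepsilon$ for $i = 1,2,3$. Replacing $t_i$ by $T s_i$ costs another $K\varepsilon$, so $T x$ and $y$ have distances to the three well-spread anchors $T s_i$ that agree within $(2 + K)\varepsilon$. The main obstacle is this \emph{trilateration stability} step: concluding $\|T x - y\| \le C\varepsilon$ from these three near-equal distances. The plan is as follows.
\begin{enumerate}
\item Use the identity $\|z - a\|^2 - \|z - b\|^2 = 2 z \cdot (b - a) + \|a\|^2 - \|b\|^2$. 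Differencing squared distances to the anchors turns the data into two linear equations in $z$ along the directions $T s_2 - T s_1$ and $T s_3 - T s_1$.
\item Both $Tx$ and $y$ lie within distance about $2D$ of the anchors. So the squared distances differ by at most $O(D)\cdot O(\varepsilon)$.
\item Dividing by the direction lengths, which are at least $D\sin\beta_0$, bounds the projections of $Tx - y$ onto the two directions by $O(\varepsilon)$.
\item The two directions meet at an angle at least $\beta_0$. So the full vector satisfies $\|Tx - y\| = O(\varepsilon/\sin\beta_0)$.
\end{enumerate}
Because $R$ is surjective onto both sides, every point of $TX$ is $O(\varepsilon)$-close to some point of $Y$, and vice versa. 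This gives $\dHiso \le \dH(TX, Y) \le C_0\varepsilon$, with $C_0$ the maximum of the constants from the two regimes.
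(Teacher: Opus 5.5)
Your proposal is correct and follows essentially the paper's proof: the same split into a large-distortion regime (align one corresponding pair and bound by the diameters) and a small-distortion regime, where the anchor triangle of \Cref{lem:fat-triangle} fixes the anchored rigid motion and squared-distance differencing then localizes every pair of $R$. This is exactly what the paper packages as \Cref{lem:fat-align} and \Cref{lem:trilat}.

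One small numerical caveat: knowing only that all angles are $\ge \beta_0$, the sample value $c = 1/1000$ is marginally too large to guarantee the perturbed triangle keeps angles $\ge \beta_0/2$. A thin isosceles triangle with base angles $\beta_0$ can drop to about $3.4^\circ$, just below $\beta_0/2 \approx 3.56^\circ$. Any smaller absolute $c$ repairs this without changing the argument.
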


\begin{proof}
Take the anchor triangle $\triangle s_1 s_2 s_3$ of \Cref{lem:fat-triangle}: angles $\ge \beta_0$,
diameter $D = \diam(X)$. Set $\eta := 2\dGH(X,Y)$ and let $R$ realize $\Dist(R) = \eta$.
If $\eta \le D/2$, \Cref{lem:fat-align} gives a rigid motion $T$ with $\max_{(x,y)\in R}|T(x) - y| =
O(\eta/\sin^2\beta_0) = O(\eta)$; by surjectivity of $R$, $\dHiso(X,Y) \le \dH(T(X), Y) =
O(\dGH(X,Y))$. If instead $\eta > D/2$, aligning any one point of $X$ onto any one of $Y$ gives
$\dHiso(X,Y) \le \max(\diam X, \diam Y) \le D + \eta$ (using $\diam Y \le \diam X + \eta$ through
$R$), so $\dHiso/\dGH \le 2D/\eta + 2 < 6$. Either way \eqref{eq:conditional} holds with an
absolute $C_0$, which we do not optimize (\Cref{rem:constants}).
\end{proof}

On finite planar sets $\dHiso$ is computable exactly in polynomial time, and within an absolute
constant factor in near-cubic time \citep{ChewGoodrich1997, GMO1999}. These algorithms minimize
over orientation-preserving rigid motions; since $\dHiso$ ranges over all of $\Iso(\R^2)$, the
reflection component is covered by running each also on a mirror image of $X$ and taking the
smaller---a factor $2$, still polynomial.

\begin{theorem}[Constant factor on fat inputs]
\label{thm:main}
There is a polynomial-time algorithm that, given finite $X, Y \subset \R^2$, returns $\widehat d$
with $\dGH(X, Y) \le \widehat d \le K_{\mathrm{fat}}\,\dGH(X, Y)$ whenever $X$ or $Y$ is fat, where
$K_{\mathrm{fat}} = C_0$ is the constant of \Cref{lem:conditional}.
\end{theorem}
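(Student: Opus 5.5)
The algorithm computes $\widehat d := \dHiso(X, Y)$ exactly, by the cited planar algorithms \citep{ChewGoodrich1997, GMO1999}. Those algorithms minimize over orientation-preserving rigid motions, so we run them on both $X$ and a mirror image of $X$ and keep the smaller value. This takes polynomial time. The fatness test of \Cref{def:fat} is not even needed to produce the output; it only governs when the guarantee applies.

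The lower bound $\dGH(X, Y) \le \widehat d$ holds for every input, fat or not, by the observation in the introduction. Fix an isometry $T$ with $\dH(T(X), Y) = \varepsilon$. Pair each $x$ with a nearest point of $Y$ to $T(x)$, and each $y$ with a nearest point of $T(X)$. This is a correspondence. For two pairs $(x,y)$ and $(x',y')$ in it, the triangle inequality gives $|d_X(x,x') - d_Y(y,y')| \le \|T(x)-y\| + \|T(x')-y'\| \le 2\varepsilon$. Hence $\Dist \le 2\varepsilon$ and $\dGH \le \varepsilon$. Taking the infimum over $T$ gives $\dGH \le \dHiso = \widehat d$.

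For the upper bound, suppose first that $X$ is fat. Then \Cref{lem:conditional} gives $\widehat d = \dHiso(X, Y) \le C_0\,\dGH(X, Y)$ directly. If instead $Y$ is fat, both quantities are symmetric: $\dGH$ is symmetric in its arguments, and $\dHiso(X,Y) = \dHiso(Y,X)$ because $\dH(T(X), Y) = \dH(X, T^{-1}(Y))$ and $\Iso(\R^2)$ is a group. Applying \Cref{lem:conditional} to the pair $(Y, X)$ therefore gives the same bound with $K_{\mathrm{fat}} = C_0$.

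The only step needing care is this symmetry of $\dHiso$. The mild wrinkle is the reflection handling in the cited algorithm. Running on both $X$ and its mirror covers all of $\Iso(\R^2)$, so the computed value is exactly $\dHiso$ and no extra constant factor is lost. If only the near-cubic constant-factor version is used instead, its constant simply multiplies $C_0$, and the claim still holds with an absolute constant.
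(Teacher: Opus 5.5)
Your proof is correct and follows the paper's argument: return $\dHiso(X,Y)$ computed exactly, with the mirror image covering reflections; get $\dGH \le \dHiso$ from the nearest-neighbor correspondence; and invoke \Cref{lem:conditional}, treating the case of $Y$ fat by symmetry. You spell out the symmetry $\dHiso(X,Y)=\dHiso(Y,X)$, which the paper leaves implicit. For the approximate variant, the paper also notes that the lower bound $\dGH \le \widehat d$ survives because that algorithm returns $\dH(TX,Y)$ for an actual motion $T$.
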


\begin{proof}
The algorithm returns $\widehat d := \dHiso(X, Y)$, computed exactly in polynomial time
\citep{ChewGoodrich1997, GMO1999}; $\dGH \le \dHiso$ holds for all inputs. If $X$ is fat,
\Cref{lem:conditional} gives $\dHiso \le C_0\,\dGH$; the case of $Y$ fat is symmetric. Using
instead the faster constant-factor approximation of \citet{GMO1999}---which returns $\dH(TX,Y)$ for
the motion $T$ it finds, an over-approximation of $\dHiso$, so $\dGH \le \widehat d$
persists---multiplies $K_{\mathrm{fat}}$ by that constant.
\end{proof}

\subsection{The bijective distance on fat inputs}
\label{sec:bij-fat}

We turn to $\dGHbij$, the bottleneck Gromov--Hausdorff distance \citep{Schmiedl2017}---the natural
distance for labeled matching, and the object the rest of the paper pursues. It has no multiplicity
gap: that gap (\Cref{prop:mult-r1}) is an artifact of \emph{correspondences}, and a bijection,
preserving each sorted distance profile within $2\dGHbij$, cannot collapse a tight triple onto a
tight pair. Here $\dHiso$ does not help---it can fall well below $\dGHbij$ (the collinear family of
\Cref{prop:mult-r1} has $\dHiso \le \tfrac54\dGH$ \citep{MVW2024} yet $\dGHbij$ bounded away from
$0$)---so we
build a genuinely bijective estimator, controlled on a fat input by the \emph{same} anchor triangle.

\begin{definition}[Rigid-motion estimator]
\label{def:dmot}
Let $X, Y \subset \R^2$ be finite with $|X| = |Y|$ and $X$ fat. Fix a fat anchor triple of
$X$ (\Cref{lem:fat-triangle}) and enumerate the $O(n^3)$ rigid motions $T$ carrying it to a triple
of $Y$ (both reflections, \Cref{lem:fat-align}). For each $T$ let $\sigma_T$ be the \emph{bottleneck
assignment}---the bijection $X \to Y$ minimizing $\max_x |T(x) - \sigma(x)|$---and set
\[
\widehat d_{\mathrm{mot}}(X,Y) := \tfrac12 \min_T \Dist(\sigma_T).
\]
If only $Y$ is fat, swap the roles of $X$ and $Y$.
\end{definition}

On a fat input \Cref{alg:bij} returns $\widehat d_{\mathrm{mot}}$, the bottleneck proxy $\dBiso$ made
algorithmic---the isometries restricted to the enumerated alignments, each scored by the distortion
of its bottleneck bijection (so $\widehat d_{\mathrm{mot}} \ge \dGHbij$). It runs in $O(n^{5.5})$
time---$O(n^3)$ alignments, each a bottleneck matching ($O(n^{2.5})$, by binary search over the
$O(n^2)$ candidate thresholds and Hopcroft--Karp \citep{HopcroftKarp1973}).

\begin{proposition}[Fat inputs]
\label{prop:bij-fat}
If $X$ or $Y$ is fat, then $\widehat d_{\mathrm{mot}} \le
K_{\mathrm{fat}}' \cdot \dGHbij(X, Y)$ for an absolute constant
$K_{\mathrm{fat}}'$.
\end{proposition}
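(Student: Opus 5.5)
We aim to show that some enumerated motion $T$ has a bottleneck assignment $\sigma_T$ whose additive distortion is $O(\eta)$, where $\eta := 2\dGHbij(X,Y)$. Assume $X$ is fat; the case of $Y$ fat is symmetric. Fix an optimal bijection $\sigma^*$ with $\Dist(\sigma^*) = \eta$ and the anchor triple $s_1,s_2,s_3$ of \Cref{lem:fat-triangle}. The plan has three steps: use $\sigma^*$ to produce a good enumerated motion, control its bottleneck matching, and then convert bottleneck cost into distortion.

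\textbf{Large-$\eta$ case.} If $\eta > D/2$, where $D = \diam X$, any bijection works. Every pairwise distance in $X$ or $Y$ is at most $\max(\diam X,\diam Y) \le D + \eta$. Hence $\Dist(\sigma) \le D+\eta < 3\eta$ for every $\sigma$, and in particular $\widehat d_{\mathrm{mot}} \le \tfrac32\eta = 3\,\dGHbij$.

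\textbf{Small-$\eta$ case: a good motion.} Now assume $\eta \le D/2$. The enumeration in \Cref{def:dmot} includes the motion $T_0$ that carries $(s_1,s_2,s_3)$ onto $(\sigma^* s_1,\sigma^* s_2,\sigma^* s_3)$, in a fixed canonical way, for both reflection choices. We apply \Cref{lem:fat-align} to the correspondence $\sigma^*$, which has distortion $\eta$ and anchor angles at least $\beta_0$. It gives a motion $T_0$ with
\[
\max_x \|T_0 x - \sigma^* x\| \le c\,\eta/\sin^2\beta_0 =: c'\eta .
\]
The main obstacle is that \Cref{lem:fat-align} may produce some best-fit motion, which need not be the one defined from the triple. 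We handle this by comparison. The canonical triple-motion $T_1$ and the best-fit $T_0$ both send the three anchors to within $O(\eta)$ of their images, since the image triple is only an $O(\eta)$-perturbation of an isometric copy of the anchor triangle. The anchor triangle has angles bounded below and side lengths $\Theta(D)$, so two motions that nearly agree on it nearly agree everywhere on $X$ (a set of diameter $D$). Concretely, the rotation discrepancy is $O(\eta/(D\sin\beta_0))$, and its effect on points of $X$ is $O(\eta/\sin\beta_0)$. Therefore $\|T_1 x - \sigma^* x\| = O(\eta)$ for all $x \in X$. If the enumeration uses a different canonical convention, for example mapping $s_1$ exactly and aligning the direction $s_1s_2$, the same perturbation estimate applies.

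\textbf{Bottleneck assignment and distortion.} Let $\sigma := \sigma_{T_1}$ be the bottleneck assignment for $T_1$. Since $\sigma^*$ is a feasible bijection, $\max_x \|T_1 x - \sigma x\| \le \max_x \|T_1 x - \sigma^* x\| = O(\eta)$. By the triangle inequality and the fact that $T_1$ is an isometry, for all $x,x'$,
\[
\bigl|d_X(x,x') - d_Y(\sigma x,\sigma x')\bigr| \le \|T_1x-\sigma x\| + \|T_1x'-\sigma x'\| = O(\eta).
\]
Thus $\widehat d_{\mathrm{mot}} \le \tfrac12\Dist(\sigma) = O(\eta) = O(\dGHbij)$, with an absolute constant depending only on $\beta_0$. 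This proves the proposition with $K'_{\mathrm{fat}} = \max\{3,\; O(1/\sin^2\beta_0)\}$.
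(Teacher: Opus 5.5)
Your proof is correct and follows the paper's argument: the same split at $\eta = D/2$ with the trivial diameter bound, then \Cref{lem:fat-align} applied to $\sigma^*$, the bottleneck assignment inheriting the pointwise bound because $\sigma^*$ is a feasible competitor, and the triangle inequality converting that bound into distortion. Your comparison between a ``best-fit'' $T_0$ and the canonical $T_1$ is unnecessary, since \Cref{lem:fat-align} already constructs exactly the triple-determined motion (with the matching reflection) that \Cref{def:dmot} enumerates.
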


\begin{proof}
Assume $X$ is fat; let $\triangle s_1 s_2 s_3$ be its anchor triangle (\Cref{lem:fat-triangle}) of
diameter $D$. Let $\sigma^*$ be an optimal bijection,
$\mu := \Dist(\sigma^*) = 2\dGHbij$ (the bijective distortion), viewed as the
correspondence $R = \{(x, \sigma^* x)\}$ with $\Dist(R) = \mu$ (mirroring the regime split of
\Cref{lem:conditional}). If $\mu > D/2$, any bijection has $\Dist(\sigma) \le
\max(\diam X, \diam Y) \le D + \mu$ (with $D = \diam X$ the anchor diameter and
$\diam Y \le \diam X + \mu$ through $R$), so $\widehat d_{\mathrm{mot}} \le D/2 + \mu/2$,
already $O(\dGHbij)$ since $\dGHbij = \mu/2 > D/4$. If instead $\mu \le D/2$,
\Cref{lem:fat-align} (fatness $\beta_0$, distortion $\eta = \mu$), applied to $R$ with the
enumerated anchor pair, gives the rigid motion $T$ with $\max_x |T x - \sigma^*
x| = O(\dGHbij/\sin^2\!\beta_0)$. The bottleneck assignment $\sigma$ then satisfies
$|d_X(i,j) - d_Y(\sigma i, \sigma j)| = O(\dGHbij/\sin^2\!\beta_0)$ for all
pairs. Using $\sin\beta_0 = 1/\sqrt{65}$, both regimes give $\widehat
d_{\mathrm{mot}} \le K'_{\mathrm{fat}}\,\dGHbij$ with
$K'_{\mathrm{fat}}$ absolute. The proof in fact bounds the proxy itself: on a fat input the
fat-alignment motion witnesses $\dBiso = O(\dGHbij/\sin^2\beta_0)$---the bijective analogue of
\Cref{lem:conditional}, confining the dimension drop (\Cref{thm:counter-r2}) to near-collinear
inputs for the bottleneck route as well.
\end{proof}

\begin{remark}[On the constants]
\label{rem:constants}
The approximation factors are absolute---independent of $n$ and of the inputs---but unoptimized, and
the smallest achievable factor is open: the obstructions of \Cref{sec:dimension-drop} rule out the
naive proxies by \emph{unbounded} factors yet pin no target constant. The size is governed by the
anchor-triangle conditioning $1/\sin^2\beta_0$: the threshold $\delta > \Delta/4$ (\Cref{def:fat})
forces $\beta_0 = \arctan(1/8)$ ($1/\sin^2\beta_0 = 65$), and the trilateration of \Cref{lem:fat-align}
places each point within $O(\dGHbij/\sin^2\beta_0)$, so the fat guarantee (\Cref{thm:main,prop:bij-fat})
scales as $65$ times a universal constant---inherited by $K_{\mathrm{lam}}$ (\Cref{thm:laminar}) and
the planar $\dGH$ factor (\Cref{thm:dgh-uncond}). The threshold is the lever: $\delta > \Delta/t$ gives
$1/\sin^2\beta_0 = 1 + 4t^2$ ($t = 2$: $17$; $t = 1$: $5$), but a smaller $t$ admits fatter
near-collinear inputs, and the diameter-axis sort is reliable only for
$t \ge 4$. Pushing the sort to fatter clusters would lower the constant; we leave this---and the
exact optimum---open.
\end{remark}

Both distances are thus settled on every input with a fat side: by \Cref{def:fat} every set is fat
or near-collinear, so the only inputs left uncovered are those in which \emph{both} $X$ and $Y$ are
near-collinear---exactly where the barrier of \Cref{prop:barrier} lives, and where $\dGH$ and
$\dGHbij$ part ways. The rest of the paper treats that regime.

\section{The bijective distance and its reflection barrier}
\label{sec:bijective}

The fat case of the bijective distance is already settled---\Cref{prop:bij-fat}, alongside the
$\dGH$ guarantee of \Cref{thm:main}. Only the doubly-near-collinear regime remains, where $\dGHbij$
and $\dGH$ part ways. We now give the full bijective estimator---constant-factor on \emph{every}
finite planar input---and devote the rest of the section to that regime. Our main result is the
following.

\begin{theorem}[Approximation of $\dGHbij$ in the plane]
\label{thm:bij-uncond}
For all finite $X, Y \subset \R^2$ with $|X| = |Y|$, the half-distortion $\tfrac12\Dist(\sigma)$ of the
bijection $\sigma$ returned by \Cref{alg:bij} satisfies
\[
  \dGHbij(X,Y) \;\le\; \tfrac12\Dist(\sigma) \;\le\; K'\,\dGHbij(X,Y)
\]
for an absolute constant $K'$, and is computed in $O(n^{5.5})$ time.
\end{theorem}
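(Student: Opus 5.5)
The plan is a case split along the fat-or-collinear dichotomy of \Cref{def:fat}. \Cref{alg:bij} always outputs some bijection $\sigma$, so the lower bound $\dGHbij \le \tfrac12\Dist(\sigma)$ holds by \eqref{eq:dGHbij}. The content is the upper bound and the running time.

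\textbf{Step 1 (a fat side).} If $X$ or $Y$ is fat, \Cref{alg:bij} returns the bijection $\sigma_T$ that attains $\widehat d_{\mathrm{mot}}$ (\Cref{def:dmot}). \Cref{prop:bij-fat} then gives $\tfrac12\Dist(\sigma) \le K'_{\mathrm{fat}}\,\dGHbij$ directly. For the time bound, there are $O(n^3)$ enumerated alignments. Each is scored by a bottleneck matching, which costs $O(n^{2.5})$ via binary search over candidate thresholds plus Hopcroft--Karp, and then by an $O(n^2)$ distortion evaluation. The total is $O(n^{5.5})$. Fatness itself is tested in $O(n^2)$ by computing a diameter pair and the perpendicular spread.

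\textbf{Step 2 (both near-collinear).} Here I would invoke the dendrogram pass of \Cref{thm:laminar}, which I expect to provide a constant $K_{\mathrm{lam}}$ with $\tfrac12\Dist(\sigma) \le K_{\mathrm{lam}}\,\dGHbij$ on doubly-near-collinear inputs. The argument would go in three stages.
\begin{enumerate}[label=(\roman*)]
\item Build the $\pi$-dendrogram of each set by single linkage of the projections onto the diameter axis. An optimal bijection $\sigma^*$ of distortion $\mu = 2\dGHbij$ must respect every cluster separated by a gap larger than $C\mu$. Hence clusters at each scale correspond bijectively and have equal cardinalities; unlike the correspondence case, there is no multiplicity collapse.
\item Align each matched pair of clusters in its own frame. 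A fat cluster is handled by \Cref{prop:bij-fat} applied locally, with the fat-boundary ambiguity absorbed by \Cref{lem:fat-boundary}. A near-collinear cluster is handled by sorting along its own axis.
\item Recover the relative orientations and the per-node reflection signs by the global cuts of \Cref{lem:orient-cut,lem:sign-cut}. Each binary choice is charged to a distance discrepancy that $\sigma^*$ already witnesses. Summing along the dendrogram is scale-free because the gaps grow geometrically, so the errors accumulated across levels form a geometric series dominated by $O(\mu)$.
\end{enumerate}
Setting $K' = \max(K'_{\mathrm{fat}}, K_{\mathrm{lam}})$ then finishes the proof. The running time of the pass must also stay within $O(n^{5.5})$. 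I expect this to hold: the pass touches $O(n)$ dendrogram nodes, and each local fat alignment is bounded by the global budget through $\sum_i n_i^{5.5} \le n^{5.5}$.

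\textbf{Main obstacle.} I expect the difficulty to lie in step (iii) of Step 2: showing that the independently chosen reflection signs at every node combine consistently, with only a constant loss. \Cref{prop:bij-barrier} and \Cref{rem:per-node-signs} show that coarser sign carriers fail, so the sign-cut lemma has to be applied per node. I would first try to prove a one-level inequality: a wrong sign at a node of scale $s$ creates distortion $\Omega(s)$ against the sibling structure unless $s = O(\mu)$. I would then induct over the laminar family, using that the per-node errors are dominated by a geometric series.
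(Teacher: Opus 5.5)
Your case split is exactly the paper's proof. The lower bound holds because the output is a bijection. The fat branch follows from \Cref{prop:bij-fat}, with the $O(n^3)\cdot O(n^{2.5})$ count of \Cref{def:dmot}. The doubly-near-collinear branch follows from \Cref{thm:laminar}, and $K'=\max(K'_{\mathrm{fat}},K_{\mathrm{lam}})$. With \Cref{thm:laminar} cited as established, nothing more is needed. The gap is in the argument you sketch for that branch, which would not prove it.

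Single-linkage merge scales need not grow geometrically. Collinear points with consecutive spine gaps $1+\tfrac1n, 1+\tfrac2n, \dots, 1+\tfrac{n-1}{n}$ give a caterpillar dendrogram of depth $n-1$ with every merge scale in $(1,2)$---the dense ladder of \Cref{sec:assemble}. Placing nodes relative to their parents and adding per-level errors can then accrue $\Theta(n\mu)$. Even with geometric scales, per-level errors of size $\Theta(\mu)$ do not decay, so there is no geometric series to sum. The paper never sums over levels. Every base node is placed within $O(\mu)$ of $\sigma^*$ directly against \emph{global} references: one spine-rank matching with one orientation bit (\Cref{lem:equal-card}(iv), \Cref{lem:orient-cut}), and one parity cut for all signs (\Cref{lem:sign-cut}). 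Then \Cref{lem:pointwise}, distortion being a maximum over pairs rather than a sum, converts this into $\Dist(\sigma)\le\Dist(\sigma^*)+O(\mu)$ at any depth.

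The one-level inequality you would start from also fails, because a reflection sign is a second-order effect. Reflecting a node about its center is an isometry of the node, so a wrong sign shows only across nodes, through the swing $\Theta(\delta_\alpha\delta_\gamma/L_{\alpha\gamma})$ of \Cref{lem:sign-cut}. Take two vertical blocks with perpendicular coordinates $0,h,2h$ at spine positions $0$ and $L\gg h$, and $Y=X$ (so $\mu=0$). Reflecting one block relative to the other costs only about $2h^2/L$, which is $o(h)$ and $o(L)$. The first-order $\Theta(\Delta_\alpha)$ penalty you have in mind belongs to the longitudinal orientation (\Cref{lem:orient-cut}), not to the reflection sign.

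In place of your inequality the paper uses a threshold-free dichotomy. A sign whose swing clears $K\mu$ on some anchor or cross edge is recovered by the decreasing-weight parity union-find. This works because the strong preferences are values and products of the single vector $\varepsilon^*$, hence frustration-free. A sign that is strong on no edge costs $O(K\mu)$ either way. So no induction over the laminar family is needed.

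Finally, your runtime bound $\sum_i n_i^{5.5}\le n^{5.5}$ requires the trilaterated blocks to be disjoint. That is why \Cref{def:laminar} trilaterates only the \emph{maximal} fat sub-blocks. Trilaterating every fat node along a nested chain could exceed the $O(n^{5.5})$ budget.
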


\Cref{alg:bij} is the full estimator, a fat-or-collinear dispatch (\Cref{def:fat}); its
near-collinear branch is $\widehat\rho_{\mathrm{lam}}$, the flat multi-scale estimator.

\begin{algorithm}[ht]
\caption{Bijective estimator in the plane}
\label{alg:bij}
\begin{algorithmic}[1]
\Require finite $X, Y \subset \R^2$ with $|X| = |Y|$
\Ensure a bijection $X \to Y$ of constant-factor half-distortion \Comment{\Cref{thm:bij-uncond}}
\If{$X$ or $Y$ is fat} \Comment{\Cref{def:fat}}
  \State \Return the bijection realizing $\widehat d_{\mathrm{mot}}(X,Y)$ \Comment{\Cref{def:dmot,prop:bij-fat}}
\EndIf
\Statex \textit{$X,Y$ near-collinear --- the flat estimator $\widehat\rho_{\mathrm{lam}}$ (\Cref{def:laminar}):}
\State build the single-linkage $\pi$-dendrogram of $X$
\State \textsc{realize}: sort $X, Y$ along the spine (lower-distortion direction) and pair by rank, then trilaterate only the maximal fat blocks, overriding the sort there \Comment{\Cref{lem:equal-card,lem:orient-cut}; \Cref{prop:bij-fat}}
\State \textsc{sign}: a reflection sign per dendrogram node, nested signs composing top-down; the threshold-free parity cut processes anchor and cross edges in decreasing observed-gap order, fat blocks pinned by trilateration, rootless components minimizing their own distortion \Comment{\Cref{lem:sign-cut}}
\State \Return $\sigma_{\mathrm{lam}}$
\end{algorithmic}
\end{algorithm}

\begin{proof}
The lower bound $\dGHbij \le \tfrac12\Dist(\sigma)$ is immediate, $\sigma$ being a bijection. For the
upper bound, \Cref{def:fat} splits the input: the fat branch is \Cref{prop:bij-fat}, and the
near-collinear branch is the flat guarantee of \Cref{thm:laminar} (matching by \Cref{lem:equal-card},
orientations by \Cref{lem:orient-cut}, signs by \Cref{lem:sign-cut}, the reach-margin boundary points
absorbed by \Cref{lem:fat-boundary}). Either way $\tfrac12\Dist(\sigma) \le K'\,\dGHbij$. The runtime is
$O(n^{5.5})$---the fat branch by \Cref{def:dmot}, the near-collinear branch dominated by its per-block
trilaterations (dendrogram, sort, and parity cut are lower-order).
\end{proof}

\smallskip\noindent The near-collinear inputs carry all the difficulty. Sorting along the diameter
axis (\Cref{def:near-collinear})---the \emph{spine}---is the right local move, but cannot be
globalized (the \emph{reflection barrier}, \Cref{prop:bij-barrier} below); this forces a multi-scale
treatment. \Cref{sec:assemble} matches the dendrogram (\Cref{lem:equal-card}) and realizes each base
node along its diameter axis (sorted when thin; trilaterated when fat,
\Cref{prop:bij-fat}), recombining by the orientation- and parity-cuts (\Cref{lem:orient-cut,lem:sign-cut});
\Cref{sec:multiscale} then lifts this to a single flat pass into $\widehat\rho_{\mathrm{lam}}$
(\Cref{thm:laminar}), depth-independent and closed by fat-boundary absorption (\Cref{lem:fat-boundary}).

The natural globalization is the \emph{global sorted estimator} $\widehat\rho_{\mathrm{sort}}$: sort
$X$ and $Y$ each along a direction (ties broken by the perpendicular coordinate), match by rank, and
keep the best over all directions. It is a bijection, hence an upper bound on $\dGHbij$. But it fails: a
single global sort cannot reflect distinct sub-clusters independently, so a reflected sub-cluster makes
it overshoot $\dGHbij$ by $\Omega(\sqrt{\Delta/\dGHbij})$ (\Cref{prop:bij-barrier}). The remedy is to
sort one cluster at a time and recombine the pieces---the per-cluster construction of
\Cref{sec:assemble}.

\begin{proposition}[Reflection barrier for the bijective distance]
\label{prop:bij-barrier}
For every $\Delta > 0$ and $\eta \in (0, \Delta/100]$ there is a near-collinear pair
$X, Y \subset \R^2$ with $|X| = |Y| = 6$, $\diam(X) \in [\Delta, 2\Delta]$, and
$\dGHbij(X, Y) \le \eta$, on which
\[
\widehat\rho_{\mathrm{sort}}(X, Y) \;\ge\; \tfrac16\sqrt{\Delta\,\eta}
\;\ge\; \tfrac16\sqrt{\Delta\,\dGHbij(X,Y)}
\;=\; \Omega\left(\sqrt{\tfrac{\Delta}{\dGHbij(X,Y)}}\right)\cdot \dGHbij(X,Y).
\]
The same gap holds for every coarser sorted estimator (common-axis monotone, lifted-line).
\end{proposition}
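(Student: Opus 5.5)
The plan is to build $X$ and $Y$ from two well-separated, internally asymmetric clusters, with $Y$ reflecting one cluster across the spine. The intrinsic cost of that reflection is only second order, so it is cheap for $\dGHbij$. A single sort, however, must either keep or reverse the within-cluster order of \emph{both} clusters simultaneously, and so always mismatches one of them at first order.

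Concretely, fix $h := c\sqrt{\Delta\eta}$ for a small absolute $c$ to be tuned (roughly $c \in [\tfrac16, \tfrac14]$). Set
\[
A := \{(0,0),(0,h),(0,3h)\},\qquad B := \{(\Delta,0),(\Delta,h),(\Delta,3h)\},
\]
with $X := A \cup B$ and $Y := A \cup B'$, where $B' := \{(\Delta,0),(\Delta,-h),(\Delta,-3h)\}$ is $B$ reflected across the spine. Since $\eta \le \Delta/100$, we have $3h \ll \Delta$, so both sets are near-collinear (\Cref{def:fat}) and $\diam X \in [\Delta, 2\Delta]$. The steps are as follows.

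First, I would bound $\dGHbij \le \eta$ using the bijection that is the identity on $A$ and the reflection $B \to B'$. Within-cluster distances are preserved exactly. For a cross pair $(0,s)$, $(\Delta,t)$ with $s,t \in [0,3h]$, the two distances are $\sqrt{\Delta^2+(s-t)^2}$ and $\sqrt{\Delta^2+(s+t)^2}$. Their difference is at most $\tfrac{(s+t)^2-(s-t)^2}{2\Delta} = \tfrac{2st}{\Delta} \le \tfrac{18h^2}{\Delta}$. This is $\le 2\eta$ once $c^2 \le 1/9$, which gives $\dGHbij \le \eta$.

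Second, I would show that every sorted matching has distortion at least $2h$ up to constants, splitting on the sorting direction $u$. If $u$ is within a small angle of the spine, each cluster stays contiguous in the sort, and the vertical segments are ordered by the perpendicular tie-break or by the tilt of $u$. In either case the order is by height, with the \emph{same} orientation in both clusters of a given set. On $A$ the $Y$-order agrees with the $X$-order, but on $B'$ the height order is reversed relative to the reflection. Hence rank-matching pairs the heights $0,h,3h$ of $B$ with heights $-3h,-h,0$ of $B'$ in increasing order, sending the internal gap $h$ to $2h$ (or symmetrically, $A$ itself is mismatched if the orientation conventions flip). This gives distortion $\ge h$, so $\widehat\rho_{\mathrm{sort}} \ge h/2 \ge \tfrac16\sqrt{\Delta\eta}$ after choosing $c$.

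If instead $u$ is far from the spine, the projections of $A$ and $B$ interleave or swap, and some rank pair sends a within-cluster pair (distance $\le 3h$) to a cross pair (distance $\approx \Delta$). That yields distortion $\Omega(\Delta) \gg h$.

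Finally, the coarser estimators (common-axis monotone, lifted-line) induce bijections that are monotone along one direction, so they fall under the same dichotomy: the within-cluster orders in the two clusters are forced to be coherent. The argument then repeats verbatim.

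I expect the main obstacle to be the exhaustive direction analysis in the second step, specifically the transition regime. There, $u$ is tilted enough that the $\Delta$-separated clusters begin to interleave in projection, while the within-cluster orders are still determined by height. I would first handle it by observing that the within-cluster order type depends only on the sign of the perpendicular component of $u$ and on the tie-break. That order is therefore identical for $X$ and $Y$ at every direction, so the reversal mismatch on one of $A$, $B'$ persists at \emph{every} direction regardless of interleaving. Interleaving can only add further error and cannot cure this mismatch. This removes the need for a quantitative threshold on the angle.
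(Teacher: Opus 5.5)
Your proposal is correct and follows the paper's proof essentially step for step: the same two vertical clusters at heights $0,h,3h$ with $Y$ reflecting the far one, the same $2st/\Delta \le 18h^2/\Delta$ cross-pair bound for the reflecting bijection, and the same lower-bound dichotomy. A rank matching that splits a block costs $\ge \Delta - 3h$, while a block-preserving one is monotone within blocks whose height orders share a single orientation, so one block pairs gap pattern $(h,2h)$ with $(2h,h)$; this dichotomy (the paper's ``$\Dist(\sigma) < L/2$ forces block preservation'') is the clean way to dispose of your transition regime, with no angle threshold needed. One arithmetic slip: your constraints $c^2 \le 1/9$ and $h/2 \ge \tfrac16\sqrt{\Delta\eta}$ force exactly $c = \tfrac13$, which is the paper's $h = \tfrac13\sqrt{\Delta\eta}$, whereas any $c \le \tfrac14$ would only yield $\widehat\rho_{\mathrm{sort}} \ge \tfrac18\sqrt{\Delta\eta}$.
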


\begin{proof}
Set $L = \Delta$ and $h = \tfrac13\sqrt{L\eta} \le L/30$, and place two vertical three-point
clusters at spines $0$ and $L$ (\Cref{fig:bij-barrier}): $X = \{(0,p), (L,p) : p \in \{0,h,3h\}\}$,
and $Y$ the same with the $L$-cluster's perpendiculars negated, so both have spread
$\delta = 3h + O(h^2/L) \le \Delta/4$. The identity bijection is intra-cluster isometric
(negation preserves $\{h,2h,3h\}$) and on a cross pair $(0,p),(L,q)$ changes $d^2$ by
$4pq \le 36h^2$ over distance $\ge L$, giving $\dGHbij \le 9h^2/L = \eta$. Conversely, any
sorted bijection $\sigma$ with $\Dist(\sigma) < L/2$ is block-preserving (two $X$-points at
distance $\le 3h$ cannot split across $Y$-clusters at distance $\ge L - 6h$), and since each
cluster is vertical one direction sorts both $X$-clusters with a common sign, likewise for
$Y$. The perpendicular-gap sequences are $(h, 2h)$ for both $X$-clusters and one $Y$-cluster
but $(2h, h)$ for the reflected one, so for every pairing and sign choice exactly one matched
pair is mismatched, forcing a gap-$h$ pair onto a gap-$2h$ pair and $\Dist(\sigma) \ge h$.
Hence $\widehat\rho_{\mathrm{sort}} \ge h/2 = \tfrac16\sqrt{L\eta} =
\Omega(\sqrt{\Delta/\dGHbij}) \cdot \dGHbij$; the common-axis monotone and lifted-line
variants are special cases of the same gap mismatch.
\end{proof}

\begin{figure}[ht]
\centering
\begin{tikzpicture}[x=1cm,y=1cm,
  Xpt/.style={blue!70!black, fill=blue!70!black, circle, inner sep=1.3pt},
  Ypt/.style={red!70!black, fill=red!70!black, circle, inner sep=1.3pt},
  ax/.style={-{Latex[length=1.6mm]}, gray!60, semithick},
  gp/.style={{Latex[length=1.1mm]}-{Latex[length=1.1mm]}, gray!80}]

\begin{scope}
  \node[font=\itshape, blue!70!black] at (-1.05,0.6) {$X$};
  \draw[ax] (-0.6,0) -- (3.3,0) node[below=1pt, font=\scriptsize, black] {$\pi$};
  \draw[ax] (-0.6,0) -- (-0.6,1.45) node[above=-1pt, font=\scriptsize, black] {$\pi^\perp$};
  \foreach \x in {0,2.4}{
    \draw[gray!30, dotted] (\x,0)--(\x,1.3);
    \foreach \y in {0,0.4,1.2} \node[Xpt] at (\x,\y){};
  }
  \node[font=\scriptsize, below=3pt] at (0,0) {$0$};
  \node[font=\scriptsize, below=3pt] at (2.4,0) {$L$};
  \draw[gp] (2.63,0.03)--(2.63,0.37) node[midway, right=1pt, font=\scriptsize] {$h$};
  \draw[gp] (2.63,0.43)--(2.63,1.17) node[midway, right=1pt, font=\scriptsize] {$2h$};
\end{scope}

\begin{scope}[xshift=5cm]
  \node[font=\itshape, red!70!black] at (-1.05,0.6) {$Y$};
  \draw[ax] (-0.6,0) -- (3.3,0) node[below=1pt, font=\scriptsize, black] {$\pi$};
  \draw[gray!30, dotted] (0,0)--(0,1.3);
  \foreach \y in {0,0.4,1.2} \node[Ypt] at (0,\y){};
  \draw[gray!30, dotted] (2.4,0)--(2.4,-1.3);
  \foreach \y in {0,-0.4,-1.2} \node[Ypt] at (2.4,\y){};
  \node[font=\scriptsize, below=3pt] at (0,0) {$0$};
  \node[font=\scriptsize, above=3pt] at (2.4,0) {$L$};
  \draw[gp] (2.63,-0.03)--(2.63,-0.37) node[midway, right=1pt, font=\scriptsize] {$h$};
  \draw[gp] (2.63,-0.43)--(2.63,-1.17) node[midway, right=1pt, font=\scriptsize] {$2h$};
\end{scope}
\end{tikzpicture}
\caption{The reflection barrier (\Cref{prop:bij-barrier}): along the spine $\pi$, both clusters of
$X$ carry the perpendicular gap pattern $(h, 2h)$, while $Y$ reflects its $L$-cluster across the
spine to $(2h, h)$. No global $\pi$-sort with a single $\pi^\perp$-reflection reproduces both, so
$\widehat\rho_{\mathrm{sort}} \ge \tfrac16\sqrt{\Delta\eta}$ though $\dGHbij \le \eta$.}
\label{fig:bij-barrier}
\end{figure}

\subsection{Matching the dendrogram}
\label{sec:assemble}

A near-collinear input groups along the spine into clusters: at scale $s$, its \emph{$s$-clusters} are
the maximal runs of points with consecutive $\pi$-gaps $\le s$. No single scale resolves them
cleanly---when the spine gaps form a dense ladder, every threshold leaves some cluster straddling
it---so the right object is all scales at once: the $s$-clusters over all thresholds $s$ form the
\emph{single-linkage $\pi$-dendrogram} of $X$ (\Cref{fig:dendrogram}), a laminar tree whose every node
is a cluster at its own scale. This subsection \emph{matches the dendrogram}---recovering how the optimal $\sigma^*$ carries it
node for node onto $Y$'s; \Cref{sec:multiscale} then reads the match off in one flat pass
(\Cref{thm:laminar}).

The matching is the heart. \Cref{lem:equal-card} carries every resolved node of $X$'s dendrogram to a
node of $Y$'s, consistently across all scales at once---the per-scale matchings nest, so no single
threshold need be fixed. Its matched units are the resolved nodes (\Cref{def:resolved}); their
atomic pieces are the \emph{base nodes}---the nodes $\sigma^*$ reflects uniformly, generically a
minimal resolved node each---rigidly aligned under $\sigma^*$ (\Cref{lem:cluster-rigidity}), with only
three freedoms: a rigid alignment and two $\pm1$ flips. The flips are not symmetric: the
\emph{orientation}, the longitudinal $\pi \mapsto -\pi$, is recovered by one \emph{global} cut
(\Cref{lem:orient-cut}) because reversing spine order is first-order costly; the \emph{reflection
sign}, the transverse $\pi^\perp \mapsto -\pi^\perp$, must be carried per \emph{base node}---a resolved
node pooling several when $\sigma^*$ reflects its base nodes apart (\Cref{rem:per-node-signs})---and is
set by one global cut (\Cref{lem:sign-cut}) because it is an internal isometry visible
only at second order, on which a single global sign fails (\Cref{prop:bij-barrier}).
Recovering all three places every point within $O(\dGHbij)$ of its $\sigma^*$-image, which
\Cref{lem:pointwise} (distortion is a \emph{maximum} over pairs, not a sum) makes independent of the
number of nodes or scales---the depth-independence that keeps \Cref{sec:multiscale} flat.

\begin{figure}[ht]
\centering
\begin{tikzpicture}[x=1cm,y=1cm]
  \tikzset{pt/.style={circle,fill=black,inner sep=1.1pt},
           cut/.style={circle,draw,fill=white,inner sep=1.3pt}}
  \draw[->,gray!70] (-0.15,-0.05)--(-0.15,3.05);
  \node[gray,above] at (-0.15,3.05){\scriptsize scale $s$};
  \draw[gray!55] (0.2,0)--(7.7,0);
  \node[gray,right] at (7.72,0){\scriptsize spine $\pi$};
  \foreach \x in {0.6,1.1,1.6, 3.8,4.3, 6.8,7.3}
     \node[pt] at (\x,0){};
  \foreach \x in {0.6,1.1,1.6} \draw[gray!75] (\x,0)--(\x,0.7);
  \draw[gray!75] (0.6,0.7)--(1.6,0.7);  \draw[gray!75] (1.1,0.7)--(1.1,2.0);
  \foreach \x in {3.8,4.3} \draw[gray!75] (\x,0)--(\x,0.7);
  \draw[gray!75] (3.8,0.7)--(4.3,0.7); \draw[gray!75] (4.05,0.7)--(4.05,2.0);
  \foreach \x in {6.8,7.3} \draw[gray!75] (\x,0)--(\x,0.7);
  \draw[gray!75] (6.8,0.7)--(7.3,0.7); \draw[gray!75] (7.05,0.7)--(7.05,2.7);
  \draw[gray!75] (1.1,2.0)--(4.05,2.0); \draw[gray!75] (2.575,2.0)--(2.575,2.7);
  \draw[gray!75] (2.575,2.7)--(7.05,2.7);
  \node[below=1pt] at (1.1,-0.08){\scriptsize $X_1$};
  \node[below=1pt] at (4.05,-0.08){\scriptsize $X_2$};
  \node[below=1pt] at (7.05,-0.08){\scriptsize $X_3$};
  \node[gray!70,right,fill=white,inner sep=1pt] at (2.62,2.06){\scriptsize $X_1\!\cup\!X_2$};
  \node[gray!70,above,fill=white,inner sep=1pt] at (4.81,2.72){\scriptsize $X$};
  \draw[blue!55!black,dashed] (-0.15,1.35)--(7.6,1.35);
  \node[blue!55!black,left] at (-0.34,1.35){\scriptsize $s_1$};
  \foreach \x in {1.1,4.05,7.05} \node[cut,blue!55!black] at (\x,1.35){};
  \draw[red!60!black,dashed] (-0.15,2.35)--(7.6,2.35);
  \node[red!60!black,left] at (-0.34,2.35){\scriptsize $s_2$};
  \foreach \x in {2.575,7.05} \node[cut,red!60!black] at (\x,2.35){};
\end{tikzpicture}
\caption{The single-linkage $\pi$-dendrogram of $X$ (\Cref{lem:equal-card}). As $s$ grows the
spine groups merge: a cut at $s_1$ leaves $X_1,X_2,X_3$; one at $s_2 > s_1$ the nested $X_1\!\cup\!X_2$
and $X_3$. Each node is a cluster at its own scale; the flat estimator (\Cref{def:laminar})
matches the whole laminar family in one depth-independent pass, never fixing a single scale.}
\label{fig:dendrogram}
\end{figure}

\begin{definition}[Resolved node]
\label{def:resolved}
Let $X, Y \subset \R^2$ be near-collinear (\Cref{def:near-collinear}), $\sigma^*$ optimal, and
$\mu := \Dist(\sigma^*)$. A node $N$ of the single-linkage $\pi$-dendrogram of $X$---a cluster at its
own scale $s_N$ (its largest internal $\pi$-gap)---is
\emph{resolved} if the two $\pi$-gaps flanking $N$ exceed the reach margin
$\sqrt{s_N^2 + \delta_X^2 + \delta_Y^2} + 2\mu$ ($\delta_X, \delta_Y$ the global perpendicular spreads
of $X, Y$); resolved nodes are the units the matching carries (\Cref{lem:equal-card}).
\end{definition}

A resolved node's flanking gaps are too wide for $\sigma^*$ to bridge---even with the dimension drop
bending an $X$-gap into $Y$'s perpendicular extent (hence the $\delta_Y$ term)---so it is carried
across intact. The next lemma makes this faithful and consistent at every scale at once.

\begin{lemma}[Consistent multi-scale matching]
\label{lem:equal-card}
With $X, Y$, the optimal $\sigma^*$, and $\mu := \Dist(\sigma^*)$ as in \Cref{def:resolved}:
\begin{enumerate}[label=(\roman*)]
\item For every resolved node $N$, $\sigma^*(N)$ equals a cluster of $Y$'s
dendrogram of equal cardinality, and $N \mapsto \sigma^*(N)$ is monotone up to the global orientation
reversal.
\item The correspondence preserves inclusion: resolved nodes
$N \subseteq M$ satisfy $\sigma^*(N) \subseteq \sigma^*(M)$. So $\sigma^*$ matches the whole laminar
family---down to the minimal resolved nodes---faithfully and consistently at all scales at once.
\item For a node $N$ of scale $s_N$, the match is undetermined only
at a separating gap inside \emph{that} node's reach margin $(s_N,\,\sqrt{s_N^2+\delta_X^2+\delta_Y^2}+2\mu]$
(width $O((\delta_X^2+\delta_Y^2)/s_N+\mu)$); outside it (i)--(ii) resolve $N$, so the sole residual is the
cluster membership of the points flanking such a gap, closed by \Cref{lem:fat-boundary}.
\item The global spine-rank matching carries every resolved node, at every scale, onto its
$\sigma^*$-image, reproducing the node correspondence $\Phi$.
\end{enumerate}
The proof is in \Cref{sec:geom-appendix}.
\end{lemma}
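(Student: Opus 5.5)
The argument rests on one geometric inequality, which I will call a \emph{gap transfer} bound. Let $x, x' \in X$ with $|\pi(x) - \pi(x')| = g$, and let $y = \sigma^* x$, $y' = \sigma^* x'$. Then
\[
d_Y(y,y') \ge d_X(x,x') - \mu \ge g - \mu ,
\qquad
d_X(x,x') \le \sqrt{g^2 + \delta_X^2} .
\]
Hence
\[
|\pi_Y(y) - \pi_Y(y')| \ge \sqrt{d_Y(y,y')^2 - \delta_Y^2} .
\]
Symmetrically, a small $\pi$-gap in $X$ maps to a $Y$-distance of at most $\sqrt{g^2+\delta_X^2}+\mu$. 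For a resolved node $N$, consecutive internal gaps are at most $s_N$, so every consecutive pair of $N$ lands within $Y$-distance $\sqrt{s_N^2+\delta_X^2}+\mu$. Its $\pi_Y$-separation is therefore at most that quantity.

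To prove (i), I would argue by contradiction. Suppose the spine interval spanned by $\sigma^*(N)$ contains a point $y'' = \sigma^* z$ with $z \notin N$. Every $z \notin N$ lies beyond a flanking gap exceeding the reach margin $\sqrt{s_N^2+\delta_X^2+\delta_Y^2}+2\mu$. The gap transfer bound then forces the $\pi_Y$-distance from $y''$ to every image of $N$ to exceed $\sqrt{s_N^2+\delta_X^2}+\mu$, after absorbing $\delta_Y$ through $\sqrt{a^2-\delta_Y^2}$. But the images of $N$ form a chain with steps no larger than that quantity, and $y''$ sits inside the chain's span, so it must be within one step of some image. This is the contradiction. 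So $\sigma^*(N)$ is a $\pi_Y$-contiguous run, separated from the rest of $Y$ by gaps larger than its own internal gaps. It is therefore a single-linkage cluster of $Y$. Cardinality is automatic because $\sigma^*$ is a bijection.

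Monotonicity up to global reversal follows from the same bound applied to three resolved nodes $A,B,C$ in spine order. If $\sigma^*$ reversed the middle one relative to the outer ones, some $X$-distance of order the flanking gap would map to a $Y$-distance differing by more than $\mu$. I expect this step to be the main obstacle, because the reversal must be ruled out uniformly and not just at the top scale. I would try to reduce it to the diameter pair: orientation is pinned by the image of the $X$-diameter pair, which \Cref{def:near-collinear} forces to be nearly a $Y$-diameter pair, and monotonicity is then propagated pairwise.

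Part (ii) is immediate once $\sigma^*$ is applied to sets: $N \subseteq M$ gives $\sigma^*(N) \subseteq \sigma^*(M)$, and both images are clusters by (i). For (iii), note that the margin condition is used only in the contradiction step of (i). A separating gap of size exactly in $(s_N, \sqrt{s_N^2+\delta_X^2+\delta_Y^2}+2\mu]$ can be bridged, and only the points adjacent to it can change cluster membership. The width estimate follows from $\sqrt{s^2+a}-s \le a/(2s)$. For (iv), the images of resolved nodes are contiguous, equal-cardinality, and monotonically ordered by (i) and (ii). Sorting $X$ and $Y$ along their spines in the orientation fixed by the diameter pair and pairing by rank therefore sends each resolved node, at every scale, onto the same block as $\sigma^*$, which reproduces $\Phi$. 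Ties and orientation choices are deferred to \Cref{lem:orient-cut}.
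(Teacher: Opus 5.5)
You identify the images of resolved nodes as $Y$-clusters exactly as the paper does. The chain of images has $\pi_Y$-steps at most $s' := \sqrt{s_N^2+\delta_X^2}+\mu$, and every outside point sits more than $\sqrt{(A+\mu)^2-\delta_Y^2}\ge s'$ away, with $A := \sqrt{s_N^2+\delta_X^2+\delta_Y^2}$. Parts (ii) and (iii) also match. There are, however, two genuine gaps.

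\textbf{Monotonicity.} You only sketch this, and what it requires is that a node's \emph{image} lies between its neighbors' images. A node's internal reversal is a separate matter, handled by \Cref{lem:orient-cut}. The missing idea is a three-cluster betweenness inequality. Suppose resolved clusters $C_1<C_2<C_3$ had images in $\pi_Y$-order $\sigma^*(C_1),\sigma^*(C_3),\sigma^*(C_2)$. Take representatives $c_i$ with $y_i=\sigma^*(c_i)$, write $g$ for spine gaps, and use $g\le d\le g+\delta$. Additivity of spine gaps then gives
\[
g(c_1,c_2)+\delta_X+\mu \;\ge\; g(y_1,y_2) \;=\; g(y_1,y_3)+g(y_3,y_2) \;>\; g(c_1,c_2)+g(c_2,c_3)-\mu-\delta_Y+s' ,
\]
so $g(c_2,c_3)<\delta_X+\delta_Y+2\mu-s'\le \delta_Y+2\mu\le A+2\mu$. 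This contradicts the flank.

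The isolation bound $g(y_3,y_2)>s'\ge\delta_X$ from your contiguity step is essential here. Without it you get only $g(c_2,c_3)\le\delta_X+\delta_Y+2\mu$. That does not contradict a flank $>A+2\mu$ when $s_N$ is small: for $s_N=0$ and $\delta_X=\delta_Y$, $A=\sqrt2\,\delta_X$. Your diameter-pair fallback does not supply this step. A reference pair fixes at most the global sign. Turning distance bounds into an order statement still requires additivity of spine gaps on a triple, i.e.\ the inequality above.

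\textbf{Part (iv).} The same omission breaks it. Rank matching sends $N$ onto $\sigma^*(N)$ only if $\sigma^*$ maps the $X$-points on one side of $N$ exactly onto the $Y$-points on the same side of $\sigma^*(N)$. Contiguity, equal cardinality, and monotone order of the resolved nodes do not give this, because at a given scale the resolved nodes need not cover $X$. A point $x$ left of $N$ that lies in no resolved node is untouched by monotonicity among resolved nodes. If $\sigma^*(x)$ landed right of $\sigma^*(N)$, every rank would shift.

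The paper rules this out with the same triple inequality applied to $W<x<N$. There $x$ is treated as a spread-zero singleton, and $W$ is a resolved reference on $x$'s far side (or the global orientation when $x$ is extreme). This forces $g(x,N)<A+2\mu$, against $N$'s flank. Hence unresolved points may permute among themselves but never leapfrog a resolved node. Without this step, your rank-matching conclusion does not follow.
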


With the dendrogram matched node for node (\Cref{lem:equal-card}), only the geometry inside each
matched \emph{base node} remains, and \Cref{lem:cluster-rigidity} pins it: $\sigma^*$ on a base node
is a rigid motion plus a single $\pm1$ reflection across the spine, within $O(\mu)$ when
well-conditioned and $O(\mu\,\diam/\delta_\alpha)$ otherwise. The reflection sign alone escapes local
recovery---an isometry invisible within the node, it shows only across nodes, its swing often inside
the $O(\mu)$ noise---so it is fixed globally by \Cref{lem:sign-cut}.

\begin{lemma}[Reflection structure of a node, and base nodes]
\label{lem:cluster-rigidity}
Let $\sigma^*$ be optimal, $\mu := \Dist(\sigma^*)$, and $X_\alpha$ any node of $X$'s
dendrogram, read along its own axis, with perpendicular spread
$\delta_\alpha$, diameter $\diam X_\alpha$, and perpendicular mean
$\overline{\pi^\perp}_\alpha$. There are a rigid motion $z \mapsto R_\alpha z + w_\alpha$
($R_\alpha \in SO(2)$) and signs $\varepsilon_i \in \{\pm 1\}$ such that, for all $x_i \in X_\alpha$,
\[
R_\alpha\,\sigma^*(x_i) + w_\alpha = \bigl(\pi_X(x_i) + \xi^\alpha_i,\ \
\varepsilon_i\bigl(\pi^\perp_X(x_i) - \overline{\pi^\perp}_\alpha\bigr) + \eta^\alpha_i\bigr),
\]
with $|\xi^\alpha_i| = O(\mu)$, $|\eta^\alpha_i| = O(\mu) + O(\mu\,\diam(X_\alpha)/\delta_\alpha)$ (the
second term absent when $X_\alpha$ is \emph{well-conditioned}, $\delta_\alpha = \Omega(\diam X_\alpha)$,
or fat), and each
$\varepsilon_i$ determined once $|\pi^\perp_X(x_i) - \overline{\pi^\perp}_\alpha| > \theta_\alpha :=
C\mu\,\diam(X_\alpha)/\delta_\alpha$ ($C$ absolute), free below.

A \emph{base node} is a maximal sub-node whose determined signs---those above $\theta_\alpha$---agree,
with common value $\varepsilon^*_\alpha$, its \emph{reflection sign}; $\sigma^*$ reflects distinct base
nodes independently (\Cref{rem:per-node-signs}). A fat node, its signs pinned alike, is itself a base node.
Each base node is then placed within $O(\mu)$ of $\sigma^*$, whatever the internal gap scale: a
near-collinear one by the spine sort, its sign $\varepsilon^*_\alpha$ fixing the pairing of points inside
$O(\mu)$-wide $\pi$-ties; a fat one by its trilateration.
The proof is in \Cref{sec:geom-appendix}.
\end{lemma}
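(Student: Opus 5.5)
The plan is to treat $\sigma^*$ restricted to $X_\alpha$ as a near-isometry and recover the rigid-motion-plus-reflection form in three steps: an anchor-based alignment, longitudinal rigidity, and transverse rigidity. Throughout, $X_\alpha$ is read in its own diameter frame, with endpoints $p,q$ realizing $\diam X_\alpha =: D_\alpha$. By definition of $\mu$, every pairwise distance inside $X_\alpha$ is preserved up to $\mu$. If $X_\alpha$ is fat, \Cref{lem:fat-triangle} gives an anchor triangle. Then \Cref{lem:fat-align}, applied to the correspondence $\{(x,\sigma^* x)\}$ restricted to $X_\alpha$, yields an isometry placing every point within $O(\mu/\sin^2\beta_0)=O(\mu)$. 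An isometry in $O(2)$ is a rotation composed with at most one reflection, so all $\varepsilon_i$ share one value, every sign is pinned, and the node is a single base node. This settles the fat clause and the final placement claim for fat base nodes.

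For the near-collinear case, I first choose $R_\alpha,w_\alpha$ so that $\sigma^*(p)$ goes to $(0,\cdot)$, $\sigma^*(q)$ lies on the positive $\pi$-axis, and the perpendicular mean is matched. Since $|d(\sigma^*p,\sigma^*q)-D_\alpha|\le\mu$, the images are "two-point trilaterated" from $p,q$. For each $x_i$ I set $a=|x_i-p|$ and $b=|x_i-q|$; the image is within $\mu$ of both distances. The standard two-circle formula gives the longitudinal coordinate $(a^2-b^2+D^2)/(2D)$, which is Lipschitz with constant $O(1)$ in $(a,b)$ because $a,b\le D$. This yields $|\xi^\alpha_i|=O(\mu)$, independent of conditioning.

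The transverse coordinate is $\pm\sqrt{a^2-\pi^2}$, which is determined only up to sign and is unstable near the axis. Perturbing $a$ by $\mu$ moves $a^2-\pi^2$ by $O(\mu D_\alpha)$. The absolute value $|\pi^\perp|$ is therefore accurate to $O(\mu D_\alpha/|\pi^\perp|)$ when $|\pi^\perp|$ is large, and to $O(\sqrt{\mu D_\alpha})$ otherwise. To reach the stated $O(\mu)+O(\mu D_\alpha/\delta_\alpha)$, I will not measure from the chord $pq$. Instead I will use a third anchor $r$ at perpendicular distance $\gtrsim\delta_\alpha$, which exists by definition of the spread, and trilaterate from the thin triangle $pqr$. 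Its smallest angle is $\asymp\delta_\alpha/D_\alpha$, so the trilateration error is $O(\mu D_\alpha/\delta_\alpha)$, and it disappears into $O(\mu)$ when the node is well-conditioned. Having placed the anchor triangle with a fixed orientation determines one global reflection choice. The remaining freedom for a point is whether its image lies on the same side as its preimage relative to the mean line; that is the sign $\varepsilon_i$. When $|\pi^\perp_X(x_i)-\overline{\pi^\perp}_\alpha|>\theta_\alpha$, the two candidate positions differ by more than the error, so $\varepsilon_i$ is forced. Below $\theta_\alpha$, both choices are within the error, so $\varepsilon_i$ is free.

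Base nodes are then defined as maximal sub-nodes with agreeing determined signs; this is bookkeeping once the sign dichotomy is in place. For the placement claim on a near-collinear base node, I use the fact that all determined signs equal $\varepsilon^*_\alpha$, so applying that one reflection places every point within $O(\mu)$ transversely and longitudinally. The spine sort reproduces $\sigma^*$ up to permutations among points whose $\pi$-coordinates agree within $O(\mu)$. Swapping such points moves each by $O(\mu)$ longitudinally, and the sign fixes the transverse pairing. The main obstacle I anticipate is exactly this step: showing the error is $O(\mu)$ \emph{whatever the internal gap scale}, rather than $O(\mu D_\alpha/\delta_\alpha)$, for ill-conditioned base nodes. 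I will try to argue locally, trilaterating each point from neighbours at comparable perpendicular offsets within the same base node, so that the conditioning ratio is governed by local rather than global geometry.
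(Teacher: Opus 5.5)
\textbf{Main gap: the final placement clause.} Your fat case and the longitudinal bound $|\xi^\alpha_i| = O(\mu)$ from the two diameter endpoints are the same as the paper's. The genuine gap is the last clause of the statement: each base node placed within $O(\mu)$ of $\sigma^*$ whatever its internal gap scale. You leave this open, and the repair you sketch (local trilateration to shrink the perpendicular error) targets a quantity that cannot be made $O(\mu)$.

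The residual $O(\mu\,\diam(X_\alpha)/\delta_\alpha)$ is intrinsic. Put the perpendicular extremes at both spine ends of $X_\alpha$ and a thin run near the spine in between. As in the dimension-drop family, the run's images can be bent coherently by $\Theta(\mu\,\diam(X_\alpha)/\delta_\alpha)$ (when $\delta_\alpha^2 \gtrsim \mu\,\diam X_\alpha$) at additive cost $O(\mu)$. So $\max_i|\eta^\alpha_i|$ is of that order for every rigid motion, and the run offers no fat triangle to trilaterate against.

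The paper never tries to shrink $\eta^\alpha_i$. ``Placed within $O(\mu)$ of $\sigma^*$'' means $\|\sigma(x)-\sigma^*(x)\| = O(\mu)$ for the sort's bijection $\sigma$. This compares two points of $Y$; it does not reconstruct coordinates. Both bijections carry $X_\alpha$ onto the same set $\sigma^*(X_\alpha)$. Since $|\xi^\alpha_i| = O(\mu)$, $\sigma^*$ is spine-monotone outside $O(\mu)$-wide $\pi$-ties, so the two bijections differ only by permutations inside ties, whose pairing $\varepsilon^*_\alpha$ is meant to fix. The bending is carried identically by both and cancels in within-node distances. You wrote this comparison down yourself and then conflated it with bounding $\eta^\alpha_i$. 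The step that actually needs an argument is this: inside a tie, ordering by $\varepsilon^*_\alpha\,\pi^\perp$ reproduces $\sigma^*$'s pairing up to $O(\mu)$ moves.

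\textbf{The transverse step.} Here you depart from the paper: you use one global third anchor $r$, whereas the paper anchors each $x_i$ against whichever perpendicular extreme $x_\pm$ is farther from it. Two things need fixing.

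First, ``smallest angle $\asymp \delta_\alpha/\diam X_\alpha$, hence error $O(\mu\,\diam(X_\alpha)/\delta_\alpha)$'' does not follow from the available tools. The image $\sigma^*(r)$ is itself located only to within $O(\mu\,\diam(X_\alpha)/\delta_\alpha)$ perpendicularly. So \Cref{lem:trilat} (with its $\alpha_i$ inflated by that amount), or \Cref{lem:fat-align} (with $\sin\beta \asymp \delta_\alpha/\diam X_\alpha$), gives only $O(\mu\,\diam(X_\alpha)^2/\delta_\alpha^2)$. The sharper bound is true but needs its own computation. It is trivial when $\delta_\alpha^2 \lesssim \mu\,\diam X_\alpha$. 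Otherwise every perpendicular coordinate is $O(\delta_\alpha)$, so $r$'s perpendicular error moves squared distances by only $O(\mu\,\diam X_\alpha)$. Writing $u := x_i$ and $v$ for its aligned image, with chord-relative perpendicular coordinates $u_\perp, v_\perp, r_\perp$, subtract the squared-distance relations to $p$ and to $r$: this gives $2r_\perp(v_\perp - u_\perp) = O(\mu\,\diam X_\alpha)$. In the paper the anchor's error instead enters additively, through $b_i = b_e + (b_i - b_e)$ with $(b_i-b_e)^2 = (a_i-a_e)^2 + O(\mu\,\diam)$, rather than being amplified by the conditioning.

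Second, three non-collinear anchors fix a planar point. Once $r$'s side is chosen, no per-point ``same side or opposite side'' freedom is left. Your route therefore forces every determined sign in $X_\alpha$ to agree with $r$'s, which makes $X_\alpha$ a single base node. That is compatible with the statement, but it is not the mechanism you describe. It also differs from the paper's per-point two-candidate picture, in which $X_\alpha$ may split into base nodes at a separating spine gap.
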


A base node need \emph{not} be resolved, nor a resolved node a base node: the matching of
\Cref{lem:equal-card} carries resolved nodes while reflection signs ride on base nodes, the two
laminar families coinciding only generically at the minimal scale---a minimal resolved node is
generically a single base node, though it may split into several (\Cref{rem:per-node-signs}).

The residual $O(\mu\,\diam X_\alpha/\delta_\alpha)$ is no artifact of the proof: a single sign per
resolved node can overshoot, so the estimator signs the finer base nodes instead, as the next remark
shows on a worst case.

\begin{remark}[The sign carriers must be the dendrogram nodes]
\label{rem:per-node-signs}
No coarser carrier suffices. Place two vertical blocks of perpendicular pattern $(0, h, 3h)$ at spine
$0$ and $L$, joined by a dense run at the center height $\tfrac32 h$ with spine gaps $2h$, and let $Y$
reflect only the block at $0$ across $\tfrac32 h$. This fixes the run and preserves all block--run
distances, distorting only block--block pairs, so $\dGHbij \le 9h^2/L$. Yet the whole set is one
$2h$-cluster with $\delta_X = \delta_Y = 3h$: every spine gap ($2h$) is below the reach margin
$\sqrt{s^2+\delta_X^2+\delta_Y^2} \ge 3\sqrt2\,h$, so it is a single resolved node with no resolved
proper sub-node, and a single sign for it mismatches one block's $(h,2h)$ pattern---distortion $\ge h$,
an overshoot $\Theta(L/h) \to \infty$. The flipped block is, however, a base node, and a sign carried
there repairs the match. The estimator therefore signs \emph{every} dendrogram node
(\Cref{def:laminar}), the signs composing top-down to one net sign per base node---a sub-node of a base
node inherits its sign, a node spanning several carries only their relative sign, so the root-to-base
product is each base node's net sign. The block's wrong
sign is legible at the node itself---its two signs differ in within-node distortion by $h - O(\mu)$, an
instance of \Cref{lem:sign-cut}'s dichotomy: a sign that matters beyond the noise shows on a strong
anchor or cross edge, one that shows on neither costs $O(K\mu)$ either way.
\end{remark}

Throughout, $X, Y$ are near-collinear under their diameter axes, $\mu = 2\dGHbij = \Dist(\sigma^*)$
for an optimal $\sigma^*$. As $\sigma^*$ is a bijection it matches each base node $X_\alpha$ to a
$Y$-set $Y_{\Phi(\alpha)} := \sigma^*(X_\alpha)$ of equal cardinality, where $\Phi$ is the
order-isomorphism \Cref{lem:equal-card} gives on resolved nodes, refined within each resolved node by
the rigid realization of \Cref{lem:cluster-rigidity}. A sign
$\varepsilon^*_\alpha$ is invisible within its node---reflecting it preserves the node's own
distances---so it shows only \emph{between} nodes: flipping the relative sign of base nodes
$\alpha, \gamma$ at spine separation $L_{\alpha\gamma}$ moves their cross-pair distortion
$\Dist_{\alpha\gamma} := \max_{i \in X_\alpha, j \in X_\gamma}|d_X(i,j) - d_Y(\sigma i, \sigma j)|$ by a
\emph{swing} of order $(m + \delta_\alpha)\delta_\gamma/L_{\alpha\gamma}$ ($m$ the perpendicular offset
of the matched centers)---at least $\delta_\alpha\delta_\gamma/L_{\alpha\gamma}$ always, and
$\Theta(m\delta_\gamma/L_{\alpha\gamma})$ when the centers are offset. Call the pair \emph{strong} when
this swing clears the $O(\mu)$ noise, read off as the observed gap between the two relative signs (no
$\mu$ needed; $\delta_\alpha\delta_\gamma/L_{\alpha\gamma} > K\mu$ sufficing for absolute $K$). Strong
pairs join nodes across \emph{all scales}, not only siblings; each reveals its relative sign, which
\Cref{lem:sign-cut} stitches into one global assignment.

\begin{lemma}[Recovering the reflection signs by a threshold-free global cut]
\label{lem:sign-cut}
The free signs are one per base node ($\varepsilon^*_\alpha$, \Cref{lem:cluster-rigidity}). On a graph
over the base nodes with a virtual root---an \emph{analysis} object, the estimator running the same cut
on every dendrogram node and composing to these (\Cref{def:laminar})---weight each \emph{anchor edge}
$A$--root by the gap in $A$'s within-node distortion as $\varepsilon_A$ flips ($\infty$ for a pinned fat
block), and each \emph{cross edge} $A$--$B$ by the relative-sign gap of $\varepsilon_A\varepsilon_B$,
each preferring the cheaper---gauge-invariant, reading only the fixed geometry. A parity union-find over
the edges in decreasing weight (dropping any that contradicts the accumulated parity), then signing each
rootless component to minimize its distortion, recovers $\varepsilon^*$ up to one bit per component in
$O(n^2)$ time, every pair costing $O(K\mu)$---no threshold, no knowledge of $\mu$.
\end{lemma}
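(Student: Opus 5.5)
We argue by comparing the cut's output $\widehat\varepsilon$ with the optimal signs $\varepsilon^*$ of \Cref{lem:cluster-rigidity}, edge by edge, in the spirit of a maximum-weight spanning forest. First we fix the edge semantics. For an anchor edge $A$--root, the weight $w_A$ is the difference of $A$'s within-node distortion under its two signs, and the edge prefers the cheaper one. For a cross edge $A$--$B$, the weight $w_{AB}$ is the difference of $\Dist_{AB}$ under the two relative signs $\varepsilon_A\varepsilon_B=\pm1$. Every weight is computed from the realized geometry of \Cref{lem:cluster-rigidity} (placement within $O(\mu)$ of $\sigma^*$). Flipping all signs in a component leaves every cross weight and preferred parity unchanged, so the construction is gauge-invariant. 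The first key step is a \emph{noise dichotomy}: for absolute $K$, if an edge has weight $>K\mu$, its preferred parity agrees with $\varepsilon^*$. The reason is that $\sigma^*$ pays at most $\mu$ on every pair, while the realized placement perturbs each distance by $O(\mu)$. So the $\varepsilon^*$-parity costs $O(\mu)$ under the realization, and the opposite parity costs at least $w-O(\mu)$. Pinned fat blocks ($w=\infty$) are consistent with $\varepsilon^*$ by \Cref{prop:bij-fat}.

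Second, we run the parity union-find in decreasing weight. We claim, inductively, that the accumulated parity constraints are all satisfied by $\varepsilon^*$ as long as the edges processed have weight $>K\mu$. By the dichotomy, each such edge carries the parity of $\varepsilon^*$, so no such edge is dropped and each merge is correct. After this prefix, the edges of weight $\le K\mu$ may be accepted or dropped arbitrarily, but these are exactly the edges whose two parities differ by at most $K\mu$. The resulting assignment $\widehat\varepsilon$ therefore agrees with $\varepsilon^*$, up to a global bit per component, on the forest formed by the strong edges.

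Third, we bound the cost of every pair; this is the main obstacle. Take base nodes $A,B$ and the pair cost under $\widehat\varepsilon$. There are three cases.
\begin{enumerate}[label=(\alph*)]
\item If the relative sign of $A$ and $B$ under $\widehat\varepsilon$ equals that under $\varepsilon^*$, the pair costs $O(\mu)$.
\item If the signs disagree and $A,B$ lie in one component, the disagreement must be mediated by some edge of weight $\le K\mu$. The difficulty is that the direct edge $A$--$B$ may be strong while the path joining them is not. The max-weight-first order rules this out. If $w_{AB}>K\mu$, then $A$ and $B$ are already joined within the strong prefix with the correct parity, so case (a) applies. Otherwise $w_{AB}\le K\mu$, and the wrong parity costs at most $O(\mu)+K\mu$.
\item If $A,B$ lie in different rootless components, the same argument applies to the direct edge. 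Within a component, the component-wide bit is chosen to minimize distortion. Every anchor edge of that component has weight $\le K\mu$, since otherwise it would have joined the root; hence either bit costs $O(K\mu)$ on within-node pairs.
\end{enumerate}
The within-node pairs are handled by the anchor-edge version of the same argument.

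Finally, we bound the running time. The graph has $O(n^2)$ edges, and each weight is a maximum over the pairs of two nodes. We charge each point pair $(i,j)$ only to its own base-node pair, so the total work for all weights is $O(n^2)$. Sorting the edges and running union-find with parity then add only near-linear overhead in the number of edges. No step uses a threshold or the value of $\mu$: $K$ appears only in the analysis, never in the algorithm.
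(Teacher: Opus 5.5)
Your strong-prefix argument (every edge of weight $>K\mu$ carries $\varepsilon^*$'s parity, so the decreasing-weight union-find drops none of them) matches the paper's ``strong edges agree'' step. Your direct-edge treatment of split or weakly joined pairs matches its ``a spanning pair is weak'' step.

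The gap is the overall bit of a rootless component. You control that bit only on within-node pairs, through the component's weak anchor edges, and let case (a) cover its cross pairs. But case (a) presupposes that a cross pair's cost depends only on the relative sign $\varepsilon_A\varepsilon_B$, and it does not. The estimator reflects each node about its \emph{own} matched center $c_A$. So for $i\in A$, $j\in B$ the perpendicular offset of the images is $m+s$, with $m=c_A-c_B$ and $s=\varepsilon_A a-\varepsilon_B b$. Negating every sign in the component replaces this by $m-s$, which moves $d_Y(\sigma i,\sigma j)$ by about $2|m|\,|s|/L_{AB}$. This vanishes only when the centers coincide, and can far exceed $K\mu$ otherwise. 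Hence a component whose relative signs are all correct but whose overall bit is wrong can distort its own cross pairs well beyond $O(K\mu)$. The bit is not free, and case (a) fails without it.

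For the same reason, your cross weight, ``the difference of $\Dist_{AB}$ under the two relative signs,'' is not yet well defined, since $\Dist_{AB}$ takes up to four values. You must pin it down, e.g.\ as a gap between the two parity classes, before your dichotomy can be stated.

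The missing step is the one the paper's proof supplies under ``absolute bits.'' Choosing the bit to minimize the component's distortion is necessary, not merely harmless. It suffices because $\sigma^*$'s bit is one of the two candidates and costs $O(\mu)$ on every intra-component pair, so the minimizer costs no more. A component meeting a fat block is instead pinned to $\sigma^*$'s bit through its infinite anchor edge. With that step replacing your reliance on case (a), your argument lines up with the paper's.

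Separately, sorting up to $\Theta(n^2)$ edges costs $O(n^2\log n)$, so your accounting does not reach the stated $O(n^2)$.
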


\begin{proof}
\emph{The cross-pair swing.} Place the spine along $\widehat u$; under signs $\varepsilon$ the
estimator sends $x_i \in X_\alpha$ to perpendicular coordinate $v_i = c_\alpha +
\varepsilon_\alpha(\pi^\perp_i - \overline{\pi^\perp}_\alpha)$ (\Cref{lem:cluster-rigidity} places each
near-collinear node within $O(\dGHbij)$ of $\sigma^*$ up to its signs, trilateration pinning a fat node
outright). For
a cross pair $i \in X_\alpha, j \in X_\gamma$ write $a = \pi^\perp_i - \overline{\pi^\perp}_\alpha$,
$b = \pi^\perp_j - \overline{\pi^\perp}_\gamma$, $m = c_\alpha - c_\gamma$, $u := m + \varepsilon_\alpha a$,
and $L' = |\pi(\sigma i) - \pi(\sigma j)|$ for the sign-independent image separation; negating the
relative sign sends $v_i - v_j = u - \varepsilon_\gamma b$ to $u + \varepsilon_\gamma b$, so
$d_Y(\sigma i, \sigma j) = \sqrt{L'^2 + (v_i - v_j)^2}$ moves by the \emph{swing}
\[
\Bigl|\sqrt{L'^2 + (u+b)^2} - \sqrt{L'^2 + (u-b)^2}\Bigr|
\;=\; \frac{2|u|\,|b|}{L'}\left(1 - \frac{u^2 + b^2}{2L'^2} + O\left(\tfrac{(u^2+b^2)^2}{L'^4}\right)\right),
\]
leading term $2|u||b|/L'$ with relative error $(u^2+b^2)/2L'^2$, which is $o(1)$ for a resolved pair
($L'^2 \ge s^2 + \delta_X^2 + \delta_Y^2$, \Cref{lem:equal-card}) once the separation exceeds the
perpendicular extent---closer pairs are strong outright, their swing $\Theta(\delta_\alpha\delta_\gamma/L')$
growing as $L'$ shrinks, so the $o(1)$ bound matters only near the $K\mu$ threshold; the realization's
$O(\dGHbij)$ placement error is the only other slack. At the
extreme pair ($|b| = \delta_\gamma/2$, $|u| \ge \delta_\alpha/2$ as $u$ ranges over an interval of
length $\delta_\alpha$) the swing is $\ge \delta_\alpha\delta_\gamma/2L' - O(\dGHbij)$ always, and
$\Theta(|m|\delta_\gamma/L')$ when the centers are offset. Call the pair \emph{strong} when the swing
exceeds $K\mu$ (the looser $\delta_\alpha\delta_\gamma/L > K\mu$ a sufficient test); then the wrong
relative sign inflates $\Dist_{\alpha\gamma}$ past $\Dist(\sigma^*) = \mu$, so the cheaper relative
sign is $\varepsilon^*_\alpha\varepsilon^*_\gamma$.

\emph{Legibility.} Flipping $\varepsilon_A$ reflects $A$ about $c_A$, altering the bijection in two
disjoint places: intra-$A$ pairs and cross pairs $(i \in A, j \notin A)$. Distortion being a maximum,
if flipping $\varepsilon_A$ moves it by more than $K\mu$ the responsible pair is intra-$A$ (so $A$'s
anchor gap exceeds $K\mu$) or cross-$A$ (so a cross edge at $A$ does). A sign that matters is thus
strong on some edge; one strong on none costs $O(K\mu)$ either way.

\emph{Gauge-invariance.} The within-$A$ distortion is fixed by $A$'s pattern and $Y_{\Phi(A)}$, so the
anchor weight is intrinsic to $A$; the cross distortion of $(i \in A, j \in B)$ depends only on
$\varepsilon_A, \varepsilon_B$ and the fixed $c_A, c_B, L_{AB}$, so the relative-sign gap is a function
of the geometry alone. No edge reads another node's current sign, so one pass suffices.

\emph{Strong edges agree with $\varepsilon^*$.} On a strong anchor edge $\sigma^*$'s
sign realizes $A$ with $O(\mu)$ within-node distortion (\Cref{prop:bij-fat} for a fat node; for a thin
one \Cref{lem:cluster-rigidity}, its $O(\mu\,\diam/\delta)$ perpendicular residual a coherent bending
that cancels in within-node distances), so the cheaper sign is $\varepsilon^*_A$; on a strong cross edge the cheaper relative sign is
$\varepsilon^*_A\varepsilon^*_B$ by the swing above. These preferences are values and pairwise products
of the one vector $\varepsilon^*$, hence frustration-free (telescoping to $+1$ around any cycle): the
union-find discards none and recovers $\varepsilon^*$ on each component, absolutely on one meeting a
strong anchor.

\emph{Absolute bits, and the bound.} Union-find fixes a component's relative signs but not its overall
bit: negating every $\varepsilon_\alpha$ in a component preserves all products
$\varepsilon_\alpha\varepsilon_\gamma$. That bit is \emph{not} free---reflecting each node about its own
center $c_\alpha$, a component flip is \emph{piecewise}, changing an intra-component cross pair
$(p \in \alpha, q \in \gamma)$ distance by $2|c_\alpha - c_\gamma|\,|\varepsilon_\alpha a -
\varepsilon_\gamma b|/L_{\alpha\gamma} + O(\delta_X^4/L_{\alpha\gamma}^3)$, zero only when the
component's centers coincide (then the flip is rigid and harmless). A component meeting a fat block is
pinned by that block's trilateration (its anchor edge infinite), taking $\sigma^*$'s bit; a fat-free
component takes the bit minimizing its own distortion, where $\sigma^*$'s bit---one of the two---gives
$O(\mu)$ on every intra-component pair. Distinct components are independent: a spanning pair is weak (a
stronger edge would have merged it), costing $O(K\mu)$ either way. So every pair is intra-component at
an $O(\mu)$-optimal bit or weak across components---$O(K\mu)$ in all, in $O(n^2)$ time.
\end{proof}

The last discrete freedom is each node's \emph{orientation}---which way its spine points---again
invisible within the node. Its signal, though, is \emph{longitudinal} and first-order: reversing a node
shifts its distance to every other by $\Theta(\Delta_\alpha)$ ($\Delta_\alpha$ its $\pi$-extent; immaterial when $\Delta_\alpha = O(\mu)$),
clearing the noise on \emph{every} pair, not just some as the reflection's transverse swing does. The
relative-orientation graph is thus complete and frustration-free: a single global bit---the spine
direction, just the two sort directions tried---fixes all orientations at once, with no per-component
sign and no anchor.

\begin{lemma}[Recovering the base nodes' orientations]
\label{lem:orient-cut}
Let $g_\alpha \in \{\pm 1\}$ be the longitudinal orientation of base node $X_\alpha$ under $\sigma^*$
(the freedom beyond its reflection sign, \Cref{lem:cluster-rigidity}). The $g_\alpha$ are computed in
polynomial time up to one global sign; with the signs of \Cref{lem:sign-cut} this places each
$X_\alpha$ within $O(\dGHbij)$ of $\sigma^*$.
\end{lemma}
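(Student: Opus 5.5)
The plan is to treat the orientations as a second parity problem, run on the same base-node graph as \Cref{lem:sign-cut}. The difference is that here every edge will turn out to be strong, so no component bits survive except the global one.

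\emph{Step 1: the orientation swing.} Fix two base nodes $X_\alpha, X_\gamma$ with matched images. Take points $i \in X_\alpha$ and $j \in X_\gamma$ whose spine separation is $L$. By \Cref{lem:cluster-rigidity}, reversing $g_\alpha$ relative to $g_\gamma$ moves the $\pi$-coordinate of $\sigma i$ about the node's center by $2|\pi_i - \bar\pi_\alpha|$. The perpendicular terms stay $O(\delta_X+\delta_Y)$, so they enter $d_Y$ only at second order $O(\delta^2/L)$. Choose $i$ to be an extreme point of $X_\alpha$. Then the two relative orientations give cross distances to any fixed $j$ that differ by at least $\Delta_\alpha - O(\mu) - O(\delta^2/L)$. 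Here the $O(\mu)$ comes from the placement error $|\xi^\alpha_i|$.

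The $\delta^2/L$ term needs care. For resolved pairs, $L^2 \ge s^2+\delta_X^2+\delta_Y^2$ by \Cref{def:resolved}, so the term is at most $O(\delta)$. That is still not small against $\Delta_\alpha$ in general. So I will split into two cases. If $\Delta_\alpha \le C\mu$, orientation is immaterial: either choice places $X_\alpha$ within $O(\mu)$ in $\pi$. Otherwise, compare both orientations using the same pair of extreme points of $X_\alpha$. The perpendicular terms then cancel to first order, and the longitudinal difference $\Theta(\Delta_\alpha)$ remains. This exceeds $\mu$, so the optimal $\sigma^*$ must use the cheaper one.

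\emph{Step 2: the relative orientations are globally consistent.} By \Cref{lem:equal-card}(i), the resolved-node map is monotone up to a single global reversal. I will argue that within each resolved node the base nodes inherit spine order. The sort by rank is order-preserving, and any base node of extent $\gg\mu$ whose orientation was flipped relative to its neighbours would violate the first-order longitudinal bound above for some pair. So every non-negligible $g_\alpha$ equals the global spine direction $g$.

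\emph{Step 3: algorithm and placement.} Try both sort directions, realize each base node by spine rank as in \textsc{realize}, and keep the one of smaller distortion. This computes the $g_\alpha$ up to one global sign in polynomial time. Then combine with the signs $\varepsilon^*_\alpha$ of \Cref{lem:sign-cut} and the rigid form of \Cref{lem:cluster-rigidity}. Each point lands within $O(\mu)=O(\dGHbij)$ of $\sigma^*(x)$ after one global isometry. The global bit is harmless because reversing everything is an isometry of the spine.

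\emph{Expected obstacle.} The delicate step is Step 2 for pairs of base nodes that are close in spine but have large perpendicular extent. There the $O(\delta^2/L)$ curvature correction could compete with $\Delta_\alpha$. My first attempt would exploit that base nodes are near-collinear along the common axis, so $\delta_\alpha \le \Delta_\alpha/4$. The correction from the node's own points is then bounded by a constant fraction of $\Delta_\alpha$, leaving a net swing of at least $\Delta_\alpha/2 - O(\mu)$. The fallback is to handle fat base nodes separately, since their trilateration (\Cref{prop:bij-fat}) fixes orientation outright.
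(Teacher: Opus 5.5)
Your proposal follows the paper's proof essentially step for step. Reversing a base node slides its extremes longitudinally by $\Theta(\Delta_\alpha)$, a first-order effect that does not decay with separation and is immaterial only when $\Delta_\alpha = O(\mu)$. So $\sigma^*$ orients all non-negligible base nodes alike, the one remaining global bit is set by trying both sort directions and keeping the lower-distortion one, and \Cref{lem:sign-cut} with \Cref{lem:pointwise} then gives the $O(\dGHbij)$ placement.

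On the obstacle you flag, the paper offers nothing beyond asserting that the slide is near-longitudinal, and your proposed fix does not close it. The term competing with $\Delta_\alpha$ is the perpendicular offset $P$ between the two nodes, which is bounded by the global spread and not by $\delta_\alpha$; the swing at near spine gap $L$ is only at least $\Delta_\alpha L/\sqrt{L^2+P^2}$. What works is to test each node against a partner with $L \gtrsim \delta_X+\delta_Y$. Such a partner exists since $\delta_X \le \Delta_X/4$, unless $\Delta_\alpha$ itself dominates the spread, in which case any partner works. By contrast, $\delta_\alpha \le \Delta_\alpha/4$ does not control $P$, and it fails for vertical or tilted base nodes.
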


\begin{proof}
Reversing $X_\alpha$ about its center $c_\alpha$ swaps its two longitudinal extremes (offsets
$\pm\Theta(\Delta_\alpha)$). Any other base node $X_\gamma$ lies beyond $X_\alpha$'s $\pi$-span
($\pi$-disjoint from it, separated by their splitting gap); for $x_j \in X_\gamma$ and $x_i$ the extreme
of $X_\alpha$ \emph{away} from $x_j$, the reversal slides $x_i$ toward $x_j$ by $\Theta(\Delta_\alpha)$ along a
near-longitudinal direction, so $\bigl||x_i - x_j| - |x_i' - x_j|\bigr| = \Theta(\Delta_\alpha)$ with
$x_i' := 2c_\alpha - x_i$. This is \emph{first-order} and present at \emph{every} separation
$L_{\alpha\gamma}$---unlike the reflection's swing $\Theta(\delta_\alpha\delta_\gamma/L)$ it does not
decay with $L$. So reversing one node relative to the rest costs $\Theta(\Delta_\alpha)$, $\gg \mu$
unless $\Delta_\alpha = O(\mu)$ (when the orientation is immaterial).

Hence $\sigma^*$ orients every base node alike up to the single global reversal
$\widehat u \mapsto -\widehat u$: the relative orientations $g_\alpha g_\gamma$ are products of one
vector $g$, frustration-free, so one global bit fixes all of $g$, set by trying the two sort directions
and keeping the lower-distortion one---no anchor or predecessor chain, hence none of a chained greedy's
$O(K\mu)$ drift. With the signs of \Cref{lem:sign-cut} every node lands within cross-error $O(\mu)$ of
$\sigma^*$, and \Cref{lem:pointwise} gives half-distortion $O(\dGHbij)$.
\end{proof}

\subsection{Assembling the base nodes: the flat estimator}
\label{sec:multiscale}

The base nodes and the multi-scale matching of \Cref{sec:assemble} feed one estimator. One might
walk the dendrogram \emph{recursively}, placing each node against its parent; \Cref{lem:pointwise}
makes that unnecessary---the bound is a maximum over pairs, so errors never compound up the tree.
Every \emph{cross-node} ingredient is fixed against a \emph{global} reference, not a node's neighbors:
the matching and each base node's orientation from the single global sort
(\Cref{lem:equal-card,lem:orient-cut}), the reflection signs from one global parity cut
(\Cref{lem:sign-cut}); each node's rigid part is then placed \emph{locally}, by the branch its shape
selects---the spine sort on a thin node, trilateration on a fat one (\Cref{prop:bij-fat}). The whole
family is thus handled in one flat, depth-independent pass, which we now define.

\begin{definition}[Flat multi-scale estimator]
\label{def:laminar}
The \emph{flat multi-scale estimator} $\widehat\rho_{\mathrm{lam}}$ is the near-collinear branch of \Cref{alg:bij}: one data-driven pass over the $\pi$-dendrogram of $X$ that sorts $X, Y$ along the spine and matches by rank in the lower-distortion direction (\Cref{lem:equal-card,lem:orient-cut}), trilaterates the maximal \emph{fat} sub-blocks---overriding the sort there (\Cref{prop:bij-fat}; thin runs stay sorted)---and signs every dendrogram node by the threshold-free cut of \Cref{lem:sign-cut}, the signs composing top-down to one \emph{net} sign per node about its own matched center (per-node carriers are necessary, \Cref{rem:per-node-signs}). It reads neither $\mu$ nor the base-node decomposition (both analysis-only). The output satisfies $\dGHbij \le \widehat\rho_{\mathrm{lam}} := \tfrac12\Dist(\sigma_{\mathrm{lam}})$ in polynomial time.
\end{definition}

\begin{theorem}[Flat guarantee]
\label{thm:laminar}
On every near-collinear input the flat estimator of \Cref{def:laminar} satisfies
\[
\widehat\rho_{\mathrm{lam}}(X,Y) \le K_{\mathrm{lam}}\,\dGHbij(X,Y)
\]
for an absolute constant $K_{\mathrm{lam}}$, independent of the number of nodes and scales.
\end{theorem}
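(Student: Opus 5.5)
The plan is to reduce the bound to a pointwise placement statement and then assemble the earlier lemmas. Fix an optimal $\sigma^*$ with $\mu = \Dist(\sigma^*) = 2\dGHbij$. The target is that, after a suitable global isometry of $Y$, every $x \in X$ has $\sigma_{\mathrm{lam}}(x)$ and $\sigma^*(x)$ both within $O(\mu)$ of a common realized position. More precisely, the estimator's realized image of $x$ should lie within $C\mu$ of $\sigma^*(x)$, and the pair $(x, \sigma_{\mathrm{lam}}(x))$ should differ from $(x, \sigma^*(x))$ only by a swap inside an $O(\mu)$-tie. Once this holds, \Cref{lem:pointwise} converts it into $\Dist(\sigma_{\mathrm{lam}}) \le \mu + O(C\mu)$. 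Because distortion is a maximum over pairs, the constant does not depend on the number of nodes or on the depth of the dendrogram. That explains why $K_{\mathrm{lam}}$ is absolute.

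The key steps are these, in order.

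\textbf{Matching.} By \Cref{lem:equal-card}(i),(ii),(iv), the global spine-rank matching in the lower-distortion direction carries every resolved node onto $\sigma^*$ of that node. So $\sigma_{\mathrm{lam}}$ and $\sigma^*$ agree node-for-node on the resolved laminar family. Points inside reach-margin gaps, by (iii), are deferred to \Cref{lem:fat-boundary}, which I assume absorbs them at cost $O(\mu)$.

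\textbf{Local placement.} Inside each base node I invoke \Cref{lem:cluster-rigidity}. On a thin node the spine sort, given the correct sign, places each point within $O(\mu)$ of its $\sigma^*$-image and pairs points correctly up to $O(\mu)$-wide $\pi$-ties. On a fat maximal block the trilateration of \Cref{prop:bij-fat} does the same.

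\textbf{Discrete freedoms.} The orientations come from \Cref{lem:orient-cut}, which yields a single global bit, already chosen by taking the better of the two sort directions. The reflection signs come from \Cref{lem:sign-cut}. I must check that the per-node signs the estimator computes on every dendrogram node compose top-down to the base-node signs that the analysis graph recovers. This is the point of \Cref{rem:per-node-signs}: a sub-node inherits its parent's sign, and a node spanning several base nodes carries only their relative sign.

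\textbf{Combining.} Every intra-component pair then incurs $O(\mu)$, and every cross-component pair is weak, so it costs $O(K\mu)$ whatever its sign. Combining gives $\tfrac12\Dist(\sigma_{\mathrm{lam}}) \le K_{\mathrm{lam}}\mu/2$.

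I expect the main obstacle to be the interface between the two laminar families: resolved nodes, which carry the matching, and base nodes, which carry the signs. Specifically, I must show that the estimator's net sign per node coincides with $\varepsilon^*$ wherever a sign is legible. I must also show that a base node which is not resolved, or which straddles a resolved boundary, does not let the sort-based matching and the sign cut disagree. My first attempt is to refine the resolved partition by the base-node partition and argue that on each cell both the rank matching and the sign are determined up to $O(\mu)$-ties. Non-resolved splits would be handled through \Cref{lem:equal-card}(iii) and \Cref{lem:fat-boundary}. The remaining risk is the ill-conditioned residual $O(\mu\,\diam/\delta_\alpha)$ in \Cref{lem:cluster-rigidity}. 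I would argue that it is coherent bending, so it cancels within a node, and that across nodes it only enters through the swing, which is already budgeted as weak-versus-strong in \Cref{lem:sign-cut}.
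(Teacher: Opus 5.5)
Your proposal follows the paper's proof essentially step for step: rank matching of resolved nodes by \Cref{lem:equal-card}(iv), per-base-node placement by \Cref{lem:cluster-rigidity} (trilateration on fat blocks), one global orientation bit from \Cref{lem:orient-cut}, signs from the cut of \Cref{lem:sign-cut} composed top-down per \Cref{rem:per-node-signs}, reach-margin points via \Cref{lem:fat-boundary}, and depth-independence because distortion is a maximum over pairs (\Cref{lem:pointwise}). Two points to tighten. First, your uniform pointwise target fails wherever an illegible or component-level sign is set opposite to $\varepsilon^*$, so, like the paper and like your own Combining step, close the resolved-node pairs pairwise; that pairwise bound is exactly the hypothesis of \Cref{lem:fat-boundary} and must be in hand before the boundary points are absorbed. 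Second, your resolved-versus-base-node worry is settled at the end of the proof of \Cref{lem:cluster-rigidity}, which shows every base node is a dendrogram sub-node and hence nested with or disjoint from each resolved node.
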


\begin{proof}
Let $\sigma^*$ be optimal, $\mu = 2\dGHbij$. By \Cref{lem:equal-card}(iv) the rank matching reproduces the
node correspondence $\Phi$ on every resolved node across all scales; the reach-margin boundary points
(\Cref{lem:equal-card}(iii)) are deferred to the end. Within a resolved node $\sigma^*$ is, per base node,
a rigid motion plus a reflection sign (\Cref{lem:cluster-rigidity})---\emph{not} the rank order---so
$\sigma_{\mathrm{lam}}$ and $\sigma^*$ differ only in each base node's longitudinal orientation and that
sign (\Cref{rem:per-node-signs}).

Both are recovered globally. Orientation is one bit: \Cref{lem:orient-cut} fixes it up to a global sign,
resolved to $\sigma^*$'s by the sort's lower-distortion direction (reversing any node costs
$\Theta(\Delta_\alpha) \gg \mu$). The signs come from the threshold-free cut of \Cref{lem:sign-cut}:
$\varepsilon = \varepsilon^*$ up to one bit per component, each pinned by a fat block's trilateration or
else signed to minimize its own distortion---$O(\mu)$ on every intra-component pair, cross-component pairs
weak ($O(K\mu)$ either way), reading no $\mu$ and signing all nodes at once. Carrying these,
\Cref{lem:cluster-rigidity} places every base node within $O(\mu)$ of $\sigma^*$ whatever the gap
scale---fat ones by trilateration, near-collinear by the sort.

\emph{Conclusion.} On the resolved nodes every pair now distorts by $O(\mu)$---intra-node by the
placements (a sign flip is a reflection about $c_\alpha$, hence sign-invariant up to $O(\mu)$),
cross-node by the cut---exactly the cross-pair bound \Cref{lem:fat-boundary} assumes. Discharging it,
the remaining reach-margin points are absorbed within $O(\dGHbij)$ of $\sigma^*$ (\Cref{lem:fat-boundary}),
pointwise. As distortion is a maximum over pairs,
not a sum (\Cref{lem:pointwise}), that maximum---$O(\mu)$ realized, $O(\dGHbij)$ boundary---gives
$\widehat\rho_{\mathrm{lam}} \le K_{\mathrm{lam}}\,\dGHbij$ on \emph{all} near-collinear inputs,
independent of depth or scale count. (A \emph{scale-local} cut, signing against same-scale siblings
only, misses the cross-scale strong edges and is provably non-optimal.)
\end{proof}

\section{The Gromov--Hausdorff relaxation and its kernel}
\label{sec:dgh-reduce}

Relaxing bijections to correspondences---permitting many-to-one matching---turns the
additive bijective distortion into the Gromov--Hausdorff distance $\dGH$ proper. The fat case is again
constant-factor through $\dHiso$ (\Cref{thm:main}); only the near-collinear case needs more
(\Cref{def:fat}), where \Cref{prop:barrier} rules out both off-the-shelf proxies. \Cref{sec:kernel-mult} records how far the
bijective machinery carries---cluster localization survives to correspondences (\Cref{lem:weak-cluster})---and
isolates the one kernel it leaves: the within-cluster multiplicity exploited by \Cref{prop:mult-r1} is the
\emph{only} mechanism by which $\dGH$ falls below $\dGHbij$ (\Cref{prop:sep-bij}), a within-cluster,
one-dimensional phenomenon. \Cref{sec:kernel-reduce} then closes it, the clustering confining the kernel to
a single cluster with one collapse clause.

\subsection{Cluster localization and the multiplicity gap}
\label{sec:kernel-mult}

In \Cref{lem:equal-card}, injectivity of $\sigma^*$ forced matched nodes to share cardinality; a
correspondence is not injective, but its merges are local, so the clustering still survives---each
resolved node localizing to one node of $Y$'s dendrogram, now of possibly unequal cardinality.

\begin{lemma}[Cluster localization of correspondences]
\label{lem:weak-cluster}
Let $X, Y \subset \R^2$ be near-collinear under their diameter axes (\Cref{def:near-collinear}) and $R$
a correspondence with $\Dist(R) = \mu$. Call a node $N$ of $X$'s single-linkage $\pi$-dendrogram, of
scale $s_N$, \emph{resolved} if its separating gaps exceed $\sqrt{s_N^2 + \delta_X^2 + \delta_Y^2} + 2\mu$
(the correspondence analogue of \Cref{def:resolved}). Then for every resolved node $N$, $R(N)$ is a node
of $Y$'s dendrogram, and symmetrically on the $Y$ side. The node correspondence $N \mapsto R(N)$ is a
bijection on the nodes resolved on both sides, monotone up to the global orientation reversal and
inclusion-preserving across scales ($N \subseteq M$ gives $R(N) \subseteq R(M)$). It is recovered at every
scale by matching the resolved nodes in spine order (both orientations tried), the reach-margin
residual deferred as in \Cref{lem:equal-card}(iii). Matched nodes may have \emph{different}
cardinalities---and $|X| \ne |Y|$ in general---the count gap living within matched nodes, not across them.
\end{lemma}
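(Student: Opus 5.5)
The plan is to follow the proof of \Cref{lem:equal-card} and drop only the one place where injectivity was used, the cardinality count. Everything else uses just that distances between related pairs agree within $\mu$, which any correspondence provides.

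\emph{Step 1: related pairs have matching spine gaps.} Let $(x,y),(x',y') \in R$. In a near-collinear set, $d_X(x,x')^2 = (\pi x - \pi x')^2 + (\pi^\perp x - \pi^\perp x')^2$, and the perpendicular term is at most $\delta_X^2$; the same holds on the $Y$ side with $\delta_Y$. From $|d_X(x,x') - d_Y(y,y')| \le \mu$ we get
\[
|\pi x - \pi x'| \le \sqrt{|\pi y - \pi y'|^2 + \delta_Y^2} + \mu,
\]
and symmetrically with $X$ and $Y$ exchanged. Hence two $X$-points joined by a chain of spine gaps $\le s_N$ cannot be related to $Y$-points on opposite sides of a $Y$-gap wider than the reach margin $\sqrt{s_N^2+\delta_X^2+\delta_Y^2}+2\mu$. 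The extra $\mu$ absorbs the slack when we pass to consecutive points.

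\emph{Step 2: images of resolved nodes are dendrogram nodes.} Take a resolved node $N$, a chain of consecutive $\pi$-gaps $\le s_N$. Apply Step 1 to consecutive points of the chain. Related images of consecutive points then sit within spine distance below the margin, so $R(N)$ lies inside a single $Y$-cluster at the corresponding scale.

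Conversely, any $y \in Y$ related to some $x \notin N$ lies beyond one of $N$'s flanking gaps. That gap exceeds the margin, so Step 1 applied to the pair $(x, x_0)$ with $x_0 \in N$ puts $y$ outside the $Y$-cluster containing $R(N)$. By surjectivity of $R$, every point of that $Y$-cluster is related to something, necessarily in $N$, so $R(N)$ is exactly that cluster. The symmetric argument on $Y$'s side gives $R^{-1}(R(N)) = N$ whenever $R(N)$ is resolved. Hence $N \mapsto R(N)$ is a bijection between nodes resolved on both sides.

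\emph{Step 3: inclusion, monotonicity, and recovery.} Inclusion preservation is immediate, since $R(N) \subseteq R(M)$ for $N \subseteq M$ follows from $R(N)$ being a set image. For monotonicity, take three resolved nodes $N_1<N_2<N_3$ in spine order. Their pairwise spine distances are longitudinal up to $\delta$ terms, so by Step 1 a non-monotone image would violate the additivity $d(N_1,N_3)\approx d(N_1,N_2)+d(N_2,N_3)$ by more than the margin. This is the same three-point argument as in \Cref{lem:equal-card}(i), up to the global reversal.

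Recovery then works as follows. The resolved nodes of both dendrograms are determined by their gaps, with the order known up to reversal. So they can be paired in spine order by trying both orientations, because Step 2 makes the node correspondence an order isomorphism, not because cardinalities agree. The reach-margin residual is exactly the ambiguity of \Cref{lem:equal-card}(iii) and is deferred the same way. Finally, nothing forces $|N| = |R(N)|$; the count gap is confined within matched nodes because Step 2 forbids relations crossing resolved gaps.

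\emph{Main obstacle.} The hardest step is the surjectivity/exactness part of Step 2, that is, showing $R(N)$ is a \emph{whole} node rather than a subset of one. This requires the flanking $Y$-gap of $R(N)$ to be itself resolved at a comparable scale. I would first try to prove this by transferring $N$'s flanking gaps through Step 1 in the reverse direction, losing only the symmetric $\delta$ and $\mu$ terms already built into the margin. If those losses do not fit, I would restrict the claim to nodes resolved on both sides, which is exactly what the bijection statement asserts.
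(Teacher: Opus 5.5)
Your route is the paper's: connectivity of $R(N)$ through consecutive points of $N$, isolation of every $Y$-point related to something outside $N$, surjectivity for exactness, then set-image inclusion and the three-point betweenness argument. But the exactness step you flag as the main obstacle is not actually closed, because you never put connectivity and isolation at a common threshold. As written, Step~2 places images of consecutive points ``within spine distance below the margin'', so it asserts connectivity at the margin scale $A+2\mu$, where $A:=\sqrt{s_N^2+\delta_X^2+\delta_Y^2}$. Isolation cannot reach that scale. For $y$ related to $x\notin N$ and $y_0\in R(x_0)$ with $x_0\in N$, your Step~1 only gives
\[
|\pi_Y y-\pi_Y y_0| \;>\; \sqrt{(A+\mu)^2-\delta_Y^2} \;=\; \sqrt{s_N^2+\delta_X^2+2A\mu+\mu^2}\;\le\; A+\mu .
\]
When $\delta_Y$ is comparable to $A$ (the dimension drop), this lower bound is far below $A+2\mu$. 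So ``puts $y$ outside the $Y$-cluster containing $R(N)$'' does not follow at that scale.

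The missing calibration is the paper's $s':=\sqrt{s_N^2+\delta_X^2}+\mu$. Consecutive points of $N$ are within $\sqrt{s_N^2+\delta_X^2}$ in $d_X$. Hence all their images, including the several images of a single point (pairwise within $\mu$), are within $s'$ in $\pi_Y$. Meanwhile $A\ge\sqrt{s_N^2+\delta_X^2}$ gives $\sqrt{(A+\mu)^2-\delta_Y^2}\ge s'$. At this one threshold the sorted $\pi_Y$-gaps of $R(N)$ are $\le s'$. Every $Y$-point outside $R(N)$ has a preimage by surjectivity, necessarily outside $N$, so it lies more than $s'$ from all of $R(N)$. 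Therefore $R(N)$ is exactly a maximal $s'$-cluster.

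This also shows your stated obstacle is a misdiagnosis. Exactness needs no hypothesis that the flanking $Y$-gap of $R(N)$ be resolved; the paper explicitly notes that $R(N)$ need not be resolved on the $Y$ side. Your fallback of restricting to doubly-resolved nodes would prove strictly less than the lemma's first claim, that $R(N)$ is a node for \emph{every} resolved $N$.

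Likewise, $R^{-1}(R(N))=N$ holds for every resolved $N$, not only when $R(N)$ is resolved. A $y$ related both to $n\in N$ and to $x\notin N$ forces $d_X(x,n)=|d_X(x,n)-d_Y(y,y)|\le\mu$, which is impossible across a gap exceeding $A+2\mu$; this is the paper's merge locality. With these two repairs, the remaining parts (inclusion, betweenness, and spine-order recovery with the reach-margin residual deferred) match the paper at its own level of detail.
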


\begin{proof}
Write $A := \sqrt{s_N^2+\delta_X^2+\delta_Y^2}$ and $s' := \sqrt{s_N^2+\delta_X^2}+\mu$. The one new
ingredient over \Cref{lem:equal-card} is \emph{merge locality}: if $(x,y),(x,y')\in R$ then
$d_Y(y,y')=|d_X(x,x)-d_Y(y,y')|\le \mu$, so all images of a point lie within $\mu$, and two points
sharing an image are within $\mu$ in $d_X$; as a resolved $N$ has separating gaps $>A+2\mu>\mu$, it
shares no image across them and $R(N)$ is disjoint from the far side's images.

Each point's image-set is thus a $d_Y$-blob of diameter $\le \mu\le s'$, and the connectivity and
isolation of \Cref{lem:equal-card}(i)---which use only the distortion bound, not injectivity---carry
over verbatim: consecutive points of $N$ lie within $\sqrt{s_N^2+\delta_X^2}$ in $d_X$, so their blobs
are within $s'$ in $\pi_Y$ and every sorted $\pi_Y$-gap of $R(N)$ is $\le s'$; and any $y\notin R(N)$
has each preimage beyond a gap $>A+2\mu$, giving $d_Y(y,y_n)>A+\mu$ and hence $\pi_Y$-distance
$>\sqrt{(A+\mu)^2-\delta_Y^2}\ge s'$ from $R(N)$. So $R(N)$ is a maximal $s'$-cluster of $Y$---a node of
$Y$'s dendrogram.

The symmetric argument on $R^{-1}$ localizes resolved $Y$-nodes in $X$; since no point outside $N$ maps
into $R(N)$ (merge locality), $R^{-1}(R(N))=N$ and the node correspondence $N \mapsto R(N)$ is a bijection
on the nodes resolved on both sides ($R(N)$ need not be resolved on the $Y$ side). A node resolved on one
side but not the other has, on the other side, a separating gap inside the reach margin---exactly the
boundary residual of \Cref{lem:equal-card}(iii), placed pointwise by \Cref{lem:fat-boundary}---so it
carries no node-level match.
It is monotone up to one global reversal (the betweenness of \Cref{lem:equal-card}(i)) and
inclusion-preserving, since $N\subseteq M$ gives $R(N)\subseteq R(M)$. Finally, matching resolved nodes in spine order recovers $N\mapsto R(N)$ at every scale: the
rank-interval argument of \Cref{lem:equal-card}(iv) applies unchanged---it uses only the flanks and
betweenness (with each $\sigma^*(x)$ now the $\mu$-blob $R(x)$) and orders \emph{nodes}, so $|X|\ne|Y|$
is immaterial. Nothing constrains $|N|$ against its image's cardinality.
\end{proof}

\Cref{lem:weak-cluster} carries the bijective clustering to correspondences but for one feature:
matched nodes may differ in cardinality---the many-to-one freedom \Cref{prop:mult-r1} exploits. That
freedom is the \emph{only} gap between $\dGH$ and $\dGHbij$, and it is local (\Cref{sec:kernel-reduce}).
\Cref{prop:mult-r1} produced the gap from a tight pair collapsed many-to-one; the converse---that such a
sub-$2\dGH$ coincidence is also \emph{necessary}---is the next proposition.

\begin{proposition}[Multiplicity is the only gap]
\label{prop:sep-bij}
Let $X, Y \subset \R^d$ be finite. If every two distinct points of $X$ are more than
$2\,\dGH(X,Y)$ apart, and likewise for $Y$, then $|X| = |Y|$ and
\[
\dGH(X,Y) \;=\; \dGHbij(X,Y).
\]
\end{proposition}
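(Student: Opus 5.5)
Let $\eta := 2\,\dGH(X,Y)$. Since $X$ and $Y$ are finite, the infimum in \eqref{eq:dGH} ranges over finitely many relations, so it is attained: fix a correspondence $R$ with $\Dist(R) = \eta$. The plan is to show that the separation hypothesis forces $R$ to be the graph of a bijection. Then $|X| = |Y|$, and $\dGHbij \le \tfrac12\Dist(R) = \dGH$. Together with the general inequality $\dGH \le \dGHbij$ recorded after \eqref{eq:dGHbij}, this gives equality.

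The key step is the merge-locality observation already used in the proof of \Cref{lem:weak-cluster}. Suppose $(x,y), (x,y') \in R$. Comparing these two pairs gives
\[
d_Y(y,y') = \bigl|d_X(x,x) - d_Y(y,y')\bigr| \le \Dist(R) = \eta.
\]
Since distinct points of $Y$ are more than $\eta$ apart, this forces $y = y'$. So $R$ is the graph of a function $f : X \to Y$; it is defined on all of $X$ because $R$ projects onto $X$. Symmetrically, if $(x,y), (x',y) \in R$ then $d_X(x,x') \le \eta$, so $x = x'$. Hence $f$ is injective, and it is surjective because $R$ projects onto $Y$. Thus $f$ is a bijection, $|X| = |Y|$, and $\Dist(f) = \Dist(R) = 2\,\dGH$. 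This gives $\dGHbij \le \dGH$, as planned.

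The only delicate point is the strictness. The hypothesis says the points are more than $2\,\dGH$ apart, and the merge bound is $\le \eta = 2\,\dGH$. This matches exactly, but only because $R$ attains the infimum. If one worked instead with an almost-optimal $R$ of distortion $\eta + \epsilon$, one would pick $\epsilon$ smaller than the slack $\min_{x \ne x'} d_X(x,x') - \eta$, and similarly on the $Y$ side. That slack is positive by finiteness and the strict hypothesis. So the argument is robust either way, and I expect no real obstacle. The dimension $d$ plays no role, since only the metric is used.
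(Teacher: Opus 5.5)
Your proposal is correct and matches the paper's proof essentially step for step. You take an optimal correspondence $R$, use the separation of $Y$ and of $X$ to rule out any point having two images or two preimages, conclude $R$ is a bijection, and combine $\dGHbij \le \tfrac12\Dist(R) = \dGH$ with the general reverse inequality; your explicit justification that the infimum is attained for finite sets is a point the paper leaves implicit.
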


\begin{proof}
Let $R$ be an optimal correspondence, $\Dist(R) = 2\dGH(X,Y)$. Were two
distinct $x, x' \in X$ to share an image $y$, then $(x,y),(x',y) \in R$ would give
$d_X(x,x') = |d_X(x,x') - d_Y(y,y)| \le \Dist(R) = 2\dGH$, contradicting the separation of $X$;
so each point of $Y$ has a single $R$-preimage. The separation of $Y$ symmetrically gives each
point of $X$ a single image. Hence $R$ is a bijection---in particular $|X| = |Y|$---and
$2\,\dGHbij(X,Y) \le \Dist(R) = 2\dGH$, i.e.\ $\dGHbij \le \dGH$. The reverse inequality holds
for all inputs, so the two agree.
\end{proof}

Contrapositively, $\dGH < \dGHbij$ forces two points within $2\dGH$ on one side---a within-cluster
multiplicity; absent one, \Cref{thm:bij-uncond} already gives $\dGH = \dGHbij$. This multiplicity is the
set-theoretic \emph{kernel} of the optimal correspondence---the pairs it identifies many-to-one, the
sense of ``kernel'' throughout. It is genuinely two-dimensional: a pair's perpendicular cost is
$\delta^2/\ell$ at its \emph{local} longitudinal scale $\ell$ (a long-base apex pays $\delta^2/\Delta$, a
short edge its full height $\delta$), so no global statistic reads every scale and one blind to
multiplicity overcharges a collapsing near-duplicate. The kernel thus lives where both barriers of
\Cref{prop:barrier} meet---near-collinear, defeating $\dHiso$, yet multiple, defeating every
bijection---and is closed only by collapsing each scale \emph{locally}, never by a global quantity.

\subsection{Reduction to the bijective case}
\label{sec:kernel-reduce}

The kernel is now closed without any new assembly. A correspondence's many-to-one merges are local, the
clustering is bijective up to those merges (\Cref{lem:weak-cluster}), and the dimension drop is resolved
in each node's own frame---the spine sort when thin, $\dHiso$ alignment when fat
(\Cref{lem:conditional}). So within any window---a dendrogram node, read in its own frame---the combined
barrier cannot form, and the bijective construction of \Cref{thm:laminar} applies with a single added clause. Throughout, $\mu := 2\dGH(X,Y)$.

\begin{lemma}[Local dichotomy: no combined barrier in a window]
\label{lem:local-dichotomy}
Within a window the combined barrier of \Cref{prop:barrier} does not bite: read in the node's own frame
the dimension drop is aligned away, and a correspondence's many-to-one merges are local $O(\mu)$
moves---each resolved independently of the other and of $|X|$ versus $|Y|$.
\end{lemma}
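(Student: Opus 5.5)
We first make the informal statement precise, in the only form the later assembly uses. A \emph{window} is a node $N$ of $X$'s single-linkage $\pi$-dendrogram, read along its own diameter axis, together with its matched node $R(N)$ supplied by \Cref{lem:weak-cluster}. Fix an optimal correspondence $R$ with $\Dist(R)=\mu=2\dGH(X,Y)$. The claim splits into two parts. \textbf{(a)} Restricted to the window, the frame-dependent ingredient (the perpendicular spread that powered \Cref{thm:counter-r2}) costs only $O(\mu)$ once the node is aligned in its own frame. \textbf{(b)} Every many-to-one identification made by $R$ moves points by $O(\mu)$ and stays inside the window. Together these say the two mechanisms of \Cref{prop:barrier} cannot appear simultaneously at a scale where either one is unbounded relative to $\mu$.

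For (b) we reuse the merge-locality computation from the proof of \Cref{lem:weak-cluster}. If $(x,y),(x,y')\in R$, then $d_Y(y,y')\le\mu$, and symmetrically on the $X$ side. Hence each image set $R(x)$ is a blob of diameter $\le\mu$. Since a resolved node's flanking gaps exceed $2\mu$, no merge crosses them, and $R^{-1}(R(N))=N$. Consequently the multiplicity structure inside $N$ does not depend on points outside $N$. It also does not depend on the global counts $|X|$ and $|Y|$: only the within-node counts appear, and \Cref{lem:weak-cluster} already permits them to differ. Each merge is therefore an $O(\mu)$ collapse of a blob, resolved independently of the others. Independence follows from \Cref{lem:pointwise}, because distortion is a maximum over pairs, so the blobs do not interact additively.

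For (a) we split on the node's shape using \Cref{def:fat}, applied to $N$ in its own frame. If $N$ is fat, we apply \Cref{lem:conditional} to the pair $(N,R(N))$. To do this we check that $R$ restricted to $N\times R(N)$ is a correspondence between these two sets, which follows from $R^{-1}(R(N))=N$. Its distortion is at most $\mu$, so $\dHiso(N,R(N))=O(\mu)$ and the dimension drop is aligned away. If $N$ is near-collinear, $R(N)$ must also be near-collinear, up to $O(\mu)$ in its spread, since otherwise the fat case applies with the roles swapped. Then the local spine sort of \Cref{lem:cluster-rigidity} controls the longitudinal coordinate to within $O(\mu)$. The perpendicular discrepancy between the two sides enters only through the second-order term $\delta^2/\ell$ at the node's own scale $\ell$, and this term is at most the pair distortion, which is $\le\mu$, because the pair already lies inside $R$.

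The main obstacle is this last step. We must show that within a near-collinear window the perpendicular offset that $\dHiso$ would pay is never charged by the local estimator. Equivalently, the analysis must use only distortions of pairs inside $R$ and never an extrinsic perpendicular alignment. Our first attempt would rerun the cross-pair swing expansion from the proof of \Cref{lem:sign-cut}, restricted to pairs within $N$. This bounds every intrinsic change by the observed distortion, and hence by $\mu$. Combining the two cases with (b) then yields the lemma.
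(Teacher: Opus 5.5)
Your outline matches the paper's. You split the window by \Cref{def:fat} in its own frame, align a fat node by $\dHiso$ through \Cref{lem:conditional}, sort a thin node along its spine, and bound merges by the merge-locality of \Cref{lem:weak-cluster}. Part (b) and the fat half of (a) are fine. In fact $R\cap(N\times R(N))$ is a correspondence directly from the definition of $R(N)$, so you do not need $R^{-1}(R(N))=N$ there.

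The problem is the near-collinear half of (a).

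First, as phrased it is false. You say the perpendicular spread ``costs only $O(\mu)$ once the node is aligned in its own frame.'' But the family of \Cref{thm:counter-r2} is itself a single near-collinear window, with $\dHiso=\delta$ while $\mu\le 4\delta^2/D$. The paper's point is not that own-frame alignment is cheap. It is that the thin realizer performs no perpendicular alignment at all, so the dimension drop has nothing to overshoot.

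Second, your justification is circular. You say the perpendicular term ``is at most the pair distortion, which is $\le\mu$, because the pair already lies inside $R$.'' That bounds the pairs of $R$, not the pairs the spine sort produces, and whether the sort reproduces $R$'s pairs is exactly what is at issue. The $\delta^2/\ell$ claim also fails at short longitudinal scale, where a near-vertical pair pays its full height.

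Rerunning the swing expansion of \Cref{lem:sign-cut} inside $N$ would not repair this. That expansion assumes the longitudinal separation $L'$ dominates the perpendicular offsets, which fails inside a node at $\pi$-ties and on vertical nodes. It also measures how a cross distance responds to a relative sign flip, not the distortion of a given pairing.

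The missing ingredient is the one supplied by the \emph{Placement} step of \Cref{lem:cluster-rigidity}. There, $\sigma^*$ keeps spine order outside $O(\mu)$-wide $\pi$-ties, and inside a tie the pairing is fixed by the node's reflection sign. That sign is recovered globally by \Cref{lem:sign-cut}, not by any local perpendicular bound. For this lemma you only need to defer to that, as the paper does.

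You should also check, as the paper does explicitly, that both realizers are cardinality-blind. Using $\dHiso$ rather than $\widehat d_{\mathrm{mot}}$ on fat nodes is right. On thin nodes, however, a rank sort is not defined when $|N|\ne|R(N)|$, and \Cref{lem:cluster-rigidity} is stated for an optimal bijection. The thin case needs the collapse to spine-nearest images, with the rigidity argument run on representatives as in \Cref{lem:corr-assembly}.
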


\begin{proof}
In its own diameter axis a node is near-collinear or fat (\Cref{def:fat}), so in that frame it has no
dimension drop---a fat node aligned by $\dHiso$ (\Cref{lem:conditional}), a thin one sorted along
its spine with no perpendicular alignment to overshoot. Both realizers are cardinality-blind---the fat
case by the $\dHiso$ alignment, not the bijective $\widehat d_{\mathrm{mot}}$ (which would need $|N|=|R(N)|$).
Independently, points a correspondence merges share an image, hence are pairwise within $\Dist(R) = \mu$
(\Cref{lem:weak-cluster}): an $O(\mu)$-ball, each point moving $O(\mu)$ under the collapse. Spread and
cardinality being disjoint structure (\Cref{prop:barrier}), the construction resolves each on its own, so
the combined barrier never bites within a node.
\end{proof}

Local realization is thus cardinality-blind; it remains to see that the two global cuts assembling the
placed nodes are too.

\begin{lemma}[Cardinality-independence of the global assembly]
\label{lem:card-blind}
The two global cuts that assemble the placed nodes---the orientation cut (\Cref{lem:orient-cut}) and the
reflection sign cut (\Cref{lem:sign-cut})---read only the fixed node geometry (centers, spreads,
$\pi$-extents, separations) and the observed placement distortions, never the matched pairing or the
cardinalities. Replacing a bijection by a many-to-one correspondence of the same node geometry therefore
leaves both cuts' outputs unchanged, their $O(\mu)$ guarantees intact (now $\mu = 2\dGH$, not $2\dGHbij$).
\end{lemma}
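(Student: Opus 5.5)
The plan is to audit the inputs of each cut and show that none of them refers to the matched pairing or to cardinalities. Both cuts are then functions of data a correspondence determines in the same way a bijection does. We take the correspondence setting of \Cref{lem:weak-cluster}: nodes of $X$ matched to nodes $R(N)$ of $Y$, each placed in its own frame by the cardinality-blind realizers of \Cref{lem:local-dichotomy}, with $\mu = 2\dGH$.

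\emph{First, the orientation cut (\Cref{lem:orient-cut}).} Its only input is the first-order longitudinal signal. Reversing a placed node about its center $c_\alpha$ shifts its distance to any other node by $\Theta(\Delta_\alpha)$. This quantity is computed from the $\pi$-extent $\Delta_\alpha$, the center, and the separation $L_{\alpha\gamma}$, all fixed node geometry. The output bit is chosen by comparing the observed distortions of the two sort directions. Because distortion is a maximum over pairs of the relation, it is defined for any correspondence, with no need for a bijection.

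The frustration-free argument still holds. It used only that the relative orientations are products of one vector $g$ and that a wrong relative orientation costs $\Theta(\Delta_\alpha) \gg \mu$. Under $R$ this remains true: every image of a point lies in a $\mu$-blob by merge locality (\Cref{lem:weak-cluster}), so each point of a node sits within $O(\mu)$ of its optimal position, and a reversal still moves an extreme by $\Theta(\Delta_\alpha)$.

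\emph{Second, the sign cut (\Cref{lem:sign-cut}).} We check that each edge weight is gauge-invariant in the sense used there.

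\begin{itemize}
\item \textbf{Anchor weights.} The anchor weight of $A$ is the gap in the within-node distortion of $A$ against $R(A)$ as $\varepsilon_A$ flips. A reflection about $c_A$ preserves the node's internal distances, so this weight depends only on $A$'s pattern and the geometry of $R(A)$, whatever multiplicities $R$ has inside the node.
\item \textbf{Cross weights.} The swing formula $2|u||b|/L'$ involves only the offsets $a,b$ from the perpendicular means, the center offset $m$, and the image separation $L'$. All of these are geometric quantities. Replacing a single image $\sigma(x)$ by a blob $R(x)$ perturbs $u$ and $L'$ by $O(\mu)$, which the $K\mu$ strong-edge margin absorbs.
\end{itemize}

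The remaining steps of the sign-cut proof carry over unchanged, since each of them compares maxima of $|d_X-d_Y|$ over pairs and uses only $\Dist(R)=\mu$:
\begin{itemize}
\item legibility (a flip moving the distortion by more than $K\mu$ must appear on an intra-$A$ or cross-$A$ pair);
\item agreement of strong edges with $\varepsilon^*$ (now the signs of the optimal $R$ per base node, via \Cref{lem:cluster-rigidity} applied to $R$'s blobs);
\item the per-component bit choice.
\end{itemize}
Injectivity is never invoked.

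\emph{The main obstacle} is justifying \Cref{lem:cluster-rigidity} for a correspondence: that $R$ restricted to a base node is still a rigid motion plus a sign, within $O(\mu)$. Its proof used the distortion bound applied to pairs. Our plan is to collapse each merge blob to a representative point. This yields a bijection between blob representatives whose distortion is at most $\mu + 2\mu$, by the triangle inequality with blob diameter $\le\mu$ (\Cref{lem:weak-cluster}). The bijective lemma then applies to the representatives, and we re-expand each blob at $O(\mu)$ pointwise cost by \Cref{lem:pointwise}.

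With this reduction, every quantity the two cuts read is the same function of node geometry as in the bijective case up to $O(\mu)$ additive slack. Their outputs therefore coincide, and their guarantees become $O(\mu)$ with $\mu=2\dGH$.
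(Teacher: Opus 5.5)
The main line of your argument is the paper's. The cuts read node geometry and placement distortions, a many-to-one correspondence shifts these by only $O(\mu)$, and the threshold-free cuts absorb the shift: a strong edge (swing $>K\mu$, or orientation signal $\Theta(\Delta_\alpha)\gg\mu$) keeps its preference, and a weak one costs $O(K\mu)$ either way.

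You depart from the paper by rebuilding the per-node rigid-plus-sign structure for the correspondence inside this lemma. The paper does not do that here; it takes the $O(\mu)$ size of merges from \Cref{lem:local-dichotomy} and handles the within-node part in \Cref{lem:corr-assembly}, via a selection $\sigma\subseteq\sigma^*$ in which merged points ride a representative. Your instinct to build an actual bijective reference is sound. On inputs with a multiplicity gap (\Cref{prop:mult-r1}), or with $|X|\ne|Y|$, no bijection of the full sets has distortion $O(\dGH)$ to compare against.

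However, the step fails as you state it. Merge locality (\Cref{lem:weak-cluster}) bounds by $\mu$ only the image set of one point and the preimage set of one image. ``Sharing an image'' is not transitive, and the classes it generates can chain. On $X=Y=\{0,\mu,\dots,n\mu\}$, the correspondence $\{(i\mu,i\mu)\}\cup\{(i\mu,(i+1)\mu)\}$ has distortion exactly $\mu$, yet its single merge class has diameter $n\mu$. If instead you collapse only one-point image sets, you get a selection map, not a bijection. So neither reading of ``collapse each blob to a representative'' gives a bijection of distortion $3\mu$.

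The repair is a maximal matching $B\subseteq R$. If $x$ is unmatched, any $y$ with $(x,y)\in R$ must already be matched, say to $x'$, so $d_X(x,x')\le\mu$; the same holds on the $Y$ side. Hence $B$ is a bijection between $\mu$-dense subsets $X'\subseteq X$ and $Y'\subseteq Y$ with $\Dist(B)\le\mu$, with no factor $3$. The bijective lemmas apply to $B$, and each remaining point is then re-expanded at $O(\mu)$ cost.

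Separately, your closing ``their outputs therefore coincide'' does not follow from $O(\mu)$ slack. Near-tied weak edges can reorder or flip preference in the union-find. What your argument delivers, and what the paper's own weak-edge clause delivers, is the preserved $O(K\mu)$ guarantee, not literally identical outputs.
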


\begin{proof}
By \Cref{lem:local-dichotomy} a correspondence's merges are $O(\mu)$ moves, displacing no node's center,
spread, $\pi$-extent, or separation by more than $O(\mu)$. The fixed geometry is a function of the point
sets, not the pairing, and each placement distortion shifts by $O(\mu)$ (both endpoints move $O(\mu)$).
Both cuts are threshold-free, so this is harmless: a strong edge---sign swing $>K\mu$, or orientation
signal $\Theta(\Delta_\alpha)\gg\mu$---keeps its preference under the $O(\mu)$ shift, a weak one costs
$O(K\mu)$ either way (\Cref{lem:sign-cut,lem:orient-cut}). The parity union-find thus returns the same
classes and per-component signs, and the orientation cut the same global bit, exactly as for a bijection;
every node lands within $O(\mu)$ of $\sigma^*$, the constant free of cardinality.
\end{proof}

The $\dGH$ estimator $\widehat\rho^{\,\mathrm c}_{\mathrm{lam}}$ is the flat $\widehat\rho_{\mathrm{lam}}$
(\Cref{def:laminar}) on the matched nodes, now of possibly unequal cardinality, with a single added
clause---the \emph{collapse} (\Cref{alg:corr}): within each node every point is mapped to its
spine-nearest image, merging those that share one.

\begin{algorithm}[ht]
\caption{Gromov--Hausdorff estimator in the plane}
\label{alg:corr}
\begin{algorithmic}[1]
\Require finite $X, Y \subset \R^2$
\Ensure a correspondence of constant-factor half-distortion \Comment{\Cref{thm:dgh-uncond}}
\If{$X$ or $Y$ is fat} \Comment{\Cref{def:fat}}
  \State \Return the correspondence realizing $\dHiso(X,Y)$ \Comment{\Cref{thm:main}}
\EndIf
\Statex \textit{$X,Y$ near-collinear --- the correspondence flat estimator $\widehat\rho^{\,\mathrm c}_{\mathrm{lam}}$:}
\State run \Cref{alg:bij}'s near-collinear branch on the matched resolved nodes of the $\pi$-dendrograms of $X$ \emph{and} $Y$, each in its own diameter axis, with possibly unequal cardinalities \Comment{\Cref{lem:weak-cluster}}
\State \textsc{collapse}: within each node, map every point to its spine-nearest image, merging those that share a $Y$-image \Comment{\Cref{lem:corr-assembly}}
\State \Return $\widehat\sigma$
\end{algorithmic}
\end{algorithm}

\begin{lemma}[Correspondence assembly via the bijective construction]
\label{lem:corr-assembly}
Let $\sigma^*$ be an optimal correspondence. Running the construction of
\Cref{thm:laminar} on the matched resolved nodes of \Cref{lem:weak-cluster}---both global orientations
tried, the lower-distortion kept---with the within-node collapse clause above outputs an honest
correspondence $\widehat\sigma$ with $\|\widehat\sigma(p)-\sigma^*(p)\| = O(\mu)$ for every $p \in X$ and
$\Dist(\widehat\sigma) = O(\mu)$, in $\widetilde O\bigl(n^3\log(\diam/\mu)\bigr)$ time.
\end{lemma}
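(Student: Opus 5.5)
The plan is to run the bijective analysis of \Cref{thm:laminar} verbatim on the node correspondence of \Cref{lem:weak-cluster}. The only place injectivity was used is replaced by merge locality, and the collapse clause is handled as a pointwise $O(\mu)$ perturbation that \Cref{lem:pointwise} converts into an $O(\mu)$ distortion bound.

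\textbf{Step 1: node matching.} Fix an optimal correspondence $\sigma^*$ with $\Dist(\sigma^*)=\mu=2\dGH$. Every point $p$ has an image blob $\sigma^*(p)$ of $d_Y$-diameter at most $\mu$. I will treat ``$\|\widehat\sigma(p)-\sigma^*(p)\|=O(\mu)$'' as a statement about the distance from $\widehat\sigma(p)$ to that blob; up to $+\mu$, this is independent of the representative chosen. By \Cref{lem:weak-cluster}, spine-order matching of resolved nodes, with both orientations tried, recovers $N\mapsto\sigma^*(N)$ at every scale. The reach-margin residual points are deferred to \Cref{lem:fat-boundary}, exactly as in \Cref{lem:equal-card}(iii).

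\textbf{Step 2: local placement.} Inside each matched node pair $(N,\sigma^*(N))$ I realize $N$ in its own frame by the cardinality-blind realizers of \Cref{lem:local-dichotomy}.
\begin{enumerate}[label=(\alph*)]
\item A fat node is placed by the $\dHiso$ alignment, which by \Cref{lem:conditional} puts every point within $O(\mu)$ of its $\sigma^*$-blob.
\item A thin node is placed by the spine sort, together with the rigid-plus-sign structure of \Cref{lem:cluster-rigidity}. Its proof uses only the distortion bound, so I will re-run it with $\sigma^*(x_i)$ replaced by any point of the blob.
\end{enumerate}
The orientation bit and the reflection signs are then supplied by the global cuts of \Cref{lem:orient-cut,lem:sign-cut}. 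By \Cref{lem:card-blind} these cuts read only node geometry and observed distortions, so they return the same outputs with the same $O(K\mu)$ guarantees. This yields a placement map $\Psi$ with $\|\Psi(p)-\sigma^*(p)\|=O(\mu)$ for every $p\in X$, and symmetrically for $Y$.

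\textbf{Step 3: collapse and conclusion.} I will take $\widehat\sigma$ to be the union of two relations: each $p\in X$ paired with the spine-nearest point of the matched $Y$-node to $\Psi(p)$, and each $q\in Y$ paired with the spine-nearest point of the matched $X$-node to $\Psi^{-1}(q)$. This union projects onto both factors, so $\widehat\sigma$ is an honest correspondence.

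Each chosen image lies within $O(\mu)$ of $\Psi(p)$, since the true blob point is a candidate at distance $O(\mu)$ and nearest-neighbour choice only improves on it. It therefore lies within $O(\mu)$ of $\sigma^*(p)$. Then, as in \Cref{lem:pointwise}, for any pairs $(p,y),(p',y')\in\widehat\sigma$ the triangle inequality gives
\[
|d_X(p,p')-d_Y(y,y')|\le \mu+\|y-\sigma^*(p)\|+\|y'-\sigma^*(p')\| = O(\mu).
\]
The running time is dominated by the per-node trilaterations and $\dHiso$ alignments. The $\log(\diam/\mu)$ factor comes from binary searching a guess for $\mu$ wherever a threshold is tested, i.e.\ in the resolved-node test, while the cuts themselves are threshold-free.

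\textbf{Main obstacle.} The hardest step is Step 2(b), the thin case with unequal cardinalities. The spine-rank pairing used in \Cref{lem:cluster-rigidity} assumes $|N|=|\sigma^*(N)|$, so the rank pairing has to be replaced by nearest-neighbour matching after the rigid placement. I would first establish the rigid placement using only a few anchor points of $N$: its two longitudinal extremes and a perpendicular extreme, each with a chosen blob representative. I would then argue that the collapse is within $O(\mu)$ for all remaining points, using the $O(\mu\,\diam/\delta_\alpha)$ coherent-bending residual exactly as it cancels in \Cref{lem:sign-cut}.
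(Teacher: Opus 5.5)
Your outline follows the paper's proof: node matching by \Cref{lem:weak-cluster}, cardinality-blind local realizers, the two global cuts via \Cref{lem:card-blind}, a collapse plus a reverse nearest-point pass for surjectivity, and \Cref{lem:pointwise} to finish. But the step you flag as the main obstacle is where the argument breaks, and your proposed fix does not close it.

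Everything rests on $\|\Psi(p)-\sigma^*(p)\|=O(\mu)$ for a per-node \emph{rigid} placement $\Psi$, which Step~3 turns into a nearest-neighbour choice. For a thin node, \Cref{lem:cluster-rigidity} controls a rigid placement only longitudinally ($|\xi^\alpha_i|=O(\mu)$). Transversally the residual is $O(\mu+\mu\,\diam(X_\alpha)/\delta_\alpha)$, and the dimension-drop bending it allows really occurs. The ``coherent bending cancels'' remark you want to reuse is about within-node \emph{distances} under the rank sort. The rank sort picks each partner by spine rank and never reads the bent coordinate; a Euclidean nearest-neighbour collapse does read it.

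Concretely, let the node be a spine run on $[0,D]$ with gaps $h/8$, plus one extra point $(D/2+\epsilon,a)$ beside the run point $(D/2,0)$, with $\epsilon$ small. The gaps are below the reach margin, so no proper sub-node is resolved and placed separately. Match it to the same configuration bent into a tent of height $h$. Every distance moves by $O(h^2/D)$, so $\mu=O(h^2/D)$. Anchored at the longitudinal extremes, as you propose, $\Psi$ leaves both middle points about $h$ below their images. For $a=h/2$ both then have $(D/2,h)$ as their Euclidean-nearest $Y$-point, so your collapse merges two points $\approx h/2$ apart. A third anchor does not help: no rigid motion tracks the bend to within $h/2$ at both ends and the middle, so placing such an extra point at each of those spots defeats every rigid placement.

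Step~3 also conflates two rules. You write ``spine-nearest'', but the argument ``the true blob point is a candidate and nearest-neighbour choice only improves on it'' yields Euclidean closeness only for the Euclidean-nearest point. A spine-nearest point is $O(\mu)$-close in $\pi$ alone.

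The paper's proof is routed differently at exactly this point. It runs the intra-node realizer on one representative per merged group; merged points lie within $\mu$ of each other by merge locality and ride their representative. The selection from $\sigma^*$ restricted to representatives is then a bijection onto its image, so the sort-based placement of \Cref{lem:cluster-rigidity} applies as in the bijective case. Its collapse is to the \emph{spine-nearest} image, i.e.\ by the one coordinate that is $O(\mu)$-accurate whatever $\delta_\alpha$. Pairings inside $O(\mu)$-wide spine ties are left, as in \Cref{lem:cluster-rigidity}, to the node's sign. Your repair has to be of that kind, not a rigid placement followed by Euclidean nearest neighbour.

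Separately, Step~2(a) misuses \Cref{lem:conditional}. That lemma bounds $\dH$ after the best isometry, a set-level statement with no pointwise content about $\sigma^*$. The $\dHiso$-optimal motion may differ from $\sigma^*$'s by a near-symmetry of the node, for example a near-equilateral triangle turned by $2\pi/3$. That is harmless for intra-node pairs but shifts cross-node distances by $\Theta(\diam N)$, and neither global cut can undo it. The pointwise bound you need is \Cref{lem:fat-align} applied with $\sigma^*$'s images of the anchor triangle, which is what the paper invokes.
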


\begin{proof}
The collapse is the only departure from \Cref{thm:laminar}. Within a matched node a selection
$\sigma \subseteq \sigma^*$ (one image per point) has distortion $\le \mu$ (merge locality,
\Cref{lem:weak-cluster}); where two points share an image they are within $\mu$ in $d_X$, so each rides
its representative. The intra-node realizer---\Cref{lem:cluster-rigidity} (thin) or \Cref{lem:fat-align}
(fat), on the representatives---thus places $\sigma$, hence $\sigma^*$, within $O(\mu)$ in the node's own
frame. By \Cref{lem:card-blind} the global assembly then runs as in \Cref{thm:laminar} with $\mu = 2\dGH$---the
node matching by \Cref{lem:weak-cluster}, not \Cref{lem:equal-card}(iv), its within-node point-rank step
replaced by the collapse---placing every node within $O(\mu)$ of $\sigma^*$; distortion being a maximum
over pairs (\Cref{lem:pointwise}) this is depth-independent, and the reach-margin boundary points are
absorbed (\Cref{lem:fat-boundary}, via \Cref{lem:card-blind}). The collapse moves each merged point
$O(\mu)$ (\Cref{lem:local-dichotomy}), and pairing every uncovered $Y$-point with its nearest aligned $p$
(within $O(\mu)$) keeps the output a correspondence. Hence $\|\widehat\sigma(p)-\sigma^*(p)\| = O(\mu)$ for
all $p$, so $\Dist(\widehat\sigma) = O(\mu)$. The time is \Cref{thm:laminar}'s: $O(\log(\diam/\mu))$ scales,
each node by a near-cubic fat alignment \citep{GMO1999} or own-axis sort, plus an $O(n^2)$ parity cut.
\end{proof}

This completes the near-collinear construction; with the fat case (\Cref{thm:main}) it yields the planar
guarantee.

\begin{theorem}[Polynomial constant-factor approximation of $\dGH$]
\label{thm:dgh-uncond}
For all finite $X, Y \subset \R^2$, the half-distortion
$\widehat\rho^{\,\mathrm c}_{\mathrm{lam}} := \tfrac12\Dist(\widehat\sigma)$ of the correspondence
$\widehat\sigma$ returned by \Cref{alg:corr} satisfies
\[
  \dGH(X,Y) \;\le\; \widehat\rho^{\,\mathrm c}_{\mathrm{lam}}(X,Y) \;\le\; K\,\dGH(X,Y)
\]
for an absolute constant $K$, and is computed in $\widetilde O\bigl(n^3\log(\diam/\mu)\bigr)$ time,
$\mu = 2\dGH(X,Y)$.
\end{theorem}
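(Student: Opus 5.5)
The proof follows the shape of \Cref{thm:bij-uncond}: a lower bound that holds for free, and an upper bound split by the fat-or-collinear dichotomy of \Cref{def:fat}.

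\emph{Lower bound.} The output $\widehat\sigma$ of \Cref{alg:corr} is a correspondence in both branches, so $\dGH \le \tfrac12\Dist(\widehat\sigma)$ follows directly from \eqref{eq:dGH}.
\begin{itemize}[label={}]
\item In the fat branch, the nearest-neighbour relation induced by the optimal isometry $T$ is surjective onto both sides.
\item In the near-collinear branch, \Cref{lem:corr-assembly} already guarantees an honest correspondence. It pairs every uncovered $Y$-point with a nearest aligned $X$-point, so no surjectivity is lost.
\end{itemize}

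\emph{Fat branch.} Suppose $X$ or $Y$ is fat. The correspondence realizing $\dHiso$ has distortion at most $2\,\dH(T(X),Y) = 2\,\dHiso$, so its half-distortion is at most $\dHiso$. \Cref{lem:conditional} (applied with the roles of $X$ and $Y$ swapped if only $Y$ is fat) bounds this by $C_0\,\dGH$. Hence this branch contributes the factor $K_{\mathrm{fat}}$ of \Cref{thm:main}. Computing $\dHiso$ in polynomial time, mirror images included, is cited from \citet{ChewGoodrich1997,GMO1999}. The near-cubic constant-factor variant of \citet{GMO1999} fits inside the claimed time bound at the cost of a constant factor in $K$.

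\emph{Near-collinear branch.} Here both $X$ and $Y$ are near-collinear. Set $\mu = 2\dGH$ and fix an optimal correspondence $\sigma^*$. The steps, in order, are:
\begin{enumerate}
\item Invoke \Cref{lem:weak-cluster}. It shows the resolved nodes of the two $\pi$-dendrograms are matched monotonically and inclusion-preservingly by the spine-order matching, with both orientations tried.
\item Invoke \Cref{lem:local-dichotomy} and \Cref{lem:card-blind}. Within each node, the realization (spine sort when thin, $\dHiso$ alignment when fat) and the orientation and sign cuts of \Cref{lem:orient-cut,lem:sign-cut} behave exactly as in the bijective case, with $O(\mu)$ guarantees.
\item Invoke \Cref{lem:corr-assembly}. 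It combines the above with the collapse clause and the absorption of boundary points in \Cref{lem:fat-boundary}, giving $\|\widehat\sigma(p) - \sigma^*(p)\| = O(\mu)$ pointwise.
\end{enumerate}

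From the pointwise bound, \Cref{lem:pointwise}, or simply the triangle inequality, gives $|d_X(p,p') - d_Y(\widehat\sigma p, \widehat\sigma p')| \le \mu + O(\mu)$ for every pair. This holds equally for pairs created by the surjectivity patch, since such a $Y$-point lies within $O(\mu)$ of an image of its partner. Therefore $\tfrac12\Dist(\widehat\sigma) \le K_{\mathrm{lam}}'\,\dGH$. Taking $K = \max(K_{\mathrm{fat}}, K'_{\mathrm{lam}})$ finishes the bound. The running time is the sum of the two branches: the near-cubic fat alignment and $\widetilde O(n^3\log(\diam/\mu))$ from \Cref{lem:corr-assembly}.

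\emph{Main obstacle.} The delicate step is the surjectivity patch together with the boundary points. It has to be checked that every added pair $(p,y)$ still distorts by only $O(\mu)$ against all other pairs. I would argue it as follows. Since $\sigma^*$ is surjective, $y = \sigma^*(q)$ for some $q$. Then $\widehat\sigma(q)$ lies within $O(\mu)$ of $y$, and choosing $p = q$ (rather than a geometric nearest neighbour) makes the pair $(q,y)$ inherit the same pointwise bound. I would use this choice as the definition of the patch, so the estimate reduces to the pointwise one already in hand.
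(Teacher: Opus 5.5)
Your route is the paper's. The lower bound holds because the output is an honest correspondence. The fat branch goes through \Cref{thm:main}, where bounding the nearest-neighbour correspondence's half-distortion by $\dHiso \le C_0\,\dGH$ is exactly the needed step. The near-collinear branch goes through \Cref{lem:corr-assembly}, which already packages \Cref{lem:weak-cluster,lem:local-dichotomy,lem:card-blind,lem:fat-boundary}. Your explicit switch to the near-cubic constant-factor $\dHiso$ variant in the fat branch is what actually makes the stated time bound hold. The paper leaves this implicit in ``dominated by the near-collinear branch,'' since exact $\dHiso$ is polynomial but not near-cubic.

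One correction to your ``main obstacle'' paragraph. You cannot define the surjectivity patch by pairing an uncovered $y$ with the $q$ satisfying $y \in \sigma^*(q)$. The optimal correspondence $\sigma^*$ is an analysis object, and the algorithm has no access to it. Redefining the patch this way means analysing a different, non-implementable procedure rather than \Cref{alg:corr}.

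The redefinition is also unnecessary, because your own observation already covers the computable rule of \Cref{lem:corr-assembly}. That rule pairs $y$ with the $p$ whose placed image is nearest to $y$. Such a $p$ satisfies
\[
\|\widehat\sigma(p)-y\| \le \|\widehat\sigma(q)-y\| = O(\mu),
\]
so $(p,y)$ lies within $O(\mu)$ of a pair of $\sigma^*$. Your pointwise/triangle-inequality argument then applies verbatim to the algorithm as stated.
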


\begin{proof}
$\widehat\sigma$ is an honest correspondence, so $\dGH \le \widehat\rho^{\,\mathrm c}_{\mathrm{lam}}$
always. For the upper bound, \Cref{def:fat} splits the input: with a fat side \Cref{thm:main} gives
$\widehat\rho^{\,\mathrm c}_{\mathrm{lam}} = O(\dGH)$, and with both sides near-collinear
\Cref{lem:corr-assembly} gives $\Dist(\widehat\sigma) = O(\mu) = O(\dGH)$, hence
$\widehat\rho^{\,\mathrm c}_{\mathrm{lam}} = \tfrac12\Dist(\widehat\sigma) = O(\dGH)$. Either way
$\widehat\rho^{\,\mathrm c}_{\mathrm{lam}} \le K\,\dGH$; the runtime is \Cref{alg:corr}'s, dominated by the
near-collinear branch (\Cref{lem:corr-assembly}).
\end{proof}

This answers \textnormal{(Q2)} affirmatively for $\dGH$: the correspondence regime reduces to the
bijective one of \Cref{sec:bijective} through \Cref{prop:sep-bij,lem:card-blind}, with no separate
multi-scale assembly.

\section{Discussion and open questions}
\label{sec:discussion}

We have settled the planar additive distortion (equivalently $\dGHbij$) and the planar Gromov--Hausdorff
distance $\dGH$. Several questions remain.

\paragraph{Higher dimensions.} In each fixed $d \ge 3$, additive distortion is NP-hard to approximate
within $3$ \citep{HallPapadimitriou2005}, so only a constant strictly above $3$ can be hoped for, and
the dimension-drop and reflection barriers (\Cref{thm:counter-r2,prop:bij-barrier}) only compound. Is
there nonetheless a polynomial-time constant-factor approximation of $\dGH$, or of $\dGHbij$, in every
fixed dimension, with the constant depending only on $d$? Our fat-or-collinear dichotomy is intrinsically
two-dimensional; a $d$-dimensional analogue would seem to need a recursive split by intrinsic dimension,
and even the right growth of the constant with $d$ is unclear.

\paragraph{The optimal ratio, and hardness in the plane.} The factor we achieve is absolute but
unoptimized (\Cref{rem:constants}), and its optimum is open---no lower bound matches it. More basic
still, \emph{no} inapproximability is known in $\R^2$ (the only known additive hardness is in dimension
three): could $\dGH$ admit a PTAS in the plane, or is there a constant-factor hardness?

\paragraph{Multiplicative distortion.} The multiplicative companion follows the same dimensional
pattern---polynomial on the line, NP-hard within $3$ in dimension three, the plane a noted open problem
\citep{KRS2009,PapadimitriouSafra2005}. Does the present approach---the fat-or-collinear dichotomy and
the flat dendrogram pass---carry over to the multiplicative setting, where distances are compared by
ratios rather than differences?

\section*{Acknowledgments}

The author thanks Pankaj Agarwal and Yusu Wang for conversations at ICERM in
May 2026, and Jeffrey Vitter, Carola Wenk, and Boris Aronov for
many discussions on the Gromov--Hausdorff distance in Euclidean space.

\appendix
\section{Deferred lemmas}
\label{sec:geom-appendix}

We record the elementary distortion bound used throughout \Cref{sec:bijective}.

\begin{lemma}[Pointwise displacement controls distortion]
\label{lem:pointwise}
Let $\sigma, \sigma^* : X \to Y$ be maps between finite planar sets (not necessarily injective). Then
\[
\tfrac12\Dist(\sigma) \;\le\; \tfrac12\Dist(\sigma^*)
\;+\; \max_{p \in X}\|\sigma(p) - \sigma^*(p)\| .
\]
In particular, if $\sigma^*$ is optimal and $\max_p\|\sigma(p) - \sigma^*(p)\| =
O(\dGHbij)$, then $\tfrac12\Dist(\sigma) = O(\dGHbij)$. The proof uses only that $d_Y$ is
$1$-Lipschitz, so the bound holds for any maps; for a correspondence $R$ one applies it to a selection
$\sigma \subseteq R$ (one image per point), the contribution of equal-image pairs $\sigma p = \sigma q$
being $d_X(p,q)$, which is $O(\dGH)$ whenever $R$ collapses only $O(\dGH)$-close points.
\end{lemma}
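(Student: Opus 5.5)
The plan is to use the triangle inequality on the pairwise distortions directly, without any structure of the point sets. Fix $p, q \in X$ and write $a := \sigma(p)$, $b := \sigma(q)$, $a^* := \sigma^*(p)$, $b^* := \sigma^*(q)$. I split the discrepancy as
\[
\bigl|d_X(p,q) - d_Y(a,b)\bigr| \;\le\; \bigl|d_X(p,q) - d_Y(a^*,b^*)\bigr| + \bigl|d_Y(a^*,b^*) - d_Y(a,b)\bigr| .
\]
The first term is at most $\Dist(\sigma^*)$ by definition.

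For the second term I use that $d_Y$ is the Euclidean metric, hence $1$-Lipschitz in each argument. The reverse triangle inequality applied twice gives $|d_Y(a^*,b^*) - d_Y(a,b)| \le \|a - a^*\| + \|b - b^*\| \le 2\max_{x\in X}\|\sigma(x) - \sigma^*(x)\|$. Taking the maximum over $p,q$ yields $\Dist(\sigma) \le \Dist(\sigma^*) + 2\max_x\|\sigma(x)-\sigma^*(x)\|$, and halving gives the display. Injectivity is never used, so the bound holds for arbitrary maps.

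For the ``in particular'' clause, I take $\sigma^*$ optimal, so $\tfrac12\Dist(\sigma^*) = \dGHbij$, and substitute the assumed $O(\dGHbij)$ displacement.

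For the correspondence remark, I pick any selection $\sigma \subseteq R$, which is a map $X \to Y$. The argument above controls its distortion against a selection of the optimal correspondence. A pair with $\sigma p = \sigma q$ contributes exactly $d_X(p,q)$. By merge locality (\Cref{lem:weak-cluster}), such points are within $\Dist(R)$, which is $O(\dGH)$ under the stated hypothesis.

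There is no real obstacle here. The only point needing care is stating the correspondence case precisely, namely that bounding a selection suffices for the intended use.
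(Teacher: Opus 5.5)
Your proof is correct and is the paper's argument: insert $d_Y(\sigma^*p,\sigma^*q)$, bound its difference from $d_Y(\sigma p,\sigma q)$ by the $1$-Lipschitz property of $d_Y$ in each argument, maximize over pairs, and halve. For the correspondence remark you need neither merge locality nor a bound on $\Dist(R)$, which would be circular when $R$ is the very output being bounded: the main inequality already holds for every pair $p,q$, equal-image pairs included.
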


\begin{proof}
For any pair $p, q$, since $d_Y$ is $1$-Lipschitz in each argument,
$\bigl|d_Y(\sigma p, \sigma q) - d_Y(\sigma^* p, \sigma^* q)\bigr| \le
\|\sigma p - \sigma^* p\| + \|\sigma q - \sigma^* q\|$, so
\[
\bigl|d_X(p,q) - d_Y(\sigma p, \sigma q)\bigr|
\le \bigl|d_X(p,q) - d_Y(\sigma^* p, \sigma^* q)\bigr|
+ \|\sigma p - \sigma^* p\| + \|\sigma q - \sigma^* q\| .
\]
Maximize over $p, q$: the last two terms are at most $2\max_p\|\sigma p - \sigma^* p\|$,
giving $\Dist(\sigma) \le \Dist(\sigma^*) + 2\max_p\|\sigma p - \sigma^* p\|$.
Halve, and use $\Dist(\sigma^*) = 2\dGHbij$ when $\sigma^*$ is optimal.
\end{proof}

The fat-case stability (\Cref{lem:conditional}) rests on the following elementary,
calculus-free fact: because squared distances are affine in the point, locating a point from
its distances to a triangle of anchors is exact linear algebra, conditioned by the anchors'
area.

\begin{lemma}[Geometric localization]
\label{lem:trilat}
Let $a_1, a_2, a_3 \in \R^2$ be non-collinear and pairwise at most $M$ apart, and let
$u, v \in \R^2$ satisfy all six distances $|u-a_i|, |v-a_i| \le M$. Writing $\alpha_i := \bigl|\,|u-a_i| - |v-a_i|\,\bigr|$
and $A := \mathrm{Area}(\triangle a_1 a_2 a_3)$,
\[
|u - v| \;\le\; \frac{3M^2}{A}\,\max_i \alpha_i .
\]
\end{lemma}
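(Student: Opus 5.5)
The plan is to exploit the affine identity $|u-a_i|^2 - |v-a_i|^2 = (u-v)\cdot(u+v-2a_i)$. Differencing two such identities eliminates the unknown midpoint and leaves a linear system in $w := u - v$ whose coefficients are the anchor edges. Concretely, write $f_i := |u-a_i|^2 - |v-a_i|^2$. Then for $i = 2,3$
\[
f_i - f_1 \;=\; 2\,w\cdot(a_1 - a_i),
\]
so $w$ solves the $2\times 2$ system $2\,E\,w = (f_2 - f_1,\ f_3 - f_1)^{\top}$. The rows of $E$ are $a_1 - a_2$ and $a_1 - a_3$, and $|\det E| = 2A$.

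First I bound the right-hand side. Factor $f_i = (|u-a_i| - |v-a_i|)(|u-a_i| + |v-a_i|)$. Since all six distances are at most $M$, we get $|f_i| \le 2M\alpha_i$, and hence $|f_i - f_1| \le 4M\max_j\alpha_j$.

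Next I invert by Cramer's rule, $w = \tfrac12 E^{-1} b$ with $E^{-1} = \operatorname{adj}(E)/\det E$. The adjugate's columns are the rotated edge vectors $(a_1 - a_3)^{\perp}$ and $-(a_1 - a_2)^{\perp}$, each of norm at most $M$ because the anchors are pairwise within $M$. Therefore
\[
|w| \;\le\; \frac{1}{2}\cdot\frac{1}{2A}\bigl(M|b_1| + M|b_2|\bigr) \;\le\; \frac{1}{4A}\cdot 2M\cdot 4M\max_j\alpha_j \;=\; \frac{2M^2}{A}\max_j\alpha_j .
\]
This is within the claimed $3M^2/A$, so there is slack.

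The main point needing care is the adjugate bound, that is, making sure I use $\|\operatorname{adj}E\, b\| \le \|(a_1-a_3)^\perp\|\,|b_1| + \|(a_1-a_2)^\perp\|\,|b_2|$ via the triangle inequality rather than an operator-norm estimate. An operator-norm estimate could introduce a $\sqrt2$ factor or a looser Frobenius constant. Even in that cruder form the bound stays below $3M^2/A$: the Frobenius bound gives $\sqrt2 M\cdot\sqrt2\cdot 4M\max_j\alpha_j/(4A) = 2M^2/A\cdot\max_j\alpha_j$. So the stated constant is safe either way. Non-collinearity guarantees $A > 0$, so the inversion is legitimate.
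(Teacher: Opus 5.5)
Your proof is correct and follows the paper's route: both difference the squared-distance identities to obtain the linear system $\langle u-v,\, a_1-a_i\rangle = O(M\max_j\alpha_j)$ for $i=2,3$, and then invert it using the determinant $2A$. The paper does the inversion by a triangular solve in coordinates ($a_1$ at the origin, $a_2$ on an axis) and then sums $|\xi|+|\zeta|$, which gives $3M^2/A$; your adjugate bound is coordinate-free and yields the sharper constant $2M^2/A$.
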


\begin{proof}
Set $\alpha := \max_i\alpha_i$. Since $|w-a_i|^2 = |w|^2 - 2\langle w, a_i\rangle + |a_i|^2$,
for $i \in \{2,3\}$
\[
\langle u-v,\ a_1 - a_i\rangle
= -\tfrac12\Bigl[(|u-a_1|^2 - |v-a_1|^2) - (|u-a_i|^2 - |v-a_i|^2)\Bigr],
\]
of modulus $\le M(\alpha_1+\alpha_i) \le 2M\alpha$, using $\bigl|\,|u-a|^2 - |v-a|^2\,\bigr|
= \bigl|\,|u-a|-|v-a|\,\bigr|\,(|u-a|+|v-a|) \le 2M\alpha_a$. Place $a_1$ at the origin with
$a_2 = (b,0)$, $b := |a_1-a_2|$, and $a_3 = (p,h)$, where $|h| = 2A/b$ is the height of $a_3$
above the line $a_1 a_2$ and $|p| \le |a_1-a_3| \le M$. For $u-v = (\xi,\zeta)$ the two inner
products above are $-b\xi$ and $-(p\xi + h\zeta)$, so
\[
|\xi| \le \frac{2M\alpha}{b}, \qquad
|\zeta| \le \frac{2M\alpha + |p|\,|\xi|}{|h|} \le \frac{2M\alpha\,(1 + M/b)}{|h|}.
\]
Since $A = \tfrac12 b|h| \le \tfrac12 bM$ gives $1/b \le M/(2A)$, and $b \le M$, both
$|\xi| \le M^2\alpha/A$ and $|\zeta| \le 2M^2\alpha/A$; hence
$|u-v| \le |\xi| + |\zeta| \le 3M^2\alpha/A$. (No reflection ambiguity: three non-collinear
anchors fix a planar point.)
\end{proof}

The fat case rests on applying this to the anchor rigid motion: a $\beta$-angled anchor triangle places
every point of a correspondence near its image.

\begin{lemma}[Fat alignment]
\label{lem:fat-align}
Let $X \subset \R^2$ have an anchor triangle $\triangle s_1 s_2 s_3$ with all angles $\ge \beta$ and
diameter $D := \diam(\{s_i\}) \ge \tfrac12\diam(X)$, and let $R$ be a correspondence between $X$ and a
finite $Y \subset \R^2$ with $\Dist(R) = \eta \le D/2$. Pick $(s_i, t_i) \in R$ and let $T$ be
the rigid motion with $T(s_1) = t_1$, $T(s_2)$ on the ray $t_1 t_2$ at distance $|s_1 - s_2|$,
and reflection placing $T(s_3)$ and $t_3$ on the same side of line $t_1 t_2$. Then
\[
\max_i |T(s_i) - t_i| = O(\eta/\sin\beta),
\qquad
\max_{(x,y) \in R} |T(x) - y| = O(\eta/\sin^2\beta).
\]
\end{lemma}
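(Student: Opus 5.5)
The plan has two stages: first control the three anchors under $T$, then place every other point by trilateration through \Cref{lem:trilat}.

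\emph{Anchor stage.} The triangle $\triangle t_1t_2t_3$ has side lengths within $\eta$ of those of $\triangle s_1s_2s_3$, because $(s_i,t_i)\in R$ and $\Dist(R)=\eta$. By construction $T(s_1)=t_1$ exactly. The point $T(s_2)$ lies on the ray $t_1t_2$ at distance $|s_1-s_2|$ from $t_1$, while $|t_1-t_2|$ differs from that by at most $\eta$, so $|T(s_2)-t_2|\le\eta$.

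For the third anchor, write $T(s_3)$ and $t_3$ in coordinates adapted to the line $t_1t_2$. They are chosen on the same side of that line, and their distances to $t_1$ and to $T(s_2)$ agree within $O(\eta)$: within $\eta$ from the side comparison, and within $2\eta$ after also absorbing the $\eta$ shift of $t_2$. The two circle intersections meeting on one side of a base of length $b\ge D\sin\beta$ are then within $O(\eta/\sin\beta)$ of each other. This follows from the affine-in-the-point computation of \Cref{lem:trilat} restricted to two anchors plus the side choice. The longitudinal coordinate is controlled by $2M\cdot O(\eta)/b$. The transverse coordinate is controlled by dividing the squared-height discrepancy $O(M\eta)$ by the height $\gtrsim D\sin\beta$.

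The hypothesis $\eta\le D/2$ keeps $t_3$ at height $\Omega(D\sin\beta)$, since the sides change by at most $\eta$ and the angles stay bounded below. This is the step I expect to be delicate: it needs a careful check that the height does not collapse. I would first lower-bound the area of $\triangle t_1t_2t_3$ through Heron, or through $\text{height}\ge D\sin\beta-O(\eta)$, and only if that fails restrict to $\eta\le cD\sin\beta$, handling the complementary regime trivially as in \Cref{lem:conditional}.

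\emph{Trilateration stage.} Fix $(x,y)\in R$ and compare $u=T(x)$ with $v=y$ against the anchors $a_i=t_i$. For each $i$ we have
\[
\bigl|\,|T(x)-t_i|-|y-t_i|\,\bigr|\le \bigl|\,|x-s_i|-|y-t_i|\,\bigr|+|T(s_i)-t_i|\le \eta+O(\eta/\sin\beta),
\]
so every $\alpha_i=O(\eta/\sin\beta)$. All relevant distances are at most $M=O(D)$, using $\diam X\le 2D$ and $\diam Y\le\diam X+\eta$. The area of $\triangle t_1t_2t_3$ is $\Omega(D^2\sin\beta)$, since two sides are $\Omega(D\sin\beta)$-conditioned: the smallest side is at least $D\sin\beta$ by the law of sines, and the height is as above. \Cref{lem:trilat} then gives
\[
|T(x)-y|\le \frac{3M^2}{A}\max_i\alpha_i=O\!\left(\frac{1}{\sin\beta}\right)\cdot O\!\left(\frac{\eta}{\sin\beta}\right)=O\!\left(\frac{\eta}{\sin^2\beta}\right),
\]
which is the second claim.
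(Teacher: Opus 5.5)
Your anchor stage is essentially the paper's, but the trilateration stage has a genuine gap. You take the anchors to be $a_i = t_i$, so \Cref{lem:trilat} needs $\mathrm{Area}(\triangle t_1t_2t_3) = \Omega(D^2\sin\beta)$, and the hypotheses do not give that. The perturbed triangle inherits no angle bound, and the law-of-sines estimate you cite holds for $\triangle s_1s_2s_3$, not for $\triangle t_1t_2t_3$. Your fallback threshold $\eta \le cD\sin\beta$ does not help either, because the height of $t_3$ collapses already at $\eta \asymp D\sin^2\beta$. Take $s_1=(0,0)$, $s_2=(D,0)$, $s_3=(D/2,h)$ and $t_1=(0,0)$, $t_2=(D,0)$, $t_3=(D/2,0)$, with the labeled correspondence. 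All hypotheses hold, with $\sin\beta\approx 2h/D$ and $\eta=\sqrt{D^2/4+h^2}-D/2\approx h^2/D\approx\tfrac14 D\sin^2\beta\ll D\sin\beta$. Yet $\triangle t_1t_2t_3$ is degenerate, so \Cref{lem:trilat} against the $t_i$ gives nothing. This is exactly the dimension drop of \Cref{thm:counter-r2}.

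The missing idea is that the only triangle whose conditioning you control is the rigid image $T(\triangle s_1s_2s_3)$. The paper therefore trilaterates against $a_i=T(s_i)$. Then $|T(x)-T(s_i)|=|x-s_i|$ exactly, and $\bigl|\,|y-T(s_i)|-|x-s_i|\,\bigr|\le\eta+\rho$ with $\rho:=\max_i|T(s_i)-t_i|$. The anchor area is exactly $\mathrm{Area}(\triangle s_1s_2s_3)\ge\tfrac14D^2\sin\beta$, so \Cref{lem:trilat} gives $O((\eta+\rho)/\sin\beta)=O(\eta/\sin^2\beta)$, once the case $\rho\ge D/4$ is dispatched trivially.

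The same observation removes the step you flagged as delicate in the anchor stage. Bound $|y_w-y_{t_3}|\le|y_w^2-y_{t_3}^2|/y_w$, where $y_w$ is the height of $w=T(s_3)$. That height is an altitude of the original triangle, hence $\ge\tfrac12D\sin\beta$, so the height of $t_3$ is never needed. Take the longest edge as base, as the paper does. With a base of length only $D\sin\beta$, your squared-height discrepancy is $O(M\eta/\sin\beta)$ rather than the claimed $O(M\eta)$; the final bound survives only because the altitude onto a short base is correspondingly larger.

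If you prefer to keep the anchors $t_i$, the correct split is at $\eta\le cD\sin^2\beta$. There $\rho=O(cD\sin\beta)$ keeps the $t$-area $\Omega(D^2\sin\beta)$, and the complement is bounded trivially by $O(D+\eta)=O(\eta/\sin^2\beta)$. This rescues the second claim, but only if the first claim is proved without using the height of $t_3$.
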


\begin{proof}
Label the anchor so that $s_1 s_2$ is its longest edge, i.e.\ $|s_1 - s_2| = D$. Its altitude
to this longest side is $\ge \tfrac12 D\sin\beta$: the foot splits $s_1 s_2$ into two parts
subtending base angles $\ge \beta$, so twice the altitude is $\ge D\tan\beta \ge D\sin\beta$;
hence $\mathrm{Area}(\triangle s_1 s_2 s_3) \ge \tfrac14 D^2\sin\beta$. Being an isometry, $T$
maps the anchors to a triangle congruent to $\triangle s_1 s_2 s_3$.

Now $T(s_1) = t_1$ and $|T(s_2) - t_2| = \bigl|\,|s_1-s_2| - |t_1-t_2|\,\bigr| \le \eta$. For
$t_3$, anchor against $t_1 = T(s_1)$ and $t_2$, whose separation $\ell := |t_1-t_2| \ge
D-\eta$ (within $\eta$ of $|s_1-s_2| = D$, the longest edge), and set $w := T(s_3)$. Since
$|w-t_1| = |s_3-s_1|$ and $|w-t_2|$ is within $\eta$ of $|s_3-s_2|$ (as $|T(s_2)-t_2|\le\eta$),
the bottleneck inequality gives
\[
\bigl|\,|w-t_1| - |t_3-t_1|\,\bigr| \le \eta, \qquad
\bigl|\,|w-t_2| - |t_3-t_2|\,\bigr| \le 2\eta .
\]
Take coordinates $t_1 = (0,0)$, $t_2 = (\ell,0)$; both $w$ and $t_3$ lie on the side
$y \ge 0$, where a point $p$ has $x_p = \tfrac{1}{2\ell}(|p-t_1|^2 - |p-t_2|^2 + \ell^2)$ and
$y_p = \sqrt{|p-t_1|^2 - x_p^2}$. Each of the four distances $|w-t_i|, |t_3-t_i|$ is an edge of
the perturbed triple, so $\le D+\eta =: M$; using
$\bigl|\,|p-t_i|^2 - |q-t_i|^2\,\bigr| \le 2M\,\bigl|\,|p-t_i| - |q-t_i|\,\bigr|$,
\[
|x_w - x_{t_3}| \;\le\; \frac{2M\eta + 2M(2\eta)}{2\ell} \;=\; \frac{3M\eta}{\ell},
\qquad
|y_w - y_{t_3}| \;\le\; \frac{|y_w^2 - y_{t_3}^2|}{y_w}
\;\le\; \frac{2M\eta + 2M\,|x_w - x_{t_3}|}{h_0},
\]
where $h_0 := y_w$ is the altitude of $\triangle s_1 s_2 s_3$ from $s_3$ to its longest side,
$\ge \tfrac12 D\sin\beta$ as above. With $\ell \ge D-\eta$ and $M = D+\eta$, both displacements
are $O(\eta/\sin\beta)$, so $|T(s_3)-t_3| = O(\eta/\sin\beta)$ and
$\rho := \max_i |T(s_i) - t_i| = O(\eta/\sin\beta)$.

Now fix any $(x,y) \in R$ and locate $T(x)$ and $y$ against the rigid anchors
$T(s_1), T(s_2), T(s_3)$. If $\rho \ge D/4$, then $D = O(\eta/\sin\beta)$ and the trivial
bound $|T(x)-y| \le |T(x)-T(s_1)| + |T(s_1)-t_1| + |t_1-y| \le 2D + \rho + (2D+\eta)$ is
already $O(\eta/\sin^2\beta)$; so assume $\rho < D/4$. As $T$ is an isometry,
$|T(x) - T(s_i)| = |x - s_i|$ exactly, while
$|y - T(s_i)|$ differs from $|x - s_i|$ by at most $\eta + |t_i - T(s_i)| \le \eta + \rho$
(from $\bigl|\,|y-t_i| - |x-s_i|\,\bigr| \le \eta$ and $|t_i - T(s_i)| \le \rho$). Applying
\Cref{lem:trilat} with $a_i = T(s_i)$, $u = y$, $v = T(x)$---each $\alpha_i \le \eta+\rho$, the
anchor area $\ge \tfrac14 D^2\sin\beta$ (the rigid image of $\triangle s_1 s_2 s_3$), the anchors
pairwise at most $D \le M$ apart, and all six point-to-anchor distances at most
$M := 2D + \eta + \rho \le 3D$---gives
\[
|T(x) - y| \;\le\; \frac{3M^2\,(\eta+\rho)}{\tfrac14 D^2\sin\beta}
\;=\; O\left(\frac{\eta+\rho}{\sin\beta}\right)
\;=\; O\left(\frac{\eta}{\sin^2\beta}\right),
\]
using $\rho = O(\eta/\sin\beta)$.
\end{proof}

We discharge the two structural lemmas of \Cref{sec:assemble}, whose statements appear
there: the consistent multi-scale matching (\Cref{lem:equal-card}) and the per-node
reflection structure (\Cref{lem:cluster-rigidity}).

\begin{proof}[Proof of \Cref{lem:equal-card}]
\emph{(i)} Write $A := \sqrt{s_N^2+\delta_X^2+\delta_Y^2}$ and $s' := \sqrt{s_N^2+\delta_X^2}+\mu$, so the
reach margin is $A + 2\mu$. \emph{$\sigma^*(N)$ is $\pi_Y$-connected at scale $s'$.} Consecutive
(spine-order) points of $N$ have $\pi$-gap $\le s_N$ and perpendicular difference $\le
\delta_X$, hence $d_X \le \sqrt{s_N^2+\delta_X^2}$; with $\Dist(\sigma^*) = \mu$ their images are within
$s'$ in $d_Y$, hence in $\pi_Y$ (projection onto $Y$'s diameter axis is $1$-Lipschitz). Walking $N$ in
spine order, consecutive $\pi_Y$-coordinates of $\sigma^*(N)$ move by $\le s'$, so every sorted
$\pi_Y$-gap of $\sigma^*(N)$ is $\le s'$.

\emph{$\sigma^*(N)$ is isolated at scale $s'$.} Let $y \notin \sigma^*(N)$ and $x := (\sigma^*)^{-1}(y)
\notin N$. As $N$ is resolved, $x$ lies beyond a separating gap $> A + 2\mu$, so
$d_X(x,n) \ge |\pi(x)-\pi(n)| > A + 2\mu$ for every $n \in N$, whence $d_Y(y,\sigma^*(n)) > A + \mu$.
The perpendicular part of this $Y$-distance is $\le \delta_Y$, so the $\pi_Y$-separation of $y$ from
$\sigma^*(N)$ exceeds $\sqrt{(A+\mu)^2 - \delta_Y^2} = \sqrt{s_N^2+\delta_X^2 + 2\mu A + \mu^2} \ge s'$
(as $A \ge \sqrt{s_N^2+\delta_X^2}$). So no point outside $\sigma^*(N)$ lies within $s'$ of it in
$\pi_Y$.

Thus $\sigma^*(N)$ is a maximal $s'$-cluster of $Y$---a node of $Y$'s dendrogram---and, $\sigma^*$ being
injective, $\sigma^*(N)$ has cardinality $|N|$.

$\sigma^*$ also preserves spine \emph{betweenness}, hence is monotone up to one global reversal. It
suffices to order the resolved clusters at each fixed scale (a cross-scale pair reduces to a co-scale
one by coarsening the finer node to its $s_N$-cluster, whose image contains the finer image). Fix
$s_N$ and suppose resolved $s_N$-clusters $C_1, C_2, C_3$ in this $\pi$-order have images in $\pi_Y$-order
$\sigma^*(C_1), \sigma^*(C_3), \sigma^*(C_2)$. Writing $g$ for the spine gap (additive, with
$g \le d \le g+\delta$) and using $\Dist(\sigma^*)=\mu$, representatives $c_i$, $y_i:=\sigma^*(c_i)$ give
\[
g(c_1,c_2)+\delta_X+\mu \;\ge\; g(y_1,y_2) \;=\; g(y_1,y_3)+g(y_3,y_2) \;>\;
g(c_1,c_2)+g(c_2,c_3)-\mu-\delta_Y+s'
\]
(middle equality since $\sigma^*(C_3)$ lies between; $g(y_3,y_2)>s'$ by isolation). So
$g(c_2,c_3) < \delta_X+\delta_Y+2\mu-s' \le A+2\mu$ (as $s'\ge\delta_X$, $A\ge\delta_Y$), contradicting
the ${>}\,A+2\mu$ gap between $C_2$ and $C_3$. So the resolved clusters keep their $\pi$-order at each
scale up to a sign, shared across scales since a finer cut refines a coarser---one global reversal.

\emph{(ii)} If $N \subseteq M$ then $\sigma^*(N) \subseteq \sigma^*(M)$ ($\sigma^*$ is a single
function), and by (i) both are nodes of $Y$'s dendrogram, so the smaller nests in the larger.

\emph{(iii)} Immediate from (i): every node with all separating gaps beyond the margin is resolved,
leaving open only the membership of the points flanking a gap inside the band $(s_N,\,A+2\mu]$, which
the gap argument cannot decide.

\emph{(iv)} Take $x$ $\pi$-left of a resolved $N$, beyond its left flank ${>}\,A+2\mu$, with a resolved
reference $W$ left of $x$ (or the global orientation if $x$ is extreme). If $\sigma^*(x)$ lay
$\pi_Y$-right of $\sigma^*(N)$, the triple $W<x<N$ would map in order $\sigma^*(W),\sigma^*(N),\sigma^*(x)$---
(i)'s inequality with $(C_1,C_2,C_3)=(W,x,N)$ and $x$ a spread-$0$ singleton---forcing $g(x,N)<A+2\mu$,
against the flank. So $\sigma^*(x)$ stays $\pi_Y$-left of $\sigma^*(N)$; the right side and
$\sigma^*{}^{-1}$ are symmetric. Hence $\sigma^*(N)$ occupies $N$'s rank-interval at $N$'s own scale, and
the same holds at every scale, so the spine-rank matching carries each resolved node onto $\sigma^*(N)$,
reproducing the node correspondence $\Phi$---points in no resolved node may permute among themselves but
cannot leapfrog a resolved flank.
\end{proof}

\begin{proof}[Proof of \Cref{lem:cluster-rigidity}]
Read along its own diameter axis (\Cref{def:fat}), $X_\alpha$ is fat or near-collinear. The \emph{Fat}
case trilaterates; the \emph{Near-collinear} case anchors its $\pi$-extremes; the \emph{Degenerate axis}
case is a \emph{vertical node}---all points at one global spine coordinate ($s_N = 0$), its own axis
perpendicular to the global spine.

\emph{Fat.} \Cref{lem:fat-align} trilaterates $X_\alpha$ against its anchor triangle: as fatness bounds
the anchor angle below by $\beta_0$, both coordinates land within $O(\mu/\sin^2\!\beta_0) = O(\mu)$, every
$\varepsilon_i$ fixed---uniformly---by the side of the third anchor. So $|\xi^\alpha_i|, |\eta^\alpha_i| =
O(\mu)$, the signs are pinned alike, and $X_\alpha$ is a single base node.

\emph{Near-collinear.} Anchor the two $\pi$-extremes $x_\ell, x_r$, whose chord is the spine, and
write $a_i := \pi^\perp_X(x_i) - \overline{\pi^\perp}_\alpha$. Choose the rigid motion
$(R_\alpha, w_\alpha)$ that sends $\sigma^*(x_\ell) \mapsto x_\ell$ and $\sigma^*(x_r)$ onto the spine
ray through $x_r$; since $\sigma^*$ preserves $|x_\ell - x_r|$ within $\mu$, the image
$w_i := R_\alpha\sigma^*(x_i)+w_\alpha = (p_i, b_i)$ then has $w_\ell = x_\ell$ exactly and $w_r$ on
the spine within $\mu$ of $x_r$. Subtracting
the two squared-distance relations against $x_\ell, x_r$ cancels the perpendicular term and fixes the
\emph{longitudinal} coordinate, $|\xi^\alpha_i| = |p_i - \pi_X(x_i)| \le 2\mu$. The two on-spine
anchors do \emph{not} constrain the perpendicular coordinate of a spine-near point---it is free up to
the dimension-drop bending $\Theta(\sqrt{\diam(X_\alpha)\,\mu})$, since $\bigl|\,|w_i-x_\ell|^2 -
(\pi_i^2 + a_i^2)\bigr| = O(\mu\,\diam)$ alone permits $|b_i| \sim \sqrt{\diam\cdot\mu}$ at
$a_i \approx 0$. The perpendicular is pinned instead against the two off-spine \emph{perpendicular} extremes
$x_+, x_-$ at $a_\pm = \pm\delta_\alpha/2$. For a given $x_i$ anchor against whichever extreme
$x_e \in \{x_+, x_-\}$ is \emph{farther} from it in perpendicular, so that $|a_i - a_e| \ge
\delta_\alpha/2$ (one of the two always is, the extremes lying $\delta_\alpha$ apart; $x_e$'s own
perpendicular is fixed by $x_\ell, x_r$ within $O(\mu\,\diam/\delta_\alpha)$, up to its own sign). Preserving
$d(x_i, x_e)$ within $\mu$ then gives $\bigl|(b_i - b_e)^2 - (a_i - a_e)^2\bigr| \le O(\mu\,\diam)$,
so $b_i - b_e = \pm\sqrt{(a_i-a_e)^2 + O(\mu\,\diam)} = \pm(a_i - a_e) +
O(\mu\,\diam/|a_i - a_e|) = \pm(a_i - a_e) + O(\mu\,\diam/\delta_\alpha)$---two candidates
$2|a_i - a_e| \ge \delta_\alpha$ apart. Anchoring against the \emph{farther} extreme is what holds the
residual to $O(\mu\,\diam/\delta_\alpha)$ uniformly, even for a point whose own perpendicular nears one
extreme (where that extreme alone would leave a $\Theta(\sqrt{\mu\,\diam})$ ambiguity); the separation
$\Theta(\delta_\alpha)$ thus enters linearly, not as the $\diam^2/\mathrm{area}$ of a blind
trilateration. \emph{Which} of the two candidates holds is the point's reflection sign
$\varepsilon_i := \operatorname{sign}(a_i b_i)$, and the bottleneck forces it precisely above
$\theta_\alpha := C\mu\,\diam(X_\alpha)/\delta_\alpha$: flipping $\varepsilon_i$ against $x_e$ costs
$4|a_i a_e| \gtrsim \delta_\alpha |a_i|$, which clears the $O(\mu\diam)$ noise only for $|a_i| >
\theta_\alpha$. Below $\theta_\alpha$ both candidates lie within the residual, so $\varepsilon_i$ is
free; when $\delta_\alpha^2 \lesssim \mu\diam$ the threshold exceeds \emph{every} $|a_i|$ and the whole
node's sign is immaterial. In all cases $b_i = \varepsilon_i a_i + O(\mu\,\diam/\delta_\alpha)$, i.e.\
$|\eta^\alpha_i| \le 2\mu + O(\mu\,\diam X_\alpha/\delta_\alpha)$ (clean, $\le 2\mu$, exactly when
$\delta_\alpha = \Omega(\diam X_\alpha)$). The $\varepsilon_i$ above $\theta_\alpha$ may agree across
$X_\alpha$---one base node---or disagree, splitting it into base nodes by sign, $\sigma^*$ reflecting them
independently (\Cref{rem:per-node-signs}). Such a split falls on a spine gap: two coherent parts of
opposite determined sign are perpendicular reflections of one another, so a cross-pair between them at
spine separation $L$ and perpendicular offsets $a', a''$ shifts by $\Theta(a'a''/L)$ under the relative
flip (the swing computed in \Cref{lem:sign-cut}); optimality of $\sigma^*$ caps this at $\mu$, so at the
extreme offsets ($|a'| \sim \delta'$, $|a''| \sim \delta''$) it forces $L \gtrsim \delta'\delta''/\mu$---a
separating $\pi$-gap. Each base node is therefore a single-linkage sub-node of $X_\alpha$, so the per-node
signing of \Cref{def:laminar}---a sign on \emph{every} dendrogram node, composing top-down---signs every
base node.

\emph{Degenerate axis.} A near-collinear node spread along the global \emph{perpendicular}---a
vertical node, with no $\pi$-extremes to anchor---is well-conditioned ($\delta_\alpha =
\diam X_\alpha$) and near-$1$D, so the off-spine construction is vacuous but the conclusion is
immediate: $\sigma^*$ preserves the node's interpoint distances within $\mu$, hence maps it as a
$1$D near-isometry, rigid up to a single reflection of its axis, with $|\xi^\alpha_i|,
|\eta^\alpha_i| \le 2\mu$. That reflection is $\varepsilon^*_\alpha$---the formula's perpendicular
term, here running along the node's own extent; being an isometry of the node it is invisible
within it and is recovered only from cross-pairs, a transverse swing in a near-collinear input
(\Cref{lem:sign-cut}). The single axis reflection makes the $\varepsilon_i$ uniform, so a vertical
node is one base node. A node at an intermediate orientation is handled by the same
argument read along its own diameter axis.

\emph{Placement.} Each base node lands within $O(\mu)$ of $\sigma^*$, whatever the gap scale.
\emph{Near-collinear:} since $|\xi^\alpha_i| = O(\mu)$, $\sigma^*$ preserves its spine order outside
$O(\mu)$-wide ties, so the rank-matching onto $\sigma^*(X_\alpha)$ agrees with $\sigma^*$ but for the
pairing inside a tie, which the sign $\varepsilon^*_\alpha$ fixes; the perpendicular bending
$O(\mu\,\diam/\delta_\alpha)$, common to $\sigma^*$ and the sort, cancels in within-node distances.
\emph{Fat:} the \emph{Fat} case's trilateration lands both coordinates within $O(\mu)$.
\end{proof}

We close the lone residual left by \Cref{thm:laminar}: a fat cluster's boundary point whose
separating $\pi$-gap lies in the reach margin (\Cref{lem:equal-card}(iii)). It is absorbed by a
trilateration from the working cluster's own fat anchor---which spans the point as one of its
members---so the conditioning is against the working cluster's diameter (the pointwise stability of
\Cref{prop:bij-fat}).

\begin{lemma}[Boundary absorption]
\label{lem:fat-boundary}
Assume the realization of \Cref{def:laminar} achieves \Cref{thm:laminar}'s cross-pair bound ($O(\dGHbij)$
on every cross-pair of resolved nodes). Then a boundary point $b$ ambiguous by
\Cref{lem:equal-card}(iii)---separating $\pi$-gap inside the reach margin---is placed within
$O(\dGHbij)$ of $\sigma^*(b)$, whether included in a cluster (fat or near-collinear) or split off, and
\emph{pointwise}, so simultaneous ambiguities do not compound.
\end{lemma}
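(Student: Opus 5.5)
The plan is to place $b$ by trilateration against anchors that are already placed correctly, and never to use $b$'s cluster membership. Let $\mu = 2\dGHbij$. By hypothesis every point in a resolved node sits within $O(\mu)$ of its $\sigma^*$-image; more precisely, this holds after the common rigid alignment supplied by \Cref{lem:cluster-rigidity,lem:orient-cut,lem:sign-cut}, and by \Cref{lem:pointwise} that is the form we need. The estimator maps $b$ to a $Y$-point $\widehat\sigma(b)$. The goal is to show $\|\widehat\sigma(b) - \sigma^*(b)\| = O(\mu)$ using only distances from $b$ to these placed points.

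\emph{Fat case.} Suppose the working cluster $N$ containing $b$'s neighbourhood is fat. Take its anchor triangle from \Cref{lem:fat-triangle}, with angles at least $\beta_0$ and diameter comparable to $\diam N$. The three anchors lie in the resolved core, so each is within $\rho = O(\mu)$ of its $\sigma^*$-image. Also $\sigma^*(b)$ preserves the distances from $b$ to the anchors within $\mu$. The realizer trilaterates $b$ against the same anchors and takes the nearest $Y$-point, so its choice preserves those distances within $O(\mu)$ as well. Apply \Cref{lem:trilat} with $u = \widehat\sigma(b)$, $v = \sigma^*(b)$, and anchors the placed images. The anchor area is at least $\tfrac14 D^2\sin\beta_0$. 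Since $b$ sits at a gap inside the reach margin of $N$, it is at distance $O(D)$ from the anchors, which keeps $M = O(D)$. The lemma then gives $O(\mu/\sin\beta_0) = O(\mu)$. This is exactly the pointwise bound of \Cref{lem:fat-align}, with the conditioning measured against the working cluster's own diameter.

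\emph{Near-collinear case.} Suppose instead the cluster around $b$ is thin. Here I would anchor $b$ against the two $\pi$-extremes of the adjacent resolved node and against its perpendicular extremes, exactly as in the \emph{Near-collinear} part of the proof of \Cref{lem:cluster-rigidity}.
\begin{itemize}
\item The longitudinal coordinate of $b$ is fixed within $O(\mu)$ by subtracting squared distances.
\item The perpendicular coordinate is fixed against the farther extreme, up to a sign and an $O(\mu\,\diam/\delta)$ bending.
\item That sign is determined by the node's already-recovered net sign whenever it is visible, meaning above $\theta_\alpha$, and otherwise it is immaterial.
\item The bending is common to $\sigma^*$ and to the sorted placement, so it cancels in distances. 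For the same reason, the displacement relevant to \Cref{lem:pointwise} is $O(\mu)$.
\end{itemize}
If $b$ is split off as a singleton, it is its own node, and only its longitudinal position matters, up to the $O(\mu)$ rank-ties of \Cref{lem:equal-card}(iv). Either way, including or excluding $b$ changes only which anchors are used, and never the $O(\mu)$ conclusion.

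\emph{Pointwise, hence no compounding.} The key structural point is that every anchor comes from a resolved core, never from another ambiguous boundary point. Each boundary point is therefore placed by its own distance relations to fixed, correctly placed anchors. Because \Cref{lem:pointwise} bounds distortion by the maximum displacement, many simultaneous ambiguities each contribute $O(\mu)$ and the errors do not add up.

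The main obstacle I expect is the near-collinear case when $b$ lies close to the spine of a thin node, where $\delta_\alpha \ll \diam$. There the perpendicular residual $O(\mu\,\diam/\delta_\alpha)$ is not itself $O(\mu)$, and the argument rests on showing that this residual is the same coherent bending that $\sigma^*$ exhibits. I would try to prove this by comparing $b$ with its spine-nearest placed neighbour at $\pi$-distance at most $A + 2\mu$. The difference in their bendings should then be $O(\mu)$, by the same farther-extreme estimate applied to the pair.
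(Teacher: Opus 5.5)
The proposal has two genuine gaps.

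\textbf{Conditioning.} Your fat case bounds $M=O(D)$ by appeal to the reach margin. But the margin $\sqrt{s_N^2+\delta_X^2+\delta_Y^2}+2\mu$ is governed by the \emph{global} spreads $\delta_X,\delta_Y$, not by $\diam N$. A small fat core $F$ with an ambiguous point at distance $\sim\delta_X\gg\diam F$ is therefore allowed, and then the factor $3M^2/A$ of \Cref{lem:trilat} is unbounded. The same defect affects your near-collinear case: subtracting squared distances to the core's two $\pi$-extremes fixes $\pi(b)$ only to $O(\mu\,\dist(b,F)/\diam F)$.

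What does force $\dist(b,F)=O(\diam F)$ is fatness of $C=F\cup\{b\}$ itself. But then $F$ may be thin (e.g.\ collinear), and every non-degenerate triangle of $C$ has $b$ as a vertex. So there are no already-placed anchors to trilaterate $b$ against.

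The paper never trilaterates $b$ against a core:
\begin{itemize}
\item If $C$ is fat, it applies \Cref{lem:fat-align} to $C$ \emph{whole} with $R=\sigma^*|_C$. The anchor triangle has diameter $\ge\tfrac12\diam C$ and may contain $b$, so the conditioning is against $\diam C$. The lemma's second bound then places every point of $C$, $b$ included; the regime $\mu>D/2$ is trivial.
\item If $C$ is thin and the joining gap satisfies $g\le 4\max(\mu,\delta_C)$, then $C$ is one tight cluster, placed by the spine sort and the sign cut.
\item If $C$ is thin and $g$ is larger, $b$ is split off, precisely because whole-$C$ trilateration would have minimum angle $\Theta(\diam F/\dist(b,F))$. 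Its spine rank is then pinned by first-order order preservation against far points (reversing against $n$ costs $\Theta(|\pi(b)-\pi(n)|)>\mu$), which needs no conditioning.
\end{itemize}

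\textbf{The perpendicular coordinate.} You leave the $O(\mu\,\diam/\delta_\alpha)$ residual open as the ``main obstacle''. For a split-off $b$ you also assert that only its longitudinal position matters. For a pointwise bound that is false: inside an $O(\mu)$-wide spine tie, the rank match may send $b$ to some $y'\ne\sigma^*(b)$ at perpendicular distance $\tau\gg\mu$.

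The missing idea is to treat this as a sign rather than a metric residual. With $x':=(\sigma^*)^{-1}(y')$, one has $d_X(b,x')\le\tau+\mu$ at spine gap $O(\mu)$. So $\{b,x'\}$ is a near-vertical two-point node on which the rank match differs from $\sigma^*$ by an internal reflection. That is one more per-node sign: \Cref{lem:sign-cut} sets it to $\sigma^*$'s when legible, and it costs $O(K\mu)$ when not.

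The tight thin case is argued in the same combinatorial way. The sort picks $\sigma^*$'s own image outside $O(\mu)$-wide ties, and the pairing inside a tie is a sign, so no reconstructed perpendicular coordinate has to be bounded.

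Your realizer (trilaterate, then take the nearest $Y$-point) is also not the estimator of \Cref{def:laminar}, which sorts thin runs and bottleneck-assigns fat blocks; nearest-point assignment need not even be injective.

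Your closing independence argument does match the paper's: each boundary point is placed only against resolved nodes, and \Cref{lem:pointwise} then covers them all at once.
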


\begin{proof}
The clustering either includes $b$ in a working cluster $C = F \cup \{b\}$ or splits it off.

\emph{Included.} If $C$ is fat, apply \Cref{lem:fat-align} to $C$ with $R = \{(x, \sigma^*(x)) : x \in C\}$
and $\eta = \mu$. For $\mu \le D/2$ its second bound places every point of $C$ within
$O(\dGHbij/\sin^2\!\beta_0)$ of $\sigma^*$, $b$ included---the anchor $\triangle s_1 s_2 s_3$
(diameter $D \ge \tfrac12\diam(C)$) spans $b$ as one of $C$'s members, so the conditioning is against
$\diam(C)$. For $\mu > D/2$ the matched cluster has diameter $\le 2D + \mu = O(\mu)$, so any assignment
within it---in particular $\widehat d_{\mathrm{mot}}$'s---places every point within $O(\mu)$. If $C$ is
near-collinear, split on $b$'s joining gap $g$---its separating $\pi$-gap to $F$, inside the reach
margin, which is $C$'s scale $s' = g$ (the outermost gap, $b$ being beyond $F$). \emph{$g \le
4\max(\mu,\delta_C)$}: then $C$ is a single $s'$-cluster, $s' = g \le 4\max(\mu,\delta_C)$, and the
\emph{spine sort} places every point of $C$, $b$ included, within $O(\mu)$ of $\sigma^*$. Indeed
\Cref{lem:cluster-rigidity} writes $\sigma^*$ on $C$ as a rigid motion plus one reflection sign per
sub-cluster; the sort realizes the rigid motion---$\sigma^*$ preserves $C$'s spine order outside
$O(\mu)$-wide $\pi$-ties (a longitudinal slide against a separated point changes the distance by the
slide itself, \Cref{lem:orient-cut}), each exchange an $O(\mu)$ move---and the global cut
(\Cref{lem:sign-cut}) sets the signs, legible at this scale: when $\delta_C \le \mu$ every perpendicular
coordinate is within the noise, and when $\delta_C > \mu$, $s' \le 4\delta_C$ keeps spine neighbors
within $\sqrt{s'^2+\delta_C^2} \le \sqrt{17}\,\delta_C$, so a sub-cluster flip of extent $\delta_B$ costs
$\Omega(\delta_B^2/\delta_C)$ and is legible once it clears the $O(K\mu)$ noise (an illegible sign costs
$O(K\mu)$ either way, its flip an internal isometry whose cross pairs move below the noise). Carrying the
per-sub-cluster signs is necessary (\Cref{rem:per-node-signs}); one global reflection suffices when $C$
has no reflected sub-cluster. \emph{$g >
4\max(\mu,\delta_C)$}: then $C$ is not a single tight cluster, so $\widehat\rho_{\mathrm{lam}}$ does not
realize it as one---the fat-block rule (\Cref{def:laminar}) realizes $F$ on its own within $O(\mu)$
(sorted, or, if $F$ is fat, trilaterated conditioned on $\diam(F)$, well-posed---\Cref{prop:bij-fat}) and
places $b$ as in the \emph{split-off} case below. Trilaterating $C$ \emph{whole}, were it attempted,
would be ill-conditioned (the triple spanning a distant $b$ has minimum angle
$\Theta(\diam(F)/\dist(b,F)) \to 0$), which is exactly why the rule splits it off.

\emph{Split off.} Then $F$ is bounded by $\widehat\rho_{\mathrm{lam}}$ and the singleton $b$ is matched
within $O(\dGHbij)$ of $\sigma^*(b)$. Its \emph{spine} coordinate is pinned: against any $n$ beyond the
reach margin, $d_X(b,n)$ is spine-dominated, and reversing $b$'s order against $n$ would change it by
$\Theta(|\pi(b)-\pi(n)|) > \mu$ (\Cref{lem:orient-cut}) unless $n$ lies within $O(\mu)$ of $b$ in spine.
So $\sigma^*$ and the rank-match agree on $b$'s order against every point more than $O(\mu)$ away,
placing $b$ at the same spine rank, within $O(\mu)$ of $\sigma^*(b)$ in $\pi_Y$. The perpendicular needs
one more step: the rank match may send $b$ to $y' \ne \sigma^*(b)$ with spine gap $O(\mu)$ but
perpendicular gap $\tau := \|y' - \sigma^*(b)\|$ above the noise. Then $x' := (\sigma^*)^{-1}(y')$ has
$d_X(b,x') \le \tau + \mu$ and spine gap $O(\mu)$---a near-vertical two-point node $\{b, x'\}$ on which
the rank match differs from $\sigma^*$ by an internal reflection. This is one more per-node sign,
handled by the cut (\Cref{lem:sign-cut}) like any other: legible (e.g.\ a swing
$\Theta(\tau\delta_\gamma/L)$ against a node it clears) and set to $\sigma^*$'s, or illegible and
$O(K\mu)$ either way. So $\|\sigma_{\mathrm{lam}}(b)-\sigma^*(b)\| = O(\dGHbij)$, and $b$'s cross-distances to
every other node obey the cross-pair bound of \Cref{thm:laminar} (the other endpoints carrying the
cut's recovered signs), with no foreign-anchor trilateration of $b$.

In both cases $b$ lands within $O(\dGHbij)$ of $\sigma^*(b)$. Each $b$ is placed by its relations to
\emph{resolved} nodes alone---a far point pinning its spine, a node it clears pinning its sign---never to
another ambiguous point, so distinct boundary points are independent. Hence the placement is
\emph{pointwise}, and \Cref{thm:laminar}'s maximum over pairs (\Cref{lem:pointwise}) covers them all at
once, however many nodes are ambiguous or scales the gap spectrum spans.
\end{proof}

\end{document}